\crefname{algo}{Algorithm}{algorithms}
\g@addto@macro\bfseries{\boldmath}
\newcommand{\R}{\mathbb{R}}
\newcommand{\rvd}{\mbox{RVD}}            
\newcommand{\pvd}{\mbox{PRVD}}           
\newcommand{\dd}{\mbox{DD}}              
\newcommand{\cR}{\mathcal{S}\xspace}     
\newcommand{\Pol}{\mathcal{P}\xspace}    
\newcommand{\rreg}{vreg}                 
\newcommand{\dreg}{dreg}                 
\newcommand{\rdis}{d_{\angle}}           
\newcommand{\diff}{\mbox{diff}_{\angle}} 
\newcommand{\rbis}{b_{\angle}}           
\newcommand{\dir}{\widehat{d}}           
\newcommand{\apex}{p}                    
\newcommand{\C}{\mathcal{C}}
\newcommand{\Dom}{\mathcal{D}}
\newcommand{\Ham}{\mathcal{H}}     
\newcommand{\dr}{dr}               
\newcommand{\etal}{{~et~al.}}
\newcommand{\cb}{C_{b}}            
\newcommand{\minangle}{\alpha^*}   
\newcommand{\minpoint}{x^*}        
\newcommand{\ov}[1]{\overline{#1}} 
\title{The Voronoi Diagram of Rotating Rays with applications to Floodlight Illumination}
\titlerunning{Rotating Rays Voronoi Diagram and Floodlight Illumination}
\author{Carlos Alegr\'{i}a}
{Dipartimento di Ingegneria, Universit\`{a} Roma Tre, Rome, Italy}
{carlos.alegria@uniroma3.it}{https://orcid.org/0000-0002-1825-0097}
{Supported by projects MIUR Proj.\ ``AHeAD'' n\textsuperscript{o} 20174LF3T8 and European Union's Horizon 2020 No. 734922.}
\author{Ioannis Mantas}
{Faculty of Informatics, Università della Svizzera italiana, Lugano, Switzerland}
{ioanni.mantas@gmail.com}{https://orcid.org/0000-0002-1825-0097}
{Initial work supported in part by SNF project 200021E$\_$154387.}
\author{Evanthia Papadopoulou}
{Faculty of Informatics, Università della Svizzera italiana, Lugano, Switzerland}
{evanthia.papadopoulou@usi.ch}{https://orcid.org/0000-0002-1825-0097}
{Supported in part by SNF projects 200021E$\_$154387 and 200021E$\_$201356.}
\author{Marko Savi\'{c}}
{Department of Mathematics and Informatics, Faculty of Sciences, University of Novi Sad, Serbia}
{marko.savic@dmi.uns.ac.rs}{https://orcid.org/0000-0002-1825-0097}
{Supported by the Ministry of Education, Science and Technological Development of the Republic of Serbia (Grant No.\ 451-03-68/2022-14/200125), and the Provincial Secretariat for Higher Education and Scientific Research, Province of Vojvodina.}
\author{Carlos Seara}
{Departament de Matem\`{a}tiques, Universitat Polit\`{e}cnica de Catalunya, Barcelona, Spain}
{carlos.seara@upc.edu}{https://orcid.org/0000-0002-1825-0097}
{Supported by projects
PID2019-104129GB-I00/ MCIN/ AEI/ 10.13039/501100011033 and European Union's Horizon 2020 No. 734922.}
\author{Martin Suderland}
{Faculty of Informatics, Università della Svizzera italiana, Lugano, Switzerland}
{martin.suderland@usi.ch}{https://orcid.org/0000-0002-1825-0097}
{Supported in part by SNF project 200021E$\_$201356.}
\authorrunning{C. Alegr\'{i}a, I. Mantas, E. Papadopoulou, M. Savi\'{c}, C. Seara, and M. Suderland}
\keywords{Rotating rays Voronoi diagram; oriented angular distance; Brocard angle; Brocard illumination; floodlight illumination; coverage problems.}  
\begin{document}

\begin{abstract}
  We study the \emph{Voronoi Diagram of Rotating Rays}, a Voronoi structure where the input sites are rays and the distance function between a point and a site/ray, is the counterclockwise angular distance.
  This novel Voronoi diagram is motivated by illumination or coverage problems, where a domain must be covered by floodlights/wedges of uniform angle, and the goal is to find the minimum angle necessary to cover the domain.
  We study the diagram in the plane, and we present structural properties, combinatorial complexity bounds, and a construction algorithm.
  If the rays are induced by a convex polygon, we show how to construct the Voronoi diagram within this polygon in linear time.
  Using this information, we can find in optimal linear time the \emph{Brocard angle}, the minimum angle required to illuminate a convex polygon with floodlights of uniform angle.
\end{abstract}

\section{Introduction}

In this paper, we propose and study the \emph{rotating rays Voronoi diagram}, which is defined by a set of $n$ rays in the plane under the following oriented angular distance function.
Given a point $x$ and a ray $r$ in the plane, the \emph{angular distance} from $r$ to $x$ is the smallest angle $\alpha$ such that, after a counterclockwise rotation of $r$ around its apex by $\alpha$, the ray $r$ \emph{illuminates} (or touches) $x$; see \cref{fig:distance}.
An example diagram is illustrated in \cref{fig:rvdIntro}.
In this paper we define the diagram, present combinatorial properties and bounds, design construction algorithms, and use the diagram to solve related floodlight illumination problems.

\begin{figure}[t]
  \centering
  \begin{minipage}[t]{0.43\textwidth}
    \centering
    \includegraphics[width=\textwidth,page=2]{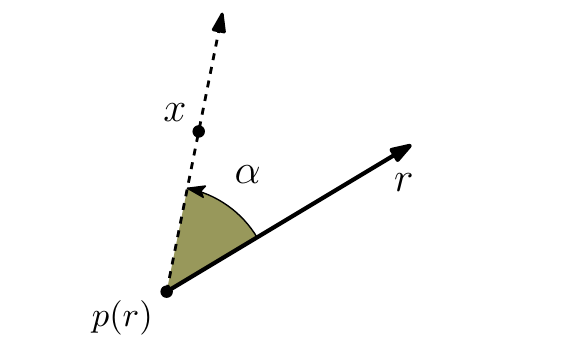}
    \caption
    {
      An $\alpha$-floodlight aligned with a ray $r$ with apex $p(r)$.
      The angle $\alpha$ is the angular distance from $r$ to the point $x$.
    }
    \label{fig:distance}
  \end{minipage}
  \hfill
  \begin{minipage}[t]{0.55\textwidth}
    \centering
    \includegraphics[width=\textwidth,page=2]{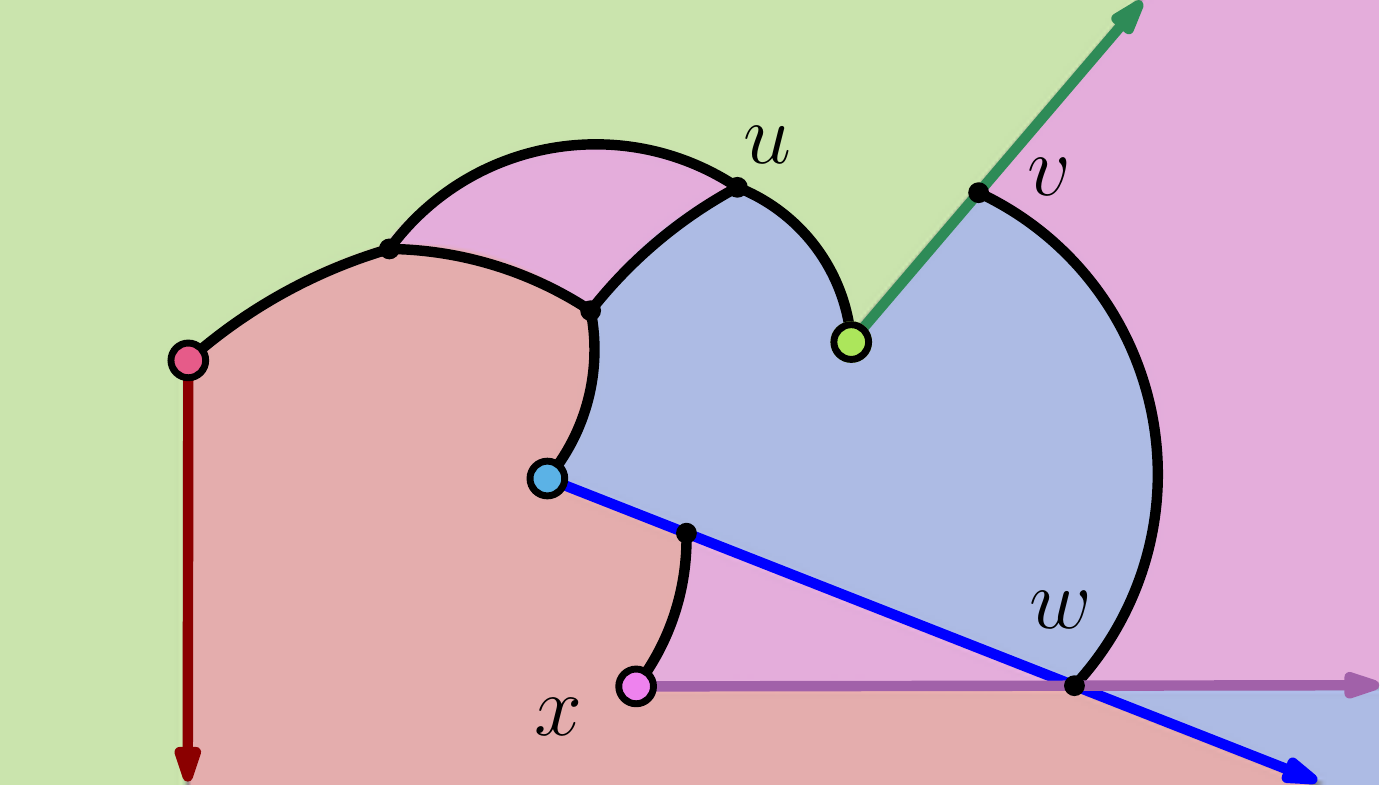}
    \caption
    {
      The rotating rays Voronoi diagram of $4$ rays in $\R^2$.
      The points in each region are first illuminated by the ray of the respective color.
      The angle $\alpha$ is the distance of the point $x \in \R^2$ to its nearest site (ray $r$).
    }
    \label{fig:rvdIntro}
  \end{minipage}
  \label{fig:rvd}
\end{figure}

\subparagraph{Motivation.}

Floodlight illumination problems are well known \emph{art gallery} type of problems, where a given domain has to be covered by \emph{floodlights}, which are light sources illuminating the interior of a cone from its apex.
A floodlight of aperture $\alpha$ is called an \emph{$\alpha$-floodlight}.
An $\alpha$-floodlight is said to be \emph{aligned with} a ray $r$, if the right side of the floodlight coincides with $r$; see for example \cref{fig:distance}.

The angular distance that we consider in this work is motivated by the \emph{Brocard illumination problem}:
Given a domain $\Dom$, a set $\cR$ of $n$ rays, and a set of $n$ $\alpha$-floodlights each aligned with a ray of $\cR$,
what is the minimum angle $\minangle$ required to illuminate $\Dom$ with the set of $\minangle$-floodlights?
The angle $\minangle$ is called the \emph{Brocard angle} of $\Dom$.

In this paper we show that the Brocard illumination problem can be reduced to the construction of the rotating rays Voronoi diagram of $\cR$ restricted to $\Dom$.
The reduction is based on the fact that the Brocard angle is realized at a vertex of the rotating rays Voronoi diagram. 
Typical domains to illuminate by floodlights include 
the plane, bounded polygonal regions, and unbounded regions such as wedges and curves.
Since the construction of the respective Voronoi diagram restricted to each domain yields the Brocard angle, there is an interest in studying the Voronoi diagram in different settings.

\subparagraph{Background and related work.}

The Brocard illumination problem combines floodlight illumination with a generalization of a classic geometric problem first solved by Henri Brocard (1845-1922).
In such problem the input domain is bounded by a triangle, there is a ray aligned with each side of the triangle, and all the rays are oriented either in clockwise or in counterclockwise direction along the boundary of the triangle.
Throughout the years, researchers generalized Brocard's problem to input domains bounded first by convex quadrilaterals, and later on to arbitrary convex polygons; 
see \cref{fig:the_problem1}.
A seminal problem derives from a particular class of convex polygons known as \emph{Brocard polygons}~\cite{bernhart1959}: A polygon $\Pol$ is called a Brocard polygon, if there exists an interior point of $\Pol$ with equal angular distance to all the rays aligned with the edges of $\Pol$.
The angular distance is precisely the Brocard angle of $\Pol$, and the point is known as the \emph{Brocard point} of $\Pol$.
Intuitively speaking, if we continuously increase the value of an angle $\alpha$ starting at $\alpha = 0$, the Brocard point of $\Pol$ is the first one simultaneously illuminated by all the $\alpha$-floodlights.

The characterization of Brocard polygons has a long history, yet, only harmonic polygons (which includes triangles and regular polygons) are known to be Brocard polygons~\cite{casey1888}.
The classic literature on Brocard polygons study the Brocard problem only from a geometric point of view, yet not from a computational perspective.
Nevertheless, from well-known geometric results it is not hard to conclude the following decision result:
{Given a convex polygon $\Pol$ with $n$ vertices, we can decide whether $\Pol$ is a Brocard polygon in $O(n)$ time and, in the affirmative, we can compute the Brocard angle of $\Pol$ in $O(1)$ time}.

\begin{figure}[t]
  \centering
  \begin{subfigure}[t]{0.48\textwidth}
    \centering
    \includegraphics[width=\textwidth,page=1]{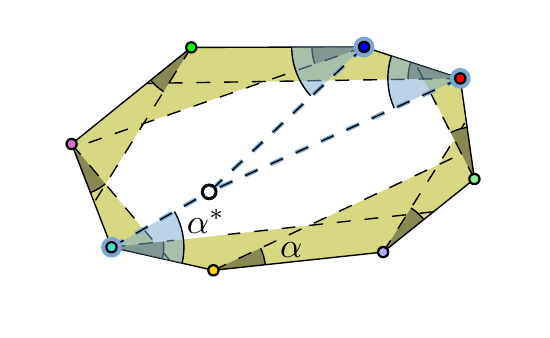}
    \caption
    {
      Illumination of $\Pol$ with an $\alpha$-floodlight aligned with each edge.
    }
    \label{fig:the_problem1}
  \end{subfigure}
  \hfill
  \begin{subfigure}[t]{0.48\textwidth}
    \centering
    \includegraphics[width=\textwidth,page=2]{brocard_illustrate.pdf}
    \caption
    {
      The rotating rays Voronoi diagram of the edge-aligned rays confined into $\Pol$.
    }
    \label{fig:the_problem2}
  \end{subfigure}
  \caption
  {
    A convex polygon $\Pol$.
    Highlighted in blue, the three rays that realize the Brocard angle $\minangle$.    The interior point of $\Pol$ at which $\minangle$ is realized is a rotating rays Voronoi diagram vertex.
  }
  \label{fig:the_problem}
\end{figure}

A natural direction is to consider not only the problem of deciding whether a polygon is Brocard, but the more general problem of computing the Brocard angle of any given polygon.
The problem of computing the Brocard angle of a simple polygon has been recently studied by Alegr\'{i}a\etal~\cite{alegria2017}.
The authors gave an $O(n^3\log^2n)$-time algorithm, and complemented this result with an $O(n\log n)$-time algorithm for convex polygons\footnote{The $O(n)$ time analysis of the algorithm for convex polygons stated in \cite{alegria2017} is not correct.}.
To the best of our knowledge, there are no other studies of the Brocard problem from a computational perspective.

Since their introduction, floodlight illumination problems have been studied in different settings; refer to the book chapters by Urrutia~\cite{urrutia2000} and O'Rourke~\cite{orourke2017} for a compilation of some known results.
Indicatively, the domain may be the entire plane~\cite{bose1997,czyzowicz2015,steiger1998}, an unbounded planar region~\cite{cary2010,steiger1998}, a curve~\cite{contreras1998stage,dietel2008,ito1998,toth2003segments}, or a polygonal domain \cite{estivill1995,ito1998,orourke1987,toth2002}.
The case when floodlights are required to be of uniform angle, as in the Brocard illumination setting, has been explored by several authors, see for example~\cite{contreras1998,estivill1995,ismailescu2008,nilsson_et_al_2021,orourke1995,toth2002,toth2003polygons}.
From a practical point of view, rotating $\alpha$-floodlights can also be used to model devices with limited sensing range (\emph{field of view}), like surveillance cameras or directional antennae; see for example~\cite{berman2007,kranakis2011,neishaboori2014,tao_2015}.
In this context, the Brocard angle is interpreted as the minimum range needed for a set of devices to cover a domain.

Voronoi diagrams are well-studied objects in Computational Geometry with numerous variations and applications; refer to the books of Aurenhammer\etal~\cite{aurenhammerBook} and Okabe\etal~\cite{okabeBook} for a comprehensive list of results.
Still, the rotating rays Voronoi diagram seems to be novel with respect to both the input sites and the distance function.
A slightly related diagram was defined by De Berg\etal~\cite{deBerg2017} to study \emph{dominance regions} of players in the analysis of soccer matches~\cite{taki1996};
such a diagram was also considered recently by Haverkort and Klein~\cite{haverkort2021}.

\subparagraph{Our contribution.}

We introduce the \emph{rotating rays Voronoi diagram} and prove a series of results, paving the way for future work on similar problems.
We define the rotating rays Voronoi diagram restricted to different domains, and show how the Brocard illumination problem in each domain can be reduced to the construction of the corresponding rotating rays Voronoi diagram.
More specifically:
\begin{itemize}[nosep]
\item
  We first consider the diagram of a set of $n$ rays in the plane, and identify structural properties which we complement with complexity results: an $\Omega(n^2)$ worst case lower bound and an $O(n^{2+\epsilon})$ upper bound.
  We also obtain an $O(n^{2+\epsilon})$-time construction algorithm,
  which we use to find the Brocard angle of the plane induced by the set of rays.

\item
  Motivated by the Brocard illumination problem, we study the diagram in a convex polygonal region bounded by the input set of $n$ rays.
  We present a construction algorithm that runs in optimal $\Theta(n)$ time,
  and use this result to find the Brocard angle of a convex polygon in optimal $\Theta(n)$ time.
  This result improves upon the previously known $O(n \log n)$-time algorithm.

\item
  Finally, we study the diagram restricted to simple curves and give a generic approach to  construct it.
  The combinatorial and time complexity bounds depend on the properties of the individual curve. 
\end{itemize}

\subparagraph{Paper outline.}
The rest of the paper is organized as follows.
In \cref{sec:preliminaries} we give the necessary preliminaries.
In \cref{sec:plane} the domain of interest is the entire plane.
In \cref{sec:polygon} we consider a convex polygonal domain, and in \cref{sec:curve} the domain of interest is restricted to curves.
\cref{sec:conclusion} concludes the paper and poses some open questions.

\section{Preliminaries}
\label{sec:preliminaries}

Let $\cR$ be a set of $n$ rays in the plane.
Given a ray $r$, we denote its apex by $\apex(r)$, its supporting line by $l(r)$, and its direction in $S^1$ by $\dir(r)$.
Given three points $A,B,C \in \R^2$, let $\angle(A,B,C)$ denote the counterclockwise angle from $\overrightarrow{BA}$ to $\overrightarrow{BC}$, where $\overrightarrow{BA}$ denote the ray with apex $B$ passing through $A$, and $\overrightarrow{BC}$ respectively.
We define the distance function as follows.
 
\begin{definition}\label{def:distance}
  Given a ray $r$ and a point $x \in \mathbb{R}^2$, the \emph{oriented angular distance} from $x$ to $r$, denoted by $\rdis(x,r)$, is the minimum counterclockwise angle $\alpha$ from $r$ to a ray with apex $\apex(r)$ passing through $x$; see \cref{fig:distance}.
  Further, we define $\rdis(\apex(r),r)=0$.
\end{definition}

It is easy to see that the oriented angular distance (or angular distance, for short) is not a metric.
Moreover, observe that $\rdis(x,r)$ takes values in $[0,2\pi)$ and there is a discontinuity at $2\pi$.
Using this distance function, we can define the bisector of two rays.

\begin{definition}\label{def:dominance}
  Given two rays $r$ and $s$, the \emph{dominance region} of $r$ over $s$, denoted by $\dr(r,s)$, is the locus of points with smaller angular distance to $r$ than to $s$, i.e.,
    \begin{align*}
        && \dr(r,s) := \{\, x \in \R^2 \mid \rdis (x,r) < \rdis (x,s) \,\}.
    \end{align*}
  The \emph{angular bisector} of $r$ and $s$, denoted by $\rbis(r,s)$, is the curve delimiting $\dr(r,s)$ and $\dr(s,r)$; see~\cref{fig:rvdBisectors}.
\end{definition}

Note that because of the discontinuity of the distance function, our definition of a bisector is slightly different than the usual, which is the locus of points equidistant to two sites. 

\begin{figure}[t]
	\centering
	\begin{subfigure}[t]{0.325\textwidth}
		\centering
		\includegraphics[width=\textwidth,page=2]{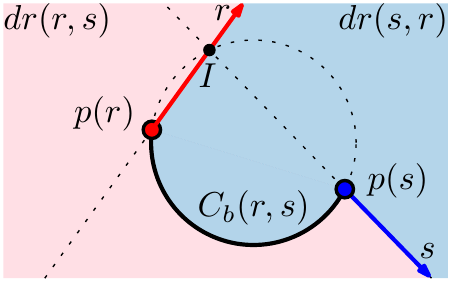}
		\caption{
			Non-intersecting: $I \notin r,s$.
		}
		\label{fig:rvdBisectors1}	
	\end{subfigure}
	\hfill
	\begin{subfigure}[t]{0.325\textwidth}
		\centering
		\includegraphics[width=\textwidth,page=1]{rvdBisectors.pdf}
		\caption{
			Non-intersecting: $I \in r$.
		}		
		\label{fig:rvdBisectors2}		
	\end{subfigure}
	\hfill
	\begin{subfigure}[t]{0.325\textwidth}
		\centering
		\includegraphics[width=\textwidth,page=3]{rvdBisectors.pdf}
		\caption{
			Intersecting.
		}		
		\label{fig:rvdBisectors3}		
	\end{subfigure}\\
	
	\begin{subfigure}[t]{0.325\textwidth}
	\centering
	\includegraphics[width=\textwidth,page=4]{rvdBisectors.pdf}
	\caption{
		"Tangent": $\apex(r) \in s$.
		}
		\label{fig:rvdBisectors4}		
	\end{subfigure}
	\hfill
	\begin{subfigure}[t]{0.325\textwidth}
		\centering
		\includegraphics[width=\textwidth,page=5]{rvdBisectors.pdf}
		\caption{
			"Tangent": $\apex(r) \notin s$.
		}		
		\label{fig:rvdBisectors5}		
	\end{subfigure}
	\hfill
	\begin{subfigure}[t]{0.325\textwidth}
		\centering
		\includegraphics[width=\textwidth,page=6]{rvdBisectors.pdf}
		\caption{
			Sharing their apex.
		}		
		\label{fig:rvdBisectors6}		
	\end{subfigure}\\
	
	\begin{subfigure}[t]{0.325\textwidth}
	\centering
	\includegraphics[width=\textwidth,page=7]{rvdBisectors.pdf}
	\caption{
		Parallel.
	}
		\label{fig:rvdBisectors7}	
	\end{subfigure}
	\hfill
	\begin{subfigure}[t]{0.325\textwidth}
		\centering
		\includegraphics[width=\textwidth,page=8]{rvdBisectors.pdf}
		\caption{
			Anti-parallel: $l(r) \neq l(s)$.
		}		
		\label{fig:rvdBisectors8}		
	\end{subfigure}
	\hfill
	\begin{subfigure}[t]{0.325\textwidth}
		\centering
		\includegraphics[width=\textwidth,page=9]{rvdBisectors.pdf}
		\caption{
			Anti-parallel: $l(r) = l(s)$.
		}	
		\label{fig:rvdBisectors9}			
	\end{subfigure}
  \caption
  {
    The angular bisector of two rays $r$ and $s$.
    The bisector consists of $r$ (red ray), $s$ (blue ray), and an arc of the bisecting circle $\cb(r,s)$ (black curve).
  }
  \label{fig:rvdBisectors}
\end{figure}

Given two rays $r$ and $s$, the bisector $\rbis(r,s)$ consists of the two rays $r$ and $s$, and a circular arc $a$ that connects $\apex(r)$ to $\apex(s)$;
see \cref{fig:rvdBisectors}.
Let $I := l(r) \cap l(s)$.
The arc $a$ belongs to the \emph{bisecting circle} $\cb(r,s)$, which we define as follows:
\begin{itemize}[leftmargin=*,topsep=1mm]
\item
  If $I$, $\apex(r)$, and $\apex(s)$ are pairwise different, then $\cb(r,s)$ is the circle through $I$, $\apex(r)$, and $\apex(s)$.
  The arc $a$ contains $I$ if, and only if, either $I$ lies on both $r$ and $s$, or $I$ lies on none of $r$ and $s$.
  See \cref{fig:rvdBisectors1}, \cref{fig:rvdBisectors2}, and \cref{fig:rvdBisectors3}.

\item
  If $I=\apex(r)$ and $I \neq \apex(s)$, then $\cb(r,s)$ is the circle tangent to $l(r)$ passing through $\apex(r)$ and $\apex(s)$.
  Both $a$ and $r$ lie on the same side of $l(s)$, if and only if, $p(r)$ lies on $s$; see \cref{fig:rvdBisectors4} and \cref{fig:rvdBisectors5}.
  We analogously define $\cb(r,s)$, if $I=\apex(s)$ and $I \neq \apex(r)$.

\item
  If $\apex(r) = \apex(s)$, then both $\cb(r,s)$ and $a$ degenerate to a single point; see \cref{fig:rvdBisectors6}.

\item
  If $l(r)$ and $l(s)$ are parallel, then $\cb(r,s)$ degenerates to the line through $\apex(r)$ and $\apex(s)$.
  If $\dir(r) = \dir(s)$, then $a$ consists of two halflines; see \cref{fig:rvdBisectors7}.
  If instead $\dir(r) = -\dir(s)$, then $a$ degenerates to a line segment; see \cref{fig:rvdBisectors8} and \cref{fig:rvdBisectors9}.
  
\end{itemize}

Unless otherwise stated, we assume for simplicity that no two rays share an apex and no two rays are parallel or antiparallel.
Under these assumptions, the bisectors are of the forms illustrated in \cref{fig:rvdBisectors1,fig:rvdBisectors2,fig:rvdBisectors3,fig:rvdBisectors4,fig:rvdBisectors5}.
In the following lemma we establish the correctness of the description of the bisectors as described above.

\begin{lemma}\label{lem:NonParallelBisector}
  Given two rays $r$ and $s$,
  the bisector $\rbis(r,s)$ consists of the two rays $r,s$ and an arc of the bisecting circle $\cb(r,s)$.
\end{lemma}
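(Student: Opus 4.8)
The plan is to split the boundary between the open regions $\dr(r,s)$ and $\dr(s,r)$ into two parts and treat them separately: the points where the two distances coincide and both are continuous, and the points where one distance is discontinuous and the comparison between the two flips. The key observation is that $\rdis(x,r)$ depends only on the direction of $x$ as seen from $\apex(r)$; in the paper's notation $\rdis(x,r)=\angle(\apex(r)+\dir(r),\,\apex(r),\,x)$, so it is constant along every halfline out of $\apex(r)$ and, as a function on the plane, its only discontinuity is a jump from values near $0$ to values near $2\pi$ across the ray $r$ itself. The same holds for $s$.

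First I would dispose of the rays. On $r$ we have $\rdis(x,r)=0$, the global minimum, so $r\subseteq\overline{\dr(r,s)}$. For a point of $r$ not lying on $s$, crossing $r$ in the clockwise sense makes $\rdis(x,r)$ jump from near $0$ to near $2\pi$ while $\rdis(x,s)$ varies continuously and stays strictly between $0$ and $2\pi$; hence the sign of $\rdis(x,r)-\rdis(x,s)$ flips and the point lies on the bisector. By symmetry $s$ lies on the bisector as well, which accounts for the two rays in the claimed description.

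Next I would determine the continuous equidistant locus $\{x:\rdis(x,r)=\rdis(x,s)\}$. Writing $\theta_r(x)=\arg(x-\apex(r))$ and $\theta_s(x)=\arg(x-\apex(s))$ for the direction angles of $x$ from the two apexes, equidistance reads $\theta_r(x)-\theta_s(x)\equiv\arg\dir(r)-\arg\dir(s)\pmod{2\pi}$, i.e.\ the signed angle under which the segment $\apex(r)\apex(s)$ is seen from $x$ is a fixed constant. By the inscribed angle theorem this locus is exactly one of the two arcs into which $\apex(r)$ and $\apex(s)$ split a circle through these two points, the companion arc being where the subtended angle differs by $\pi$. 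To pin down the circle I would check that $I=l(r)\cap l(s)$ lies on it: at $x=I$ the directions $\overrightarrow{\apex(r)I}$ and $\overrightarrow{\apex(s)I}$ run along $l(r)$ and $l(s)$, so $\theta_r(I)$ and $\theta_s(I)$ equal $\arg\dir(r)$ and $\arg\dir(s)$ each up to an added $\pi$ depending on whether $I$ lies forward or backward on the respective ray; in every case $\theta_r(I)-\theta_s(I)$ agrees with $\arg\dir(r)-\arg\dir(s)$ modulo $\pi$, placing $I$ on the circle, which therefore is the circle through $\apex(r),\apex(s),I$, namely $\cb(r,s)$. The same parity computation selects the arc and reproduces the stated $I$-rule: $I$ lies on the equidistant arc precisely when the forward/backward parities agree on both rays, that is when $I$ lies on both of $r,s$ or on neither.

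Since the arc runs from $\apex(r)$ to $\apex(s)$ and the two rays emanate from these same apexes, the three pieces join into the single curve separating $\dr(r,s)$ from $\dr(s,r)$, giving the claimed form. The step I expect to be most delicate is the orientation bookkeeping in the previous paragraph: the inscribed angle theorem fixes the circle only up to the choice between its two arcs, and one must carefully track the $\pm\pi$ coming from the position of $I$ on each ray to select the correct arc and to recover the case distinction for whether $I$ belongs to the arc. Finally, the configurations excluded by our standing assumptions (shared apex, parallel, antiparallel) are handled by the same equidistance computation, with $\cb(r,s)$ degenerating to a point or to a line exactly as described before the lemma, so the statement holds in full generality.
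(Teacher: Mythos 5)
Your proposal is correct and follows the same skeleton as the paper's proof: the rays are placed on the bisector via the jump of $\rdis(\cdot,r)$ from near $0$ to near $2\pi$ across $r$ (the paper's opening observation, verbatim in spirit), the equidistant locus is identified as a circular arc via the inscribed angle theorem, and the circle is pinned down as $\cb(r,s)$ by showing $I=l(r)\cap l(s)$ lies on it. Where you genuinely diverge is in the middle step. The paper takes two arbitrary equidistant points $A,B$ and angle-chases to show $\angle(\apex(r),A,\apex(s))=\angle(\apex(r),B,\apex(s))$, concluding pairwise concyclicity with the apexes; it then places $I$ on that circle by a two-case analysis ($I$ on both/neither ray versus exactly one, the latter requiring a separate supplementary-angle computation). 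You instead rewrite equidistance as the single condition that the directed angle subtending $\apex(r)\apex(s)$ at $x$ is a fixed constant mod $2\pi$, so one application of the directed inscribed angle theorem yields the arc, and a uniform mod-$\pi$ parity check places $I$ on the full circle while the mod-$2\pi$ parity simultaneously recovers the rule for when $I$ belongs to the arc --- a statement the paper asserts before the lemma but proves only implicitly. This is a cleaner and slightly more informative route; the one point you rightly flag as delicate, the $\pm\pi$ bookkeeping that selects the correct arc, is exactly where the content lies, and it would be worth also remarking explicitly that the difference $\rdis(x,r)-\rdis(x,s)$ changes sign across the arc (immediate from continuity and the fact that the level sets of the subtended angle foliate a neighbourhood of the arc), so that the arc really is the delimiting curve of \cref{def:dominance} and not merely an equidistant set.
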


\begin{proof}
  Any point slightly to the left of ray $r$ has a distance 
  of almost $0$
  to ray $r$, whereas any point slightly to the right of ray $r$ has a distance 
  of almost $2 \pi$.
  Hence the rays $r$ and $s$ are part of the bisector $\rbis(r,s)$.
  
\begin{figure}[t]
    \centering
    \begin{subfigure}[t]{0.48\textwidth}
        \centering
        \includegraphics[width=0.9\textwidth,page=1]{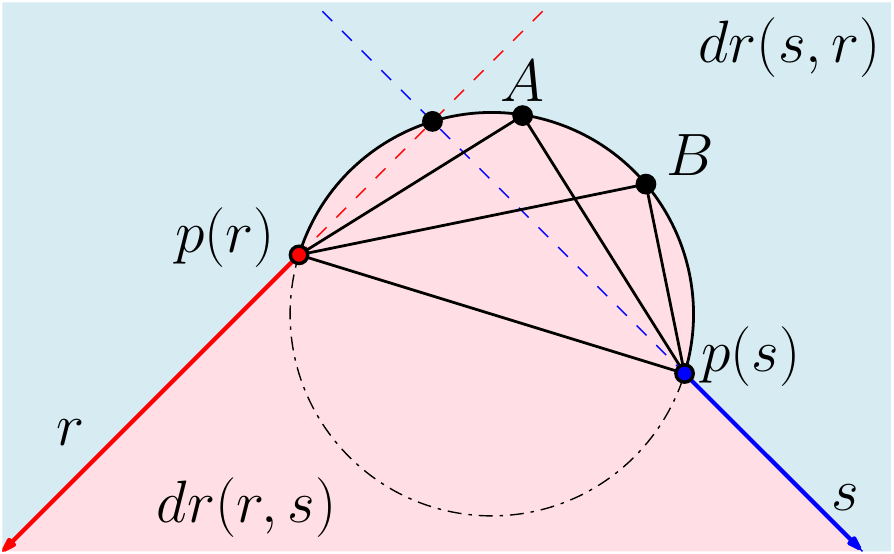}
        \caption{
            Any two points equidistant to both rays $r$ and $s$ lie on a common circle.
        }
        \label{fig:proof_bisector1}
    \end{subfigure}
    \hfill
    \begin{subfigure}[t]{0.48\textwidth}
        \centering
        \includegraphics[width=0.9\textwidth,page=2]{fig/rvdBisectorProof.pdf}
        \caption{
            The intersection point $I$ lies on the common circle with all the points equidistant to $r$ and $s$.
        }
        \label{fig:proof_bisector2}
    \end{subfigure}
    \caption{
        Illustrations for the proof of \cref{lem:NonParallelBisector}.
    }
    \label{fig:proof_bisector}
\end{figure} 
  We first show that any two points $A$ and $B$ 
  equidistant to both rays $r$ and $s$ lie on a common circle; see \cref{fig:proof_bisector1}.
  Since $A,B$ are equidistant to $r$ and $s$, it 
  means that $\angle(B,p(r),A) = \angle(B,p(s),A)$.
  We show that $\angle(p(r),A,p(s)) = \angle(p(r),B,p(s))$, which implies that all $p(r),p(s), A$ and $B$ lie on a circular arc connecting $p(r)$ and $p(s)$ by the inscribed angle theorem:
  \begin{align*}
    &\angle(p(r),A,p(s)) \\
    & = \pi - \angle(p(s),p(r),A) - \angle(A,p(s),p(r))\\
    & = \pi-(\angle(p(s),p(r),B)+\angle(B,p(r),A))- (\angle(B,p(s),p(r))-\angle(B,p(s),A))\\
    & = \pi - \angle(p(s),p(r),B) - \angle(B,p(s),p(r))\\
    & = \angle(p(r),B,p(s)).
  \end{align*}
  
  In the final step we show that $I = l(r) \cap l(s)$ lies on the common circle with all the equidistant points.
  If $I$ lies on both (resp. none) of the rays $r$ and $s$ then $\rdis(I,r) = \rdis(I,s) = 0$ (resp. $\rdis(I,r) = \rdis(I,s) = \pi$).
  In this case $I$ is equidistant to both rays and therefore, clearly on the common circle.
  
  Let us now assume that $I$ lies on exactly one of the rays $r$ and $s$; see \cref{fig:proof_bisector2}.
  Let $A$ be a point equidistant to both rays, i.e., $\angle(I,p(r),A) = \pi + \angle(I,p(s),A)$.
  Then
  \begin{align*}
    \angle(p(s),A,p(r)) 
    & = 2 \pi -\angle(A,p(r),I)-\angle(p(r),I,p(s))-\angle(I,p(s),A)\\
    & = \pi - \angle(p(r),I,p(s)).
  \end{align*}
  Therefore, by the inscribed angle theorem, $A$, $p(r)$, $I$ and $p(s)$ lie on opposite sides of a common circle, concluding the proof.
\end{proof}

\begin{definition}\label{def:angularDifference}
  The \emph{angular difference} between two rays $r$ and $s$, denoted by $\diff(r,s)$, 
   is the angle by which $s$ has to rotate counterclockwise around its apex,  
  so that $r$ and $s$ become parallel.
\end{definition}

The angular difference of two rays is illustrated in \cref{fig:rvdAngularDifference}.
Note that for any two non-parallel rays $r$ and $s$ we have that $\diff(r,s) + \diff(s,r) = 2\pi = 0$.

\begin{remark}
    \label{remark:distance}
    Given a pair of rays $r$ and $s$, the distance function is monotone along the circular arc of their bisector $\rbis(r,s)$, and strictly monotone if the lines $l(r)$ and $l(s)$ are not parallel.
    If the lines $l(r)$ and $l(s)$ are parallel,
    then the distance is constant along the entire circular part of the bisector $\rbis(r,s)$. 
    
    If instead $\diff(r,s) < \diff(s,r)$, or equivalently $\diff(s,r) > \pi$, then the distance function along the bisector $\rbis(r,s)$ from $p(r)$ to $p(s)$ is monotonically increasing.
    Moreover, walking along the boundary of $\dreg(r,s)$ in counterclockwise order, the distance function on the circular part of bisector $\rbis(r,s)$ is 
    monotonically increasing
    (see the arrow in \cref{fig:rvdAngularDifference}).
\end{remark}

We can now define the nearest Voronoi diagram of a set of rays under the 
angular distance. 

\begin{definition}
  The \emph{Rotating Rays Voronoi Diagram} of a set $\cR$ of rays is the subdivision of $\mathbb{R}^2$ into 
  \emph{Voronoi regions} defined as follows:
  \begin{linenomath*}
  \begin{align*}
    &&     \rreg(r) := \{\, x \in \mathbb{R}^2 \mid \forall s \in \cR \setminus\{r\}: \ \rdis(x,r) < \rdis(x,s) \, \}.
  \end{align*}
  \end{linenomath*}  
  Let $\rvd(\cR) := \left(\mathbb{R}^2 \setminus \bigcup_{r \in \cR} \rreg(r)\right) \cup \cR$ denote the graph structure of the diagram. 
  \label{def:diagram}
\end{definition}

A Voronoi region $\rreg(r)$ can be equivalently defined as the intersection of all the dominance regions of $r$, i.e., $\rreg(r) = \bigcap_{s \in \cR \setminus \{r\}} \dr(r,s)$.
A region may consist of more than one connected components;
each component is a called a \emph{face} of the region.

We distinguish the following features on $\rvd(\cR)$.
Refer to \cref{fig:rvdFeatures} for an illustration.

\begin{itemize}[nosep]
	\item
	A \emph{circular edge} is a subset of the circular part of a bisector, thus,
	any point on a circular edge is equidistant to the two sites that induce it
	(see $\ov{vw}$ in \cref{fig:rvdFeatures}).
	
	\item
	A \emph{ray edge} is a subset of a ray, thus, 
	any point on a ray edge has distance $0$ to the site that induces it
	(see $\ov{xw}$ in \cref{fig:rvdFeatures}).
	
	\item
	A \emph{proper vertex} is incident to three circular edges, thus,
	it is equidistant to the three sites that induce the three circular edges
	(see $u$ in \cref{fig:rvdFeatures}).

	\item
	A \emph{mixed vertex} is incident to one circular edge and two ray edges, which are induced by a single ray.
	It is equidistant to the two sites inducing the circular edge and 
	has distance $0$ to the site inducing the ray edges.
	(see $v$ in \cref{fig:rvdFeatures}).

	\item
	An \emph{intersection vertex} is incident to one circular edge and four ray edges, all of which are induced by two sites.
	It has distance $0$ to both sites
	(see $w$ in \cref{fig:rvdFeatures}).
	
	\item
	An \emph{apex vertex} is incident to one circular edge and one ray edge, where the site inducing the ray edge is one of the two sites inducing the circular edge.
	It has distance $0$ to the site inducing the ray edge 
	and distance greater than $0$ to the other site
	(see $x$ in \cref{fig:rvdFeatures}).
\end{itemize}

\begin{figure}[t]
	\begin{minipage}[t]{0.33\textwidth}
		\centering
		\includegraphics[width=0.95\textwidth,page=12]{rvdBisectors.pdf}
		\caption{
		The angular difference $\diff(r,s) = \beta$.
		The distance is increasing from $\apex(r)$ to $\apex(s)$ along bisector $\rbis(r,s)$.
		}
		\label{fig:rvdAngularDifference}
	\end{minipage}	
	\hfill
	\begin{minipage}[t]{0.64\textwidth}
	\centering
	\includegraphics[width=0.9\textwidth,page=1]{generalRVD.pdf}
	\caption{
		Different features on $\rvd(\cR)$:
		arc $\ov{vw}$ is a circular edge, segment $\ov{xw}$ is a ray edge,
		$u$ is a proper vertex, $v$ is a mixed vertex, $w$ is an intersection vertex, and $x$ is an apex vertex.
	}
	\label{fig:rvdFeatures}
	\end{minipage}
\end{figure}

\section{Rotating Rays Voronoi diagram in the Plane}
\label{sec:plane}

In this section we study the diagram $\rvd(\cR)$ in the plane.
We first look at some properties and combinatorial complexity bounds.
Then we consider the problem of illuminating the plane with a set of floodlights aligned with $\cR$.

\subsection{Properties, complexity, and a construction algorithm}
\label{subsec:rvdProperties}

We first study the structure of the Voronoi diagram of $3$ rays; see an example in \cref{fig:5mixedvertices}.

\begin{lemma}
\label{lem:diagramOfThree}
The Voronoi diagram of three rays $\rvd(\{r,s,t\})$ has at most $1$ proper Voronoi vertex, at most $3$ intersection vertices, and at most $6$ mixed vertices. 
Its overall combinatorial complexity is $O(1)$.
\end{lemma}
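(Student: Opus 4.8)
The plan is to bound each of the three listed vertex types by a constant separately, and then to deduce the overall $O(1)$ bound from the observation that the whole diagram is a substructure of the arrangement of only three bisectors. Throughout I would use \cref{lem:NonParallelBisector}: the circular part of each bisector lies on a bisecting circle, and each of the three bisecting circles $\cb(r,s)$, $\cb(r,t)$, $\cb(s,t)$ passes through the two apices of the rays defining it.

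For the proper vertices I would argue as follows. A proper vertex is equidistant to all three rays, so it lies on the circular loci of (at least) two bisectors, say on $\cb(r,s)$ and $\cb(r,t)$. The key point is that both of these circles pass through the shared apex $\apex(r)$. Two distinct circles meet in at most two points; one of them is forced to be $\apex(r)$, which has distance $0$ to $r$ but positive distance to $s$ and $t$ and is therefore not equidistant to all three. Hence at most one other common point remains as a proper-vertex candidate, giving at most $1$ proper vertex. The general position assumptions guarantee that the two circles are distinct, so the argument is not vacuous.

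For the intersection vertices I would use that such a vertex has distance $0$ to two sites and therefore lies on both corresponding rays, i.e.\ it is a crossing of two rays of $\{r,s,t\}$; since two rays cross in at most one point and there are only $\binom{3}{2}=3$ pairs, there are at most $3$ intersection vertices. For the mixed vertices I would first observe that such a vertex has distance $0$ to a single ray, say $r$, and is equidistant to the two sites inducing its circular edge; if one of those two sites were $r$ itself, equidistance together with distance $0$ to $r$ would force distance $0$ to the other site as well, yielding an intersection vertex instead. Thus the circular edge must be induced by the two remaining rays and lie on $\cb(s,t)$, so every mixed vertex is an intersection of one ray with the bisecting circle of the opposite pair ($r$ with $\cb(s,t)$, $s$ with $\cb(r,t)$, or $t$ with $\cb(r,s)$). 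As a ray meets a circle in at most $2$ points, each of the three pairs contributes at most $2$, for a total of at most $6$.

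Finally, for the overall complexity I would note that these vertices, together with the apex vertices (which sit at the three apices and are hence $O(1)$ many), give only $O(1)$ vertices in total; since $\rvd(\{r,s,t\})$ is contained in the arrangement of the three rays $r,s,t$ and the three circular arcs, a constant-size collection of lines and circular arcs, its numbers of vertices, edges, and faces are all $O(1)$. I expect the proper-vertex bound to be the main obstacle: the clean ``at most one'' count hinges on recognizing that the relevant pair of bisecting circles shares the apex $\apex(r)$, which collapses the two possible circle--circle intersections to a single admissible candidate; the remaining bounds reduce to routine ray--ray and ray--circle incidence counts.
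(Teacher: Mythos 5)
Your proposal is correct and follows essentially the same route as the paper's proof: the at-most-one proper vertex comes from the two bisecting circles $\cb(r,s)$ and $\cb(r,t)$ already sharing the point $\apex(r)$, intersection vertices are ray--ray crossings ($\binom{3}{2}=3$ of them), and mixed vertices are ray--circle incidences ($3\times 2=6$). The only differences are cosmetic: you make explicit two points the paper leaves implicit, namely that $\apex(r)$ itself cannot be the proper vertex and that the circular edge at a mixed vertex must be induced by the pair of rays other than the one carrying the ray edge.
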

\begin{proof}
  A proper Voronoi vertex is 
  the intersection point of three circular arcs of three related bisectors.
  Consider two related bisectors $\rbis(r,s)$ and $\rbis(r,t)$, and their bisecting circles $\cb(r,s)$ and $\cb(r,t)$. 
  Two circles intersect at most twice, and the circles $\cb(r,s)$ and $\cb(r,t)$ already have one point of intersection by definition, i.e., apex $\apex(r)$.
  Hence, there there can be at most one more point of intersection $v$ between them, and consequently between the circular arcs of $\rbis(r,s)$ and $\rbis(r,t)$.
  Given an intersection point $v$, the bisector $\rbis(s,t)$ also passes by $v$, inducing a proper vertex of $\rvd(\{r,s,t\})$ at point $v$. So, there exists at most one proper vertex in $\rvd(\{r,s,t\})$.
  
  An intersection vertex is defined at the intersection point of two rays. So, given three rays there are at most $3$ intersection vertices in $\rvd(\{r,s,t\})$.
  A mixed vertex is defined at the intersection of a circular arc of a bisector $\rbis(r,s)$ and of a ray $t \notin \{r,s\}$. 
  Consider the circular arc of the $\rbis(r,s)$; 
  the remaining ray $t$ can intersect the arc at most two times, hence inducing at most two mixed vertices on the circular arc of $\rbis(r,s)$.
  Given three rays, there are three bisectors, so overall there can be no more than $6$ mixed vertices in $\rvd(\{r,s,t\})$.

  There are $O(1)$ vertices in $\rvd(\{r,s,t\})$, so the combinatorial complexity follows. 
\end{proof}

\begin{figure}[t]
        \centering
        \includegraphics[width=0.6\textwidth,page=5]{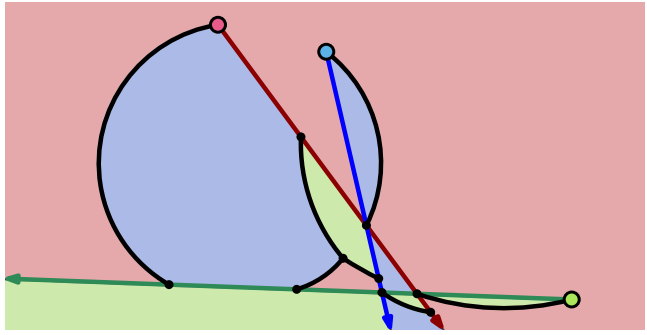}
        \caption{
            A set $\cR$ of $3$ rays, with $\rvd(\cR)$ having $1$ proper, $5$ mixed, and $3$ intersection vertices.
        }
    \label{fig:5mixedvertices}
\end{figure}

Assuming that no two rays of $\cR$ are parallel to each other, 
the following two simple structural properties hold.

\begin{lemma}
    \label{lem:oneUnbounded}
    $\rvd(\cR)$ has exactly $n$ unbounded faces, one for each ray; 
    each ray is incident to its unbounded face.
\end{lemma}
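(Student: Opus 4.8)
The plan is to analyze the diagram far from the origin, where the angular distance depends essentially only on direction. First I would fix a ray $r$ and a unit direction $\theta \in S^1$, and let a point $x$ go to infinity with $x/\|x\| \to \theta$. Since $\apex(r)$ is a fixed finite point, the direction of $\overrightarrow{\apex(r)x}$ also tends to $\theta$, so by \cref{def:distance} the value $\rdis(x,r)$ converges to the counterclockwise angle from $\dir(r)$ to $\theta$, which I will denote $\phi_r(\theta) \in [0,2\pi)$. Moreover this convergence is uniform over $\theta$ bounded away from $\dir(r)$ (the only discontinuity of $\phi_r$ is the jump at $\theta=\dir(r)$), which is what lets me transfer the limiting picture from infinity to a large-but-finite disk.

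Second I would determine, for each direction $\theta$, which ray is nearest at infinity, i.e.\ which $r$ minimizes $\phi_r(\theta)$. Writing $d_i := \dir(r_i)$ and ordering the directions counterclockwise as $d_1 < d_2 < \dots < d_n$ around $S^1$ (they are pairwise distinct, since no two rays are parallel), the quantity $\phi_{r_i}(\theta)$ is minimized exactly by the direction first encountered when rotating clockwise from $\theta$. Hence $r_i$ is the unique nearest ray at infinity for every $\theta$ in the open arc $(d_i, d_{i+1})$ (indices cyclic), and these $n$ arcs of positive length partition $S^1$, one arc per ray.

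Third I would promote this to a statement about faces. I would first observe that every circular edge of $\rvd(\cR)$ is bounded: a bisecting circle passes through the three finite points $I,\apex(r),\apex(s)$ (the antiparallel case only yields a bounded segment, and parallel rays are excluded), so the only unbounded edges of the diagram are the infinite tails of the rays, one per ray, in direction $d_i$. By the uniform convergence above there is a radius $R_0$ such that for $\|x\| > R_0$ the nearest ray to $x$ is $r_i$ whenever $x/\|x\|$ lies in a closed subarc of $(d_i,d_{i+1})$; hence outside $B_{R_0}$ the $n$ ray tails cut the plane into $n$ unbounded angular sectors. The sector that is clockwise-bounded by the tail of $r_i$ and counterclockwise-bounded by the tail of $r_{i+1}$ lies entirely in $\rreg(r_i)$ and is connected, so $\rreg(r_i)$ has exactly one unbounded face, bounded on its clockwise side by the tail of $r_i$ itself. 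This gives exactly $n$ unbounded faces, one per ray, each incident to its own ray, as claimed; any further face of a region is contained in $B_{R_0}$ and is therefore bounded.

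The main obstacle is this third step: making rigorous that the unbounded part of each region is a \emph{single} connected face. The limiting analysis only controls directions, so I must rule out a competing region reaching infinity inside the arc $(d_i,d_{i+1})$ and rule out the sector being pinched into two unbounded pieces at large finite radius. Both are handled by the uniform convergence of $\rdis(x,r)$ to $\phi_r(\theta)$ away from the breakpoints $d_i$, together with the fact that the only unbounded edges are ray tails. The sole delicate point is the behavior in shrinking neighborhoods of the breakpoints $d_i$, where two rays are nearly tied; but there the competition is resolved precisely along the straight tail of $r_i$, which is exactly the sector boundary.
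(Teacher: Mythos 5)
Your proof is correct and follows essentially the same route as the paper: the paper also passes to a large circle $C$ enclosing all bisecting circles, observes that out there each dominance region $\dr(r,s)$ meets $C$ in the arc running counterclockwise from $r\cap C$ to $s\cap C$ (your $\phi_r(\theta)$-minimization in disguise), and intersects these $n-1$ arcs to get a single nonempty arc incident to $r$. Your limiting/uniform-convergence phrasing is a little heavier than the paper's exact statement at radius $C$, but the underlying idea — beyond all bisecting circles only the ray tails and their directions matter — is identical.
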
 
\begin{proof}
    Let $C$ be a circle of sufficiently large radius so that $C$ encloses all vertices of $\rvd(\cR)$ and the bisecting circles of all bisectors.
    To study the unbounded faces of a Voronoi region $\rreg(r)$, for a ray $r\in \cR$,
	we consider the intersection of $\rreg(r)$ with $C$.
	Refer to \cref{fig:rvdPropertiesPlane1} for an illustration.
 
    Given a ray $s \in \cR \setminus \{r\}$, the intersection of the dominance region $\dr(r, s)$ with $C$ is a circular arc on $C$ lying counterclockwise from $r \cap C$ to $s \cap C$.
    Since the region $\rreg(r)$ is the intersection of all the dominance regions of $r$, it follows that $\rreg(r) \cap C$ is the intersection of $n-1$ circular arcs, all starting from $r$.
    Hence, $\rreg(r) \cap C$ is a non-empty circular arc,
    incident to $r$.
    Thus, $\rreg(r)$ has exactly one unbounded face incident to $r$.
\end{proof}

\begin{figure}
	\centering
	\begin{minipage}[t]{0.48\textwidth}
		\centering
		\includegraphics[width=0.58\textwidth,page=1]{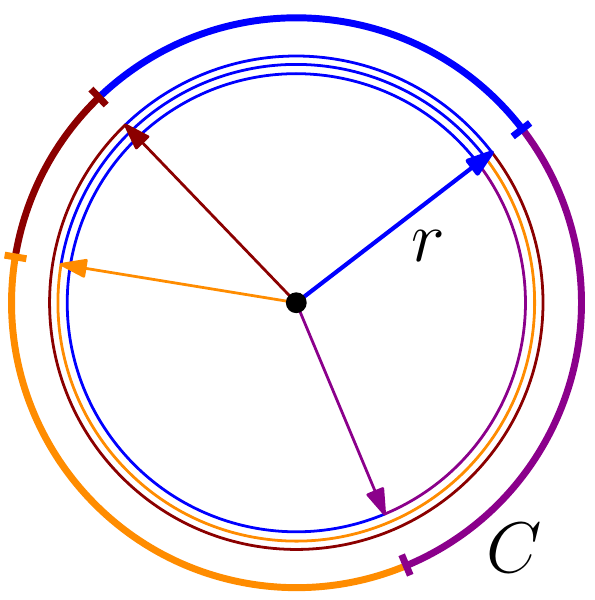}
		\caption{
			Intersection of a diagram of 4 rays with a large circle $C$.
			Dominance regions are circular arcs on $C$.
		}
		\label{fig:rvdPropertiesPlane1}	
	\end{minipage}
	\hfill	
	\begin{minipage}[t]{0.48\textwidth}
		\centering
		\includegraphics[width=.85\textwidth,page=2]{rvdProperties.pdf}
		\caption{
			Two (impossible) cases leading to a disconnected diagram.
			A ``corridor'' ($\rreg(r_1)$) and
			an ``island'' ($\rreg(r_2)$).
		}
		\label{fig:rvdPropertiesPlane2}	
	\end{minipage}
\end{figure}

\begin{lemma}
  \label{lem:connectedDiagram}
  $\rvd(\cR)$ is connected.
\end{lemma}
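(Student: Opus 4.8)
The plan is to prove connectedness by contradiction, ruling out the two ways the diagram graph $\rvd(\cR)$ could fail to be connected. Since $\rvd(\cR)$ is the complement of the union of the open Voronoi regions (together with the rays), a disconnection of the graph corresponds to a region that is ``enclosed'' by a single other region, cutting off part of the plane. I would formalize this using \cref{lem:oneUnbounded}: each region has exactly one unbounded face that reaches out to the large circle $C$, and each ray lies in its own region. The key structural obstruction to connectedness is captured in \cref{fig:rvdPropertiesPlane2}: either a region forms a ``corridor'' that completely separates the plane into two parts, or a region forms an ``island'' fully surrounded by a single neighboring region so that the bisector arc bounding it does not meet the rest of the diagram.

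First I would argue that the graph is disconnected only if some connected component of $\rvd(\cR)$ is entirely surrounded by a single Voronoi region $\rreg(r)$, i.e.\ there is a closed curve $\gamma$ lying in $\overline{\rreg(r)}$ that encloses a nonempty portion of the diagram and separates it from $C$. The second step is to exploit the geometry of the bisectors established in \cref{lem:NonParallelBisector}: the boundary of $\rreg(r)$ consists of pieces of the ray $r$ itself and of circular arcs coming from bisecting circles $\cb(r,s)$. I would use \cref{remark:distance}, which gives monotonicity of the angular distance along each bisector arc, to track how the distance $\rdis(\cdot,r)$ behaves along any such enclosing curve $\gamma$. The idea is that on $\gamma \subseteq \partial\rreg(r)$ the distance to $r$ equals the distance to whichever neighbor induces that boundary piece, while just inside $\gamma$ the point is strictly closer to $r$; combining this with monotonicity should force a contradiction with the value of $\rdis(\cdot,r)$ along the ray $r$ and near its apex.

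Concretely, I would handle the two forbidden configurations separately. For the ``island'' case (region $\rreg(r_2)$ surrounded by $\rreg(r_1)$), the enclosing curve would be a single bisector arc of $\cb(r_1,r_2)$ forming a closed loop; but by \cref{lem:NonParallelBisector} the circular part of a bisector is a single arc of a circle connecting the two apices $\apex(r_1)$ and $\apex(r_2)$, and it cannot close up into a loop enclosing a full region without passing through both apices, which lie on the rays and hence on the graph — contradicting the assumed isolation. For the ``corridor'' case, I would show that a region cannot pinch the plane into two components: by \cref{lem:oneUnbounded} every region reaches $C$ through exactly one unbounded face, and its boundary is built from at most the ray $r$ and arcs, so a corridor would require the region's boundary to meet $C$ in a way inconsistent with $\rreg(r)\cap C$ being a \emph{single} circular arc (as established in the proof of \cref{lem:oneUnbounded}).

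The main obstacle I anticipate is making the topological separation argument fully rigorous rather than merely picture-driven: one must carefully argue that any failure of connectedness of the graph $\rvd(\cR)$ yields a \emph{single} region whose closure contains a separating closed curve, and then that the combinatorial description of bisectors (a ray plus one circular arc per pair) is genuinely incompatible with such a curve. The subtle point is that a separating curve need not be a single bisector — it could a priori be stitched together from several edges of $\partial\rreg(r)$ meeting at vertices. I would therefore need an invariant, most naturally the monotone behavior of $\rdis(\cdot,r)$ from \cref{remark:distance} traced continuously around $\gamma$, to show that no such stitched-together closed boundary curve of a single region can exist. Reducing the global topological claim to this local monotonicity contradiction is the crux of the proof.
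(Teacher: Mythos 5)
Your overall skeleton matches the paper's proof: argue by contradiction, split into the ``corridor'' and ``island'' configurations, rule out the corridor with \cref{lem:oneUnbounded}, and rule out the island by appealing to the fact that a bisector is a simple unbounded curve. The corridor half is complete. The gap is in the island half, and you have diagnosed it yourself: your argument only works if the curve enclosing the island is a single bisector arc, whereas a priori the boundary of the face of $\rreg(r)$ surrounding the island is stitched together from arcs of several bisectors $\rbis(r,s)$ meeting at Voronoi vertices. You propose to close this by tracing the monotonicity of $\rdis(\cdot,r)$ from \cref{remark:distance} around the enclosing curve, but you never carry this out, and it is not clear it can be made to work: \cref{remark:distance} gives monotonicity only along the circular part of a single bisector, and the distance to $r$ along a closed chain of edges of $\partial\rreg(r)$ can change direction at vertices (a boundary-monotonicity statement of this kind is only proved later in the paper, in \cref{lem:rvdConvexChain}, and only for rays induced by a convex polygon).

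The idea you are missing is a simple reduction to the two-site diagram. If a connected component of $\rvd(\cR)$ is entirely surrounded by $\rreg(r)$, it contains a face of some region $\rreg(s)$; since $\rreg(r)\subseteq \dr(r,s)$ and $\rreg(s)\subseteq \dr(s,r)$, the same island configuration already occurs in $\rvd(\{r,s\})$, i.e.\ $\dr(s,r)$ acquires a bounded connected component. That forces the single bisector $\rbis(r,s)$ to have a bounded connected component, contradicting \cref{lem:NonParallelBisector}. This one-step reduction replaces the stitched, many-bisector boundary by a single bisector, which is exactly the situation your argument already handles; without it (or a worked-out substitute), the island case remains open.
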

\begin{proof}
    Assume, to the contrary, that $\rvd(\cR)$ is not connected, as illustrated in \cref{fig:rvdPropertiesPlane2}.
    Then, there exists a ray $r \in \cR$ whose region $\rreg(r)$ either 
    has an unbounded face with two occurrences at infinity, having conceptually a \emph{``corridor''} (as $\rreg(r_1)$ in \cref{fig:rvdPropertiesPlane2}), or $\rreg(r)$ contains  a connected component  of $\rvd(\cR)$, creating conceptually an \emph{``island''} (as $\rreg(r_2)$ in \cref{fig:rvdPropertiesPlane2}).
    
    By \cref{lem:oneUnbounded}, it directly follows that no region of $\rvd(\cR)$ can have a ``corridor''.
    We prove that no region $\rreg(r)$ of $\rvd(\cR)$ contains an ``island'', i.e., a connected component of $\rvd(\cR)$ entirely surrounded by $\rreg(r)$.
    Consider such a disconnected component of $\rvd(\cR)$ surrounded by $\rreg(r)$; this component contains at least one face of a region $\rreg(s)$ for some $s \in \cR$.  
    Then, also in $\rvd(\{r,s\})$, there is an ``island'' inside $\rreg(r)$, implying that the bisector $\rbis(r,s)$ has a bounded connected component.
    This contradicts the fact that each bisector is a simple unbounded curve.
\end{proof}

We now study the combinatorial complexity of $\rvd(\cR)$.
An $\Omega(n^2)$ lower bound is easily 
derived by a set $\cR$ of $n$ pairwise intersecting rays.
In such case,
$\rvd(\cR)$ has $\binom{n}{2}=\Theta(n^2)$ vertices (at the intersection of rays) and thus $\Omega(n^2)$ complexity. 
Interestingly, this bound holds even for 
non-intersecting rays, as we will show in the following theorem.

\begin{theorem}
\label{thm:lowerBound}
  The worst case combinatorial complexity of $\rvd(\cR)$ has an $\Omega(n^2)$ lower bound, even if the rays are pairwise non-intersecting.
\end{theorem}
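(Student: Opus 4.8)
The plan is to construct an explicit family of $n$ pairwise non-intersecting rays whose rotating rays Voronoi diagram has $\Omega(n^2)$ combinatorial complexity. Since the lower bound for pairwise intersecting rays is trivial (the $\binom{n}{2}$ intersection vertices already give it), the whole difficulty is in forcing quadratic complexity while keeping the rays disjoint. By \cref{lem:diagramOfThree} the complexity of the diagram is driven by the number of mixed vertices, so my target is a configuration in which $\Omega(n)$ distinct rays each cross $\Omega(n)$ distinct circular arcs of bisectors, producing $\Omega(n^2)$ mixed vertices.

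The construction I would attempt is the following. Split the rays into two groups of size roughly $n/2$: a group of \emph{short} rays whose apexes are clustered closely together and whose directions are spread over a small angular range, and a group of \emph{long} rays placed far away and oriented so that they do not meet the first group. The key observation is \cref{remark:distance}: the circular part of each bisector $\rbis(r,s)$ is an arc of the bisecting circle $\cb(r,s)$ passing through the two apexes, and by tuning the apex positions and directions I can make the bisecting circles of pairs drawn from the clustered group all be large arcs that sweep across a common region $R$ of the plane. If the $\Theta(n^2)$ bisector arcs coming from pairs within the clustered group all pass through $R$, and each of the $\Theta(n)$ rays from the second group is routed so that its supporting ray passes through $R$ transversally, then each such ray will intersect $\Omega(n)$ of those arcs. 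Each transversal intersection of a ray with a circular edge that survives in the diagram is a mixed vertex (per the feature taxonomy in \cref{sec:preliminaries}), giving $\Omega(n^2)$ mixed vertices in total.

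The \textbf{main obstacle}, and the step I expect to require the most care, is arguing that a constant fraction of these $\Theta(n^2)$ candidate crossings actually \emph{appear on the diagram} rather than being masked by a third, closer ray. An intersection of ray $t$ with the arc of $\rbis(r,s)$ is a genuine mixed vertex of $\rvd(\cR)$ only if, at that point, $r$ and $s$ are the two nearest sites (distance $0$ for $t$, equal and minimal angular distance for $r,s$). So the construction must be engineered so that in region $R$ the clustered rays are consistently the nearest sites while a long ray passing through $R$ is even nearer (distance $0$ along the ray itself), and no other ray interferes. I would control this by making the angular distances from points of $R$ to the clustered rays genuinely competitive and nearly tied — exploiting that all their apexes are close and directions nearly equal — while keeping every other ray's angular distance strictly larger throughout $R$. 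The cleanest way to present this is likely to exhibit one concrete, parameterized arrangement (coordinates for apexes and explicit direction angles), then verify by the monotonicity statement of \cref{remark:distance} that the relevant arcs do cross $R$ and that the nearest-site condition holds at the crossings; the counting $\Omega(n)\cdot\Omega(n)=\Omega(n^2)$ then follows immediately, and it matches the upper bound announced in the contributions, confirming the bound is essentially tight.
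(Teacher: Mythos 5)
Your proposal follows essentially the same route as the paper: two groups of roughly $n/2$ rays each, one whose pairwise angular bisector arcs are concentrated along a common region, and one whose rays cross those arcs transversally so that the $\Theta(n)\cdot\Theta(n)$ crossings become mixed vertices. The obstacle you flag --- guaranteeing that the candidate crossings survive as genuine Voronoi features rather than being masked by a third ray --- is exactly where the paper's work lies, and it is resolved by a concrete arrangement: apices $p(r_i)=(i,0)$ on a line, rays $r_{m+1},\dots,r_{2m}$ vertical, and rays $r_1,\dots,r_m$ with directions $\alpha_i=\alpha_{i-1}+\epsilon_i$ where the increments $\epsilon_i>0$ are chosen inductively so small that, for $x<2m$, the arc of $\rbis(r_i,r_{i-1})$ hugs the $x$-axis and lies nested below the arc of $\rbis(r_{i-2},r_{i-1})$. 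This makes only the $\Theta(n)$ consecutive-pair bisectors relevant (rather than all $\Theta(n^2)$ arcs you propose to route through $R$), and each of the $m$ vertical rays then crosses each of these nested arcs at a genuine mixed vertex, giving each of $\rreg(r_1),\dots,\rreg(r_{m-1})$ about $n/2$ faces and hence $\Omega(n^2)$ total complexity.
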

\begin{proof}
  We give a constructive proof; the resulting diagram is illustrated in \cref{fig:lowerBoundsDiagram}. The  Voronoi regions of the $n/2 -1$ rays with the leftmost apices have $n/2$ bounded faces each.

  We set $n = 2m$ and let the apices $p(r_i) = (i,0)$, $i=1,\dotsc,2m$.
  For $i=m+1,\dotsc,2m$, let the direction of $r_i$ be vertically upwards. 
  For $i=1,\dotsc,m$, let the direction of $r_i$ be $\dir(r_i)=(\sin \alpha_i, \cos \alpha_i)$ with $\alpha_1 \in (3\pi/2,2\pi)$ and $\alpha_i = \alpha_{i-1} + \epsilon_i$ where $\epsilon_i > 0$ for $i=2,\dotsc,m$.
  We choose $\epsilon_i$ one by one, in the increasing order of $i$, so that both $r_i$ and $r_{i-1}$ have a face between any two consecutive upward shooting rays.
  This is always possible since we can choose $\epsilon_i$ small enough so that, at any $x$-coordinate $x < 2m$, the circular part of $\rbis(r_i, r_{i-1})$ is arbitrarily close to the $x$-axis, and thus, is below the circular part of $\rbis(r_{i-2}, r_{i-1})$.
\end{proof}

\begin{figure}[t]
    \centering
    \begin{minipage}[t]{0.45\textwidth}
        \centering
	    \includegraphics[width=\textwidth,page=1]{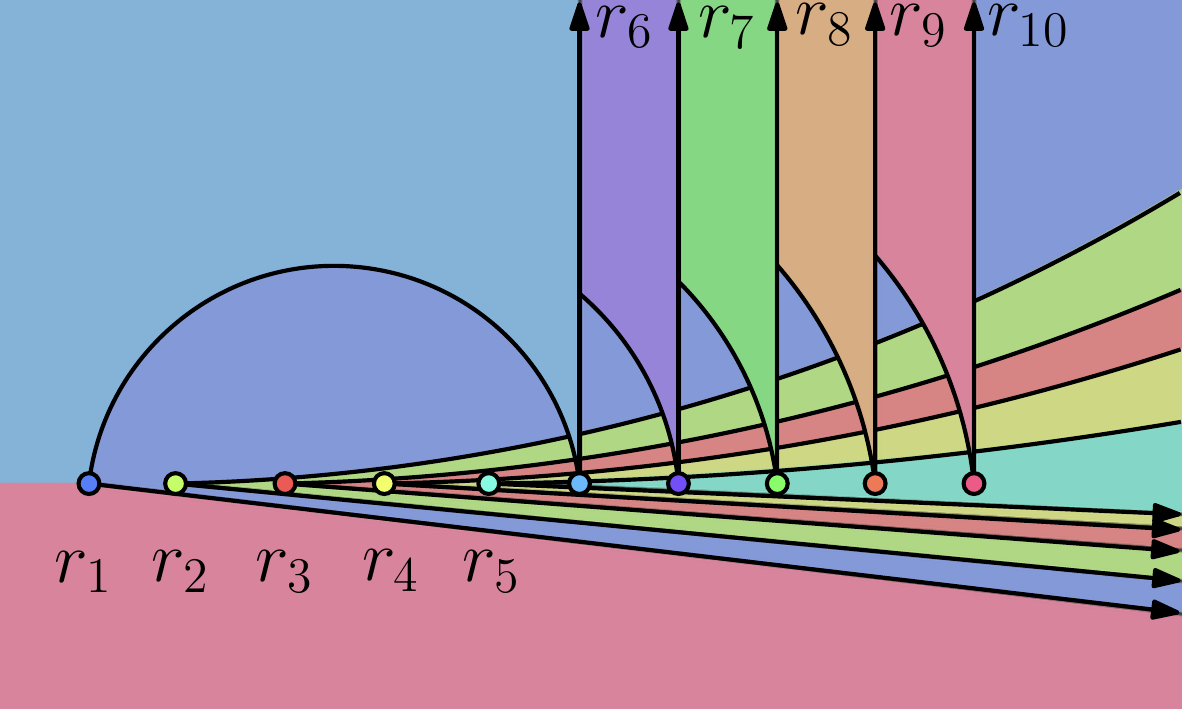}
        \caption{
    		A set $\cR$ of $n=10$ pairwise non-intersecting rays with $\rvd(\cR)$ having $\Theta(n^2)$ complexity.
    		The region $\rreg(r_i)$, $i=1, \dots, 4$, has $\Theta(n)$ faces.            
        }
        \label{fig:lowerBoundsDiagram}
    \end{minipage}
    \hfill
    \begin{minipage}[t]{0.53\textwidth}
        \centering
		\includegraphics[trim=10 0 0 0, clip, width=\textwidth,page=1]{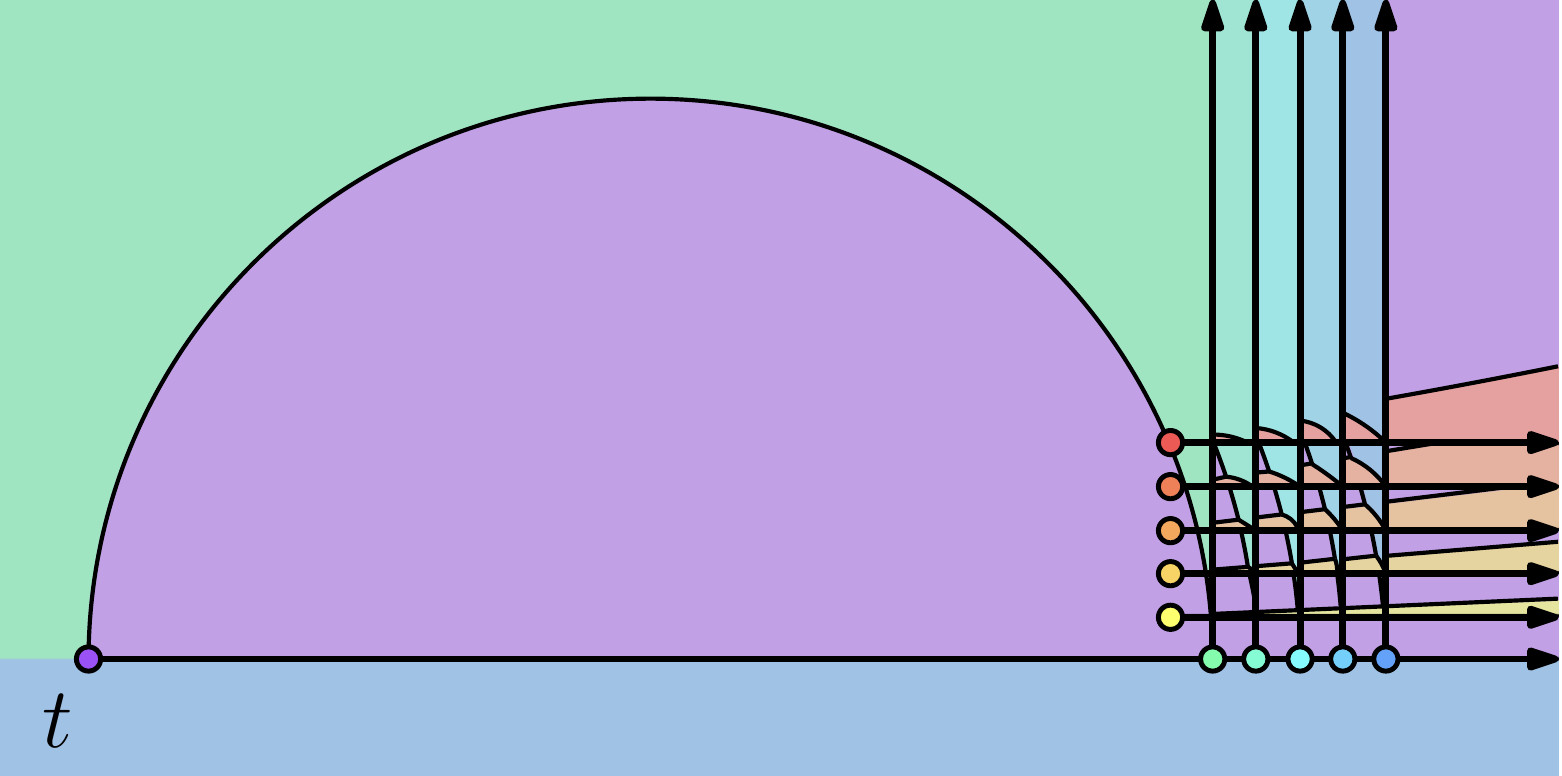}    
        \caption{
		A set $\cR$ of $n=11$ rays with $\rvd(\cR)$.
		The region $\rreg(t)$ has $\Theta(n^2)$ faces, one in each cell of the grid formed by
		the other 10 rays.
        }
        \label{fig:lowerBoundsRegion}
    \end{minipage}
\end{figure}

\begin{theorem}
  \label{lem:lowerBoundFace}
  A Voronoi region of $\rvd(\cR)$ has $\Theta(n^2)$ complexity in the worst case.
\end{theorem}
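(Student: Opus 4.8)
The plan is to establish matching bounds: an $O(n^2)$ upper bound on the complexity, and in particular the number of faces, of a single Voronoi region, together with an $\Omega(n^2)$ lower bound realized by an explicit construction such as the one in \cref{fig:lowerBoundsRegion}.

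For the upper bound, I would fix a ray $t \in \cR$ and recall that $\rreg(t) = \bigcap_{s \in \cR\setminus\{t\}} \dr(t,s)$, so that every boundary point of $\rreg(t)$ lies on some bisector $\rbis(t,s)$. By \cref{lem:NonParallelBisector}, each such bisector is contained in the union of the two rays $t,s$ and the bisecting circle $\cb(t,s)$. I would therefore consider the arrangement $\mathcal{A}$ formed by the $n$ rays of $\cR$ together with the $n-1$ bisecting circles $\cb(t,s)$. Any two of these $2n-1$ curves meet in $O(1)$ points (two circles meet at most twice, a circle and a line at most twice, two lines at most once), so $\mathcal{A}$ has $O(n^2)$ vertices, edges, and faces. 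Since $\partial\rreg(t)$ is covered by these curves, $\rreg(t)$ is a union of faces of $\mathcal{A}$, whence its combinatorial complexity, and in particular its number of faces, is $O(n^2)$.

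For the lower bound I would give an explicit construction in which a single region has $\Omega(n^2)$ faces; refer to \cref{fig:lowerBoundsRegion}. Take $t$ to be one distinguished ray, and let the remaining $n-1$ rays form two transversal families of $\approx (n-1)/2$ nearly horizontal (rightward) and $\approx (n-1)/2$ nearly vertical (upward) rays, positioned so that every horizontal ray crosses every vertical ray in a central region; their pairwise crossings then carve $\Theta(n^2)$ bounded grid cells. The ray $t$ is placed so that its angular distance is small throughout the grid region, so that $t$ is the nearest ray in the interior of each cell but is beaten inside thin zones adjacent to the grid rays. The key structural observation is that $\{\,x : \rdis(x,s) < \varepsilon\,\}$ is a thin wedge hugging the left side of a ray $s$ (an $\varepsilon$-floodlight aligned with $s$), and that across $s$ itself $\rdis(\cdot,s)$ jumps from nearly $0$ on the left to nearly $2\pi$ on the right. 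Consequently each grid ray dominates $t$ in a thin strip along its left side, and these strips together seal off the portion of $\rreg(t)$ inside each cell from those of the neighbouring cells: adjacent cells share a grid-ray edge, and the winning strip of that ray lies between the two candidate faces. This yields one bounded face of $\rreg(t)$ per cell, hence $\Omega(n^2)$ faces.

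The main obstacle is the verification of the lower-bound construction rather than the upper bound. Concretely, one must choose the directions and apices of the two ray families, and the position of $t$, so that simultaneously (i) at some interior point of every cell $t$ strictly beats all $n-1$ grid rays, including the non-adjacent ones, whose angular distances must all be shown to exceed $\rdis(\cdot,t)$ there, and (ii) along the whole boundary of each cell some grid ray beats $t$, so that the faces are genuinely disconnected and bounded. Controlling the angular distances to the far, non-bounding grid rays is the delicate part; I expect to handle it by taking the two families extremely close to the horizontal and vertical directions and spacing the apices so that, within the grid, the dominant competitor of $t$ in each cell is always one of the four rays bounding that cell, reducing the global comparison to the local picture described above.
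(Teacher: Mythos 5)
Your overall strategy coincides with the paper's: an $O(n^2)$ upper bound obtained by charging boundary vertices of $\rreg(t)$ to pairs of curves, and the grid-plus-distant-ray construction of \cref{fig:lowerBoundsRegion} for the lower bound. Your arrangement of the $2n-1$ curves (the $n$ rays and the $n-1$ circles $\cb(t,s)$) is a valid repackaging of the paper's count of $O(1)$ vertices per pair of sites in $\cR\setminus\{t\}$ via \cref{lem:diagramOfThree}, so the upper bound is fine.

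The step you explicitly leave open --- ensuring that $t$ beats \emph{all} grid rays, not just the four bounding ones, at some interior point of every cell --- is the only substantive gap, and the paper closes it more simply than your plan suggests. Taking the two families exactly axis-parallel (apices $p(r_i)=(i,0)$ shooting up and $p(s_j)=(0,j)$ shooting right), each square $R(i,j)$ is covered by the regions of just two grid rays in the Voronoi diagram of the grid rays alone; hence the nearest grid ray at any point of the square is automatically one of these two, and no comparison with far-away grid rays is ever needed. It then suffices to check that $\alpha(i,j):=\max_{x\in R(i,j)}\min\{\rdis(x,r_i),\rdis(x,s_j)\}$ exceeds $\arctan(1/m)$ for every cell, and that placing $\apex(t)=(-m^2,0)$ pointing right gives $\rdis(x,t)\le\arctan\bigl(m/(1+m^2)\bigr)<\arctan(1/m)$ throughout $[1,m]^2$. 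Your condition (ii), by contrast, is not delicate at all: every boundary point of a cell lies on a grid ray and hence has distance $0$ to that ray, so it is excluded from $\rreg(t)$; since each cell is a bounded face of the arrangement of the grid rays, the portion of $\rreg(t)$ inside a cell cannot connect to anything outside it. (Your picture of a ``winning strip lying between the two candidate faces'' is slightly off, since $\{x:\rdis(x,s)<\varepsilon\}$ hugs only the counterclockwise side of $s$, but the ray itself already suffices as the separator.) With the paper's concrete choices substituted for your ``nearly horizontal/vertical'' families, your argument goes through.
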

\begin{proof}
  Consider a ray $r \in \cR$ and its region $\rreg(r)$; all but at most $O(n)$ vertices (possible apex vertices) on the boundary of $\rreg(r)$ are defined by $r$ and a pair of other sites.
  There are $\Theta(n^2)$ pairs in $\cR \setminus \{r\}$, each inducing $O(1)$ vertices on $\rreg(r)$ (by \cref{lem:diagramOfThree}); so $\rreg(r)$ has $O(n^2)$ vertices, and thus, $O(n^2)$ combinatorial complexity.

  We now give a construction of $n=2m+1$ rays, where a single region has $\Theta(n^2)$ complexity; refer to the construction shown in \cref{fig:lowerBoundsRegion}.
  We first create a grid structure:
  for $i = 1,\ldots,m$, let $r_i$ be a ray with $p(r_i)= (i,0)$ \emph{shooting} vertically upward and let $s_i$ be a ray with $p(s_i)= (0,i)$ \emph{shooting} horizontally to the right.
  For all $(i,j) \in \{1, \ldots, m-1\}^2$, let $R(i,j)$ be the square $[i,i+1) \times [j,j+1)$.
  Each~square $R(i,j)$ is made up of two faces of $\rvd(\{r_1,\ldots,r_m,s_1,\ldots,s_m\})$, one~belonging to $\rreg(r_i)$ and one belonging to $\rreg(s_j)$.
  Now let $\alpha(i,j) := \allowbreak \max\{\, \min\{\rdis(x, r_i), \rdis(x, s_j)\} \mid x \in R(i,j)\,\}$, and let $\alpha_{\min} :=  \min\{\, \alpha(i,j) \mid (i,j) \in \{1, \ldots, m-1\}^2\,\}$.
  It is easy to see that $\alpha_{\min} > \arctan 1/m$.

  We now introduce another ray $t$, so that $\max\{\,\rdis(x,t) \mid x \in [1,m]^2\,\} < \alpha_{\min}$.
  One way to achieve this is to set $\apex(t) = (-m^2,0)$ and make $t$ \emph{shooting} horizontally to the right.
  This means that in each $R(i,j)$, for $(i,j) \in \{1, \ldots, m-1\}^2$, there is a point which will be visited by the ray $t$ before it is visited before any of the rays $r_i$ or $s_j$, meaning the region $\rreg(t)$ has a face in each square, which is $\Theta(n^2)$ faces in total.
\end{proof}

The above theorem directly implies an $O(n^3)$ upper bound on the complexity of $\rvd(\cR)$.
Next we 
show how the angular distance function can be adapted so that we can apply 
the general upper bounds of Sharir~\cite{sharir1994}.
As a by-product, we also obtain a construction algorithm for $\rvd(\cR)$.

\begin{theorem}\label{thm:rvdGeneralUpperBound}
  For any $\epsilon > 0$, $\rvd(\cR)$ has $O(n^{2+\epsilon})$ combinatorial complexity.
  Further, $\rvd(\cR)$ can be constructed in $O(n^{2+\epsilon})$ time.
\end{theorem}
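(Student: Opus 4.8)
The plan is to reduce the angular distance function to a setting where the theory of lower envelopes of surfaces applies, and then invoke the result of Sharir~\cite{sharir1994} as a black box. The key conceptual step is to represent each ray $r \in \cR$ by an algebraic (or piecewise-algebraic) \emph{distance surface} $F_r \colon \R^2 \to \R$ whose graph lies in $\R^3$, so that the vertical projection of the lower envelope of the $n$ surfaces $\{F_r\}$ onto the $xy$-plane is exactly $\rvd(\cR)$. The obstacle here is that $\rdis(\cdot, r)$ itself is not algebraic and, worse, has a discontinuity at $2\pi$ along the right side of $r$; so I would not use $\rdis$ directly. Instead I would use a monotone reparametrization of the angle — for instance, along any bisector the distance is monotone by \cref{remark:distance}, so the ordering of distances that determines the diagram is preserved by replacing the angle $\alpha$ with a rational function of the coordinates (such as a suitable signed cotangent/tangent-half-angle expression) on each of the regions where the angle avoids the discontinuity. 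Concretely, I would partition the plane into the $O(n)$ halfplanes determined by the supporting lines $l(r)$, and on each cell define an algebraic surface of bounded degree that agrees with the distance ordering; the bisector $\rbis(r,s)$ (two rays plus an arc of $\cb(r,s)$, by \cref{lem:NonParallelBisector}) is then the projection of the intersection curve of two such surfaces, which is algebraic of bounded degree.

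With such surfaces in hand, the next step is to verify the hypotheses of Sharir's theorem: the surfaces must be (pieces of) algebraic surfaces of bounded description complexity, and any pair of them must intersect in a curve whose projection has a bounded number of pieces, and any triple must meet in a bounded number of points. The bound on triple intersections is already essentially done: \cref{lem:diagramOfThree} shows that three rays induce at most one proper vertex, together with $O(1)$ total vertices, so any triple of distance surfaces crosses $O(1)$ times. The pairwise condition follows from the explicit description of $\rbis(r,s)$ as two line pieces and one circular arc. Granting these, Sharir's bound gives that the lower envelope of $n$ such surfaces has complexity $O(n^{2+\epsilon})$ for any $\epsilon>0$, and since the discontinuity and the halfplane partition only add an $O(n)$ overlay of lines, the total complexity of $\rvd(\cR)$ remains $O(n^{2+\epsilon})$, establishing the first assertion.

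For the construction, I would invoke the algorithmic version of the same machinery: the lower envelope of $n$ bounded-degree surface patches can be computed in $O(n^{2+\epsilon})$ time, again by Sharir's framework (and the associated randomized or deterministic divide-and-conquer algorithms for envelopes of surfaces). The diagram $\rvd(\cR)$ is recovered by overlaying the computed lower envelope with the $O(n)$ rays (which contribute the ray edges and the discontinuity seams) and labeling faces by the nearest site; the overlay and labeling cost is subsumed by the $O(n^{2+\epsilon})$ term.

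The main obstacle I expect is the first step: exhibiting distance surfaces that are simultaneously (i) of bounded algebraic degree, (ii) order-preserving with respect to $\rdis$ so that their lower envelope really is the diagram, and (iii) compatible with the $2\pi$ discontinuity. The discontinuity means a single global algebraic surface per ray cannot encode $\rdis$ faithfully, so the careful part is the halfplane/cell decomposition and the choice of a monotone algebraic surrogate for the angle on each cell, ensuring that along every bisector arc the surrogate inherits the strict monotonicity guaranteed by \cref{remark:distance}. Once the surfaces are correctly set up, the combinatorial and algorithmic bounds are an immediate application of \cite{sharir1994}, and the remaining work is purely bookkeeping of the $O(n)$ line overlay.
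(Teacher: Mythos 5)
Your proposal follows essentially the same route as the paper: replace $\rdis(\cdot,r)$ by an order-preserving surrogate built from a constant number of algebraic patches per ray, check that pairs and triples of the resulting surfaces intersect $O(1)$ times, and invoke Sharir's lower-envelope bound and algorithm for both the combinatorial and the construction claims. The only detail to adjust is the concrete surrogate: a tangent-half-angle expression is unbounded on the range $\alpha\in[0,\pi)$, so no values assigned to the range $[\pi,2\pi)$ can be placed above it and global order-preservation fails; the paper instead uses the bounded surrogate $1-\cos\alpha$ for $\alpha\in[0,\pi]$ and $3+\cos\alpha$ for $\alpha\in[\pi,2\pi)$, i.e.\ $1\mp x_1/\sqrt{x_1^2+x_2^2}$ in the ray's local coordinates, which yields the required piecewise-algebraic, correctly ordered patches.
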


\begin{proof}
  Each site $r$ induces a \emph{distance function} $d^r_\angle(x) := \rdis(x,r)$ which maps a point $x = (x_1,x_2) \in \mathbb{R}^2$ to its angular distance from $r$.
  Consider the lower envelope of the graphs of these distance functions in $3$-space.
  The diagram $\rvd(\cR)$ can be seen as the projection of this lower envelope to the plane.
For algebraic distance functions, Sharir~\cite{sharir1994} gives complexity bounds for this lower envelope accompanied with algorithmic results. The angular distance functions though are not algebraic. Our strategy is to apply the result from \cite{sharir1994} to functions $d^r_{\text{alg}}$ that are equivalent to the functions $d^r_{\angle}$ in the sense that the both set of functions would produce the same lower envelope, but each function $d^r_{\text{alg}}$ is made of a constant number of algebraic surface patches in $3$-space. More precisely, we want to find piece-wise algebraic functions $d^r_{\text{alg}}$ that fulfill the following property:
\begin{linenomath*}
\begin{align*}
    &&
    d^r_\angle(x)<d^s_\angle(x) \, \Longleftrightarrow \, d^r_{\text{alg}}(x) < d^s_{\text{alg}}(x)
\end{align*}
\end{linenomath*}
for all $r,s \in \cR$ and $x \in \R^2$.
  
  Without loss of generality, assume that $\apex(r)$ lies on the origin and $r$ is facing to the right in positive $x_1$-direction of the coordinate system.
  Let $x\in\mathbb{R}^2$ and $\alpha := d^r_\angle(x)$.
  Then we want to set $d^r_{\text{alg}}(x) := 1 - \cos(\alpha)$ if $0 \leq \alpha \leq \pi$, and $d^r_{\text{alg}}(x) := 3 + \cos(\alpha)$ if $\pi \leq \alpha < 2\pi$.
  The function $x \mapsto \cos(\alpha)$ is indeed algebraic since it is obtained by first scaling $x$ to unit length and then mapping it to its first coordinate.
  Then we have
  \begin{linenomath*}
  \begin{align*}
    && d^r_{\text{alg}}((x_1,x_2)) = 
    \begin{cases}
     0                             & \text{if } x_1=x_2=0, \\
     1 - \frac{x_1}{\sqrt{x_1^2+x_2^2}}  & \text{if } x_1\neq 0,\, x_2 \geq 0, \\
     3 + \frac{ x_1}{\sqrt{x_1^2+x_2^2}} & \text{otherwise.}
    \end{cases}
  \end{align*}
  \end{linenomath*}
  Since $d^r_{\text{alg}}$ consists of three patches, which are all algebraic and have simple domain boundaries, applying \cite{sharir1994} to these functions yields the claimed combinatorial and algorithmic results.
\end{proof}

\subsection{Brocard illumination of the plane}
\label{subsec:rvd_illumination}

We now look into the Brocard illumination problem in $\R^2$.
Recall that given a set of rays $\cR$, and an $\alpha$-floodlight aligned with each ray, 
the problem asks for the Brocard angle which is the minimum angle needed to illuminate a target domain. 
The Brocard angle of $\R^2$ is 
\begin{align*}
    &&
    \minangle = \max_{x \in \R^2}\min_{r \in \cR}\rdis(x,r).
\end{align*}

    \begin{figure}[b]
	\centering
	\begin{subfigure}[t]{0.48\textwidth}
		\centering
		\includegraphics[width=.92\textwidth,page=21]{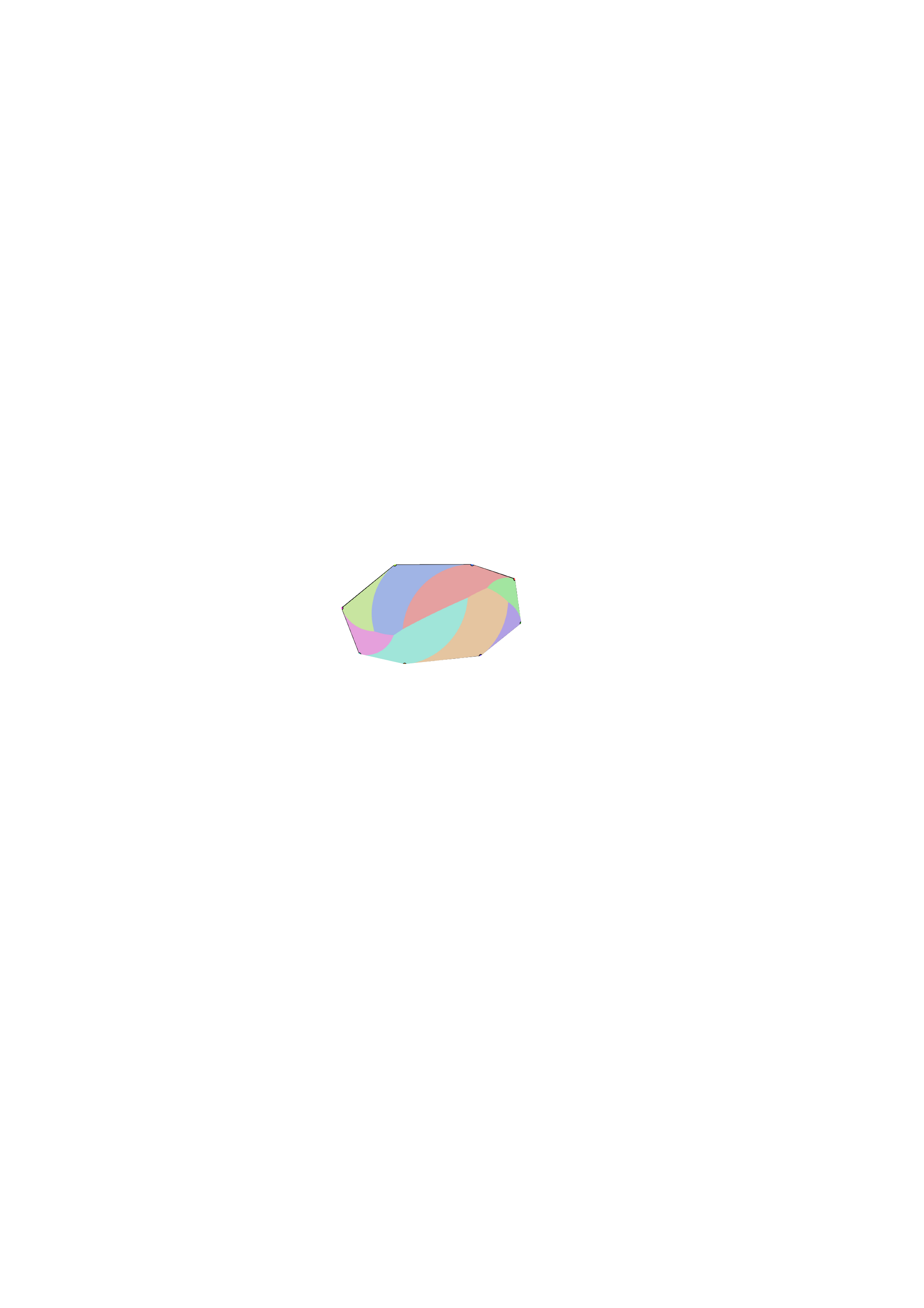}
		\caption{
			$\minangle$ is realized on a vertex of $\rvd(\cR)$, by rays $r,s,t$.
			Point $y'$ is further than $y$ to its nearest ray. 
   }
		\label{fig:rvdBrocardPlane1}
	\end{subfigure}		
	\hfill	
	\begin{subfigure}[t]{0.48\textwidth}
		\centering
		\includegraphics[width=.92\textwidth,page=22]{generalRVDbrocard.pdf}
		\caption{
			$\minangle$ is realized on ray $r$ at infinity, by ray $t$.
		}
		\label{fig:rvdBrocardPlane2}
	\end{subfigure}
	\caption{
		Two examples of the Brocard angle on a set $\cR$ of 4 rays in $\R^2$.  
	}
	\label{fig:rvdBrocardPlane}
\end{figure} 

Let $\minpoint \in \R^2$ be a point that realizes $\minangle$.
Conceptually, $\minpoint$ is the last point to be illuminated, assuming that all $n$ floodlights start with aperture $\alpha=0$ and simultaneously increase their apertures until the entire domain gets illuminated.
Although $\minpoint$ need not be unique, we show that it lies on $\rvd(\cR)$.

\begin{proposition}
	\label{prop:rvdRealizedVertex}
	The Brocard angle of a set $\cR$ of rays is realized at a vertex of $\rvd(\cR)$, or at a point at infinity along a ray in $\cR$.
\end{proposition}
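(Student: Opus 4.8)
The plan is to view the Brocard angle as $\minangle = \sup_{x \in \R^2} f(x)$, where $f(x) := \min_{r \in \cR} \rdis(x,r)$ is the distance of $x$ to its nearest ray, and to locate a point (possibly at infinity) realizing this supremum. First I would record that $\minangle > 0$, since any point lying off all the rays has positive distance to every ray. Then I would take a maximizing point $\minpoint$, or a maximizing sequence if the supremum is not attained at a finite point, and determine by a case analysis on the feature of $\rvd(\cR)$ containing it that it must be a proper vertex, or else that the supremum is approached at infinity along a ray.

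The heart of the finite case is a set of monotonicity observations showing that $f$ has no local maximum away from a proper vertex. In the interior of a region $\rreg(r)$ we have $f = \rdis(\cdot,r)$, which depends only on the direction of $x-\apex(r)$ and strictly increases as $x$ rotates counterclockwise about $\apex(r)$; hence from any interior point one can move to strictly increase $f$, and no interior local maximum exists. Along a circular edge, $\rdis(\cdot,r)=\rdis(\cdot,s)$ is strictly monotone by \cref{remark:distance}, so no local maximum lies in its relative interior. Along a ray edge (a segment of some ray) the distance to the site of either neighboring region is monotone, because the direction from a fixed apex to a point moving along a straight line varies monotonically; thus the extreme value over such an edge occurs at an endpoint. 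Consequently a finite maximizer can only be a vertex of $\rvd(\cR)$, and since mixed, intersection, and apex vertices all have a site at distance $0$ (so $f=0<\minangle$ there), it must be a proper vertex.

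If no finite maximizer exists, I would analyze the behavior of $f$ at infinity. Sending $x\to\infty$ along a direction $\theta\in S^1$ distinct from every $\dir(r)$, the direction of $x-\apex(r)$ tends to $\theta$ for every $r$, so $f(x)\to g(\theta):=\min_{r}(\text{counterclockwise angle from } \dir(r) \text{ to } \theta)$. The function $g$ is piecewise linear with unit slope, rising from $0$ just counterclockwise of each ray direction up to the length of the angular gap before the next one; hence $\sup_\theta g$ equals the largest angular gap between consecutive ray directions and is approached as $\theta$ tends to the counterclockwise-bounding direction of that gap, that is, along a ray of $\cR$ approached from its clockwise side. This gives the point at infinity along a ray, with the nearest ray being the clockwise-bounding ray of the gap, matching \cref{fig:rvdBrocardPlane2}.

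The hard part will be the discontinuity of $\rdis(\cdot,r)$ across the ray $r$: the value drops to $0$ on the ray while reaching almost $2\pi$ just to its clockwise side, so $f$ fails to be upper semicontinuous and the supremum need not be attained at a finite point. The delicate step is therefore to argue that this failure of attainment either still localizes a near-maximizer at a proper vertex, via the monotonicity along ray edges whose endpoints are diagram vertices, or forces the maximizing sequence to escape to infinity along a ray, while ruling out any other limiting behavior.
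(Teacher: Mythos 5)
Your proposal is correct and follows essentially the same route as the paper: rule out an interior maximizer by rotating counterclockwise about the apex of the owning ray, then use the monotonicity of the distance along circular and ray edges (\cref{remark:distance}) to push the maximum to a vertex or to a point at infinity along a ray. You go further than the paper in two harmless ways --- pinning down that a finite maximizer must be a \emph{proper} vertex, and spelling out the at-infinity/non-attainment analysis that the paper leaves implicit --- but the core argument is the same.
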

\begin{proof}
    We first show that $\minpoint$ lies on $\rvd(\cR)$.
    Suppose, for the sake of contradiction, that $\minpoint$ does not lie on $\rvd(\cR)$, but instead, it lies inside the Voronoi region of a ray $r$.
    Then, we can always find a point with larger angular distance from $r$
    by simply moving in counterclockwise direction on the circle with center $\apex(r)$ and radius $d(\apex(r),\minpoint)$, deriving a contradiction;
    see for example the points $y$ and $y'$ in \cref{fig:rvdBrocardPlane1}.

    As pointed out in \cref{remark:distance}, the distance along a circular edge is monotone. Thus, the distance at one of the endpoints of the edge is at least as big as the distance at any point in the interior of the edge.
    This argument also holds for the distances along ray edges.
    Hence, a point with maximum distance $\minpoint$ is either a vertex of $\rvd(\cR)$ or a point at infinity on a ray of $\cR$, concluding the proof.
    Refer to \cref{fig:rvdBrocardPlane} for an illustration of the two cases.
\end{proof}

The above implies that we can find $\minpoint$, and hence $\minangle$, by first constructing $\rvd(\cR)$ in $O(n^{2+\epsilon})$ time 
and then traversing the diagram to find the vertex of maximum distance to its nearest neighbors.
$\rvd(\cR)$ is a plane graph, so it can be traversed in time linear in its size using standard methods.
This results in the following. 

\begin{theorem}
\label{thm:rvdFindBrocardPlane}
	The Brocard angle of a set $\cR$ of $n$ rays can be found in $O(n^{2+\epsilon})$ time. 
\end{theorem}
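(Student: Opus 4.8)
The plan is to turn \cref{prop:rvdRealizedVertex} into an algorithm. Since the maximizer $\minpoint$ of $\minangle = \max_{x}\min_{r \in \cR}\rdis(x,r)$ is guaranteed to be either a vertex of $\rvd(\cR)$ or a point at infinity along a ray, it suffices to enumerate all such candidate locations, evaluate the nearest-site distance $\min_{r \in \cR}\rdis(\cdot,r)$ at each, and return the largest value; by the proposition, the maximum over this finite candidate set equals $\minangle$.

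First I would invoke \cref{thm:rvdGeneralUpperBound} to construct $\rvd(\cR)$ in $O(n^{2+\epsilon})$ time, obtaining a plane-graph representation (for instance, a doubly connected edge list) of size $O(n^{2+\epsilon})$. With such a representation in hand, the diagram can be traversed in time linear in its size, i.e.\ in $O(n^{2+\epsilon})$ time, using standard methods. For each vertex $v$ encountered during the traversal I evaluate the candidate value $\min_{r}\rdis(v,r)$. This costs only $O(1)$ per vertex: by the vertex classification of \cref{sec:preliminaries}, $v$ is incident to a constant number of edges, each induced by a known site, and $v$ is equidistant to the sites inducing its circular edges; hence the nearest-site distance at $v$ is realized by one of these $O(1)$ incident sites and can be read off directly by computing $\rdis(v,r)$ for a single inducing ray $r$. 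Summing over all $O(n^{2+\epsilon})$ vertices keeps the running time at $O(n^{2+\epsilon})$.

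It remains to handle the point-at-infinity case of \cref{prop:rvdRealizedVertex}. For this I would walk along the $n$ unbounded faces of the diagram (one per ray, by \cref{lem:oneUnbounded}) and, on each unbounded edge, compute the limit of $\min_{r}\rdis(\cdot,r)$ as the point recedes to infinity. As a point $x$ moves to infinity in a fixed direction, the apex of each ray becomes negligible and $\rdis(x,r)$ converges to a value depending only on $\dir(r)$ and that direction; thus the limiting distance along each unbounded edge is a closed-form expression computable in $O(1)$ time. Including all these limiting values among the candidates and taking the overall maximum—together with the vertex values—yields $\minangle$. Since there are $O(n^{2+\epsilon})$ edges in total, this phase also stays within the budget, and the construction of $\rvd(\cR)$ dominates the overall running time of $O(n^{2+\epsilon})$.

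The only genuine obstacle is the bookkeeping for the at-infinity candidates: one must be sure that the supremum of $\min_{r}\rdis$ over the open plane is actually attained within the finite candidate set and is not lost at infinity. This is exactly what \cref{prop:rvdRealizedVertex} guarantees, so once the limiting distances along the unbounded edges are folded into the candidate set, correctness follows immediately and no separate argument is needed.
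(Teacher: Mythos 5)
Your proposal is correct and follows essentially the same route as the paper: construct $\rvd(\cR)$ via \cref{thm:rvdGeneralUpperBound} in $O(n^{2+\epsilon})$ time, then traverse the plane graph in time linear in its size, using \cref{prop:rvdRealizedVertex} to restrict the candidates to diagram vertices and points at infinity along rays. Your additional details (the $O(1)$ per-vertex evaluation from the incident sites, and the closed-form limiting distances on unbounded edges) are correct elaborations of what the paper leaves implicit.
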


We conclude this section by giving tight bounds on the value of the Brocard angle.

\begin{proposition}
    \label{lem:boundAnglePlane}
    Given a set $\cR$ of $n$ rays, the range of values of the Brocard angle is $[2\pi/n,2\pi]$.
\end{proposition}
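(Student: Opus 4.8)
The plan is to establish the two bounds separately and then argue that each is tight. For the upper bound I would use directly the fact, noted right after \cref{def:distance}, that $\rdis(x,r)\in[0,2\pi)$ for every point $x$ and every ray $r$. Hence for any $x$ we have $\min_{r\in\cR}\rdis(x,r)<2\pi$, and taking the supremum over $x\in\R^2$ yields $\minangle\le 2\pi$. To see that this cannot be improved, I would cluster all $n$ directions $\dir(r)$ into an arbitrarily small arc of $S^1$: then, by the lower-bound analysis below, $\minangle$ is forced arbitrarily close to $2\pi$, so $2\pi$ is the supremum of achievable values (attained only in the degenerate parallel limit).

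For the lower bound the key observation is that the angular distance becomes transparent at infinity. If $x$ recedes to infinity in a direction $\theta\in S^1$, then for every ray $r$ the direction from the fixed apex $\apex(r)$ to $x$ tends to $\theta$, so $\rdis(x,r)\to(\theta-\dir(r))\bmod 2\pi$. Thus, along points at infinity, $\min_{r\in\cR}\rdis(x,r)$ tends to $\min_{r\in\cR}(\theta-\dir(r))\bmod 2\pi$, which is precisely the counterclockwise gap from $\theta$ back to the nearest preceding ray direction. The $n$ directions $\dir(r)$ partition $S^1$ into $n$ arcs whose lengths sum to $2\pi$, so by averaging the largest such gap $g$ satisfies $g\ge 2\pi/n$. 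Choosing $\theta$ just before the counterclockwise endpoint of a largest gap and letting $x\to\infty$ in that direction, $\min_{r\in\cR}\rdis(x,r)$ approaches $g$; hence $\minangle\ge g\ge 2\pi/n$.

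Tightness of the lower bound I would show with the (degenerate) configuration of $n$ rays sharing a common apex $O$ and having equally spaced directions $\dir(r_k)=2\pi k/n$. For such rays every finite point $x\neq O$ lying in direction $\psi$ from $O$ satisfies $\rdis(x,r_k)=(\psi-2\pi k/n)\bmod 2\pi$, independent of the distance from $x$ to $O$, so $\min_k\rdis(x,r_k)<2\pi/n$; the same bound holds at infinity. Consequently $\minangle=2\pi/n$, matching the lower bound. This uses a relaxation of the standing ``distinct apices'' assumption, which is permissible when exhibiting extremal examples.

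The main obstacle I anticipate is not the arithmetic but the discontinuity of $\rdis$ across each ray: the per-direction quantity $\min_{r\in\cR}(\theta-\dir(r))\bmod 2\pi$ jumps from its supremum $g$ down to $0$ exactly at a ray direction, so $g$ is only a one-sided limit and need not be attained at any single point. I would therefore phrase the lower bound through a sequence of points tending to infinity (or invoke \cref{prop:rvdRealizedVertex} to locate where the maximum is realized), concluding $\minangle\ge 2\pi/n$ from the fact that $\minangle\ge g-\delta$ for every $\delta>0$. The matching upper endpoint $2\pi$ is likewise a supremum approached in the parallel limit rather than attained at a finite point, which explains why the stated range is the closed interval $[2\pi/n,2\pi]$.
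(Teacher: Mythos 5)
Your proof is correct and follows essentially the same route as the paper: a pigeonhole over the circle of directions at infinity for the $2\pi/n$ lower bound (your version is a more careful rendering of the paper's informal ``the apertures must cover all directions at infinity, so their sum is at least $2\pi$'' argument), the common-apex equally-spaced configuration for its tightness, and parallel rays for the upper endpoint $2\pi$. The only piece you leave implicit is that every intermediate value in $(2\pi/n,2\pi)$ is also attained --- needed since the statement asserts the range is the whole interval --- but this follows in one line from your own gap characterization: for a common-apex configuration with direction gaps $g_1,\dots,g_n$ summing to $2\pi$, the Brocard angle equals $\max_i g_i$, which can be made any value in $[2\pi/n,2\pi)$.
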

\begin{proof}
	For the upper bound consider a set $\cR$ of $n$ parallel rays:
	let $r_i$ have $p(r_i) = (i,0)$ and $\dir(r_i)=(1,0)$ for $i \in \{0,\dots,n-1\}$; see the example in \cref{fig:rvdBrocardAnglePlaneUPPER}.
	Observe that the last point to be illuminated is the point on $r_0$ at infinity, i.e., point $(0,+\infty)$, which will be illuminated by $r_{n-1}$ when $\alpha$ reaches $2\pi$;
	hence the upper bound follows.

\begin{figure}[b]
    \centering
    \begin{subfigure}[t]{0.48\textwidth}
        \centering
        \includegraphics[width=0.92\textwidth,page=2]{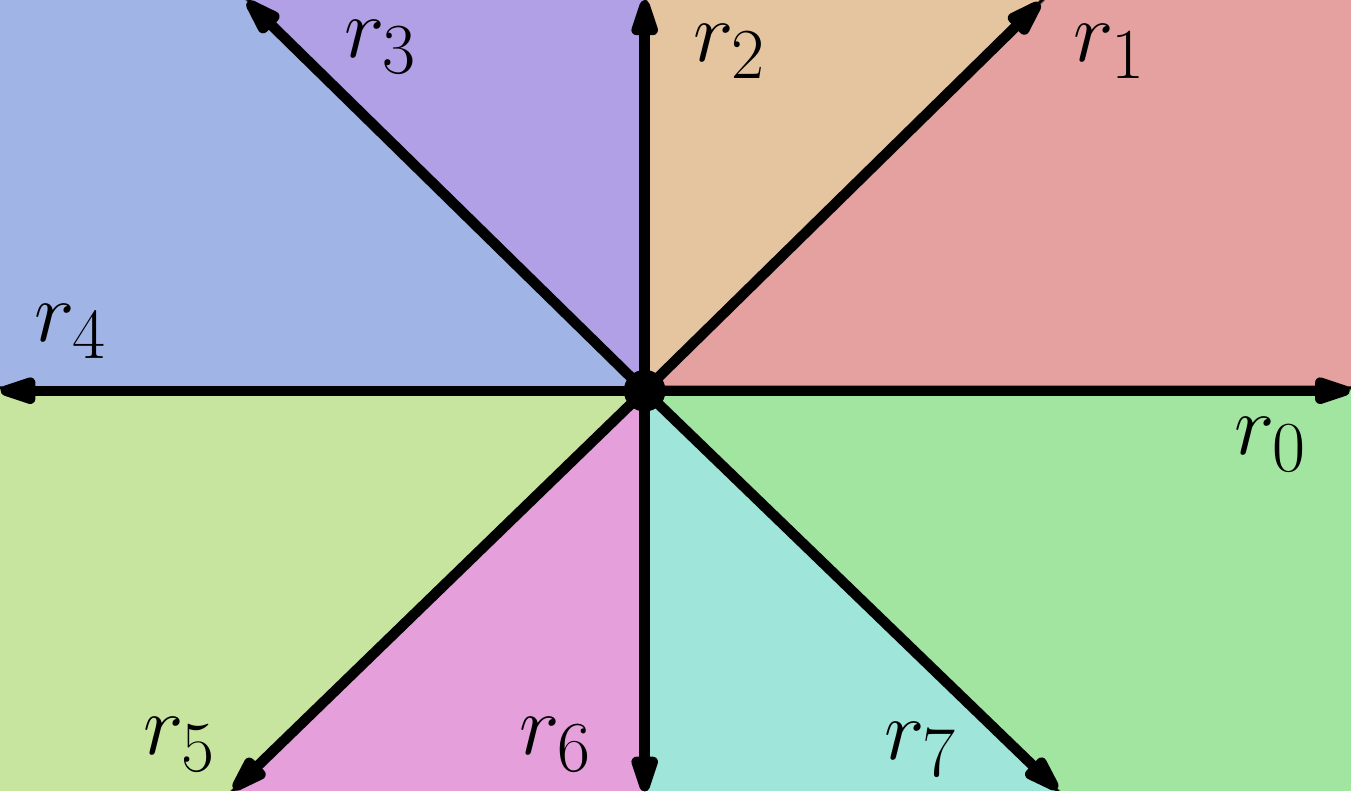}
        \caption{
            $\minangle = 2\pi$.
        }
        \label{fig:rvdBrocardAnglePlaneUPPER}
    \end{subfigure}
    \hfill
    \begin{subfigure}[t]{0.48\textwidth}
        \centering
        \includegraphics[width=.92\textwidth,page=1]{rvdBrocardBounds.pdf}
        \caption{
            $\minangle = 2\pi/n$.
        }
    \label{fig:rvdBrocardAnglePlaneLOWER}
    \end{subfigure}  
    \caption{
			Sets of $8$ rays realizing the bounds of the Brocard angle in $\R^2$.    
	}
  \label{fig:rvdBrocardAnglePlane}
\end{figure}
 
	For the lower bound, consider that in order to illuminate the entire $\R^2$, all the points \emph{at infinity} should also be illuminated.
	To illuminate such points, the sum of the angles of all rays, should be at least $2\pi$.
	Hence, in the best case, a point at infinity is seen by exactly one ray, and a $2\pi/n$ lower bound follows.
	A construction achieving the $2\pi/n$ bound is the following.
	Let $\cR$ be a set of $n$ rays having apex at $(0,0)$ and with the property that any two consecutive rays have an angular difference of $2\pi/n$;
	see the example in \cref{fig:rvdBrocardAnglePlaneLOWER}.
	The last points to be illuminated will be all the points on the right side of each ray $r_i$.
	These points are illuminated simultaneously by $r_{i-1}$ when $\alpha$ reaches $2\pi/n$.
	Further, the above construction can be easily adapted to attain any value in $(2\pi/n, 2\pi)$, by expanding a wedge formed by two consecutive rays and shrinking all the others accordingly.
\end{proof}

\section{Rotating Rays Voronoi diagram in a convex polygon}
\label{sec:polygon}

In this section we describe a linear time algorithm to construct the Rotating Rays Voronoi Diagram restricted to the interior of a convex polygonal region.
We also show how to use this algorithm to compute the Brocard angle of a convex polygon in optimal linear time.

Throughout this section we use the following notation.
We denote by $\Pol$ a convex polygon with $n$ vertices, and by $v_1,\ldots,v_n$ the vertices of $\Pol$ labeled by appearance while traversing the boundary of $\Pol$ in counterclockwise direction.
For the sake of simplicity, we assume that arithmetic operations on indices are taken modulo $n$.
We denote with $\cR_{\Pol} = \{ r_1,\ldots,r_n \}$ the set of $n$ rays such that the ray $r_i$ leaves the vertex $v_i$ and passes through the vertex $v_{i+1}$, see \cref{fig:convex_instance_intro}.
We finally denote with $\pvd(\cR_{\Pol}):= \rvd(\cR_{\Pol}) \cap \Pol$ the Rotating Rays Voronoi Diagram of $\cR_{\Pol}$ restricted to the interior of $\Pol$.
Examples of this diagram are shown in \Cref{fig:the_problem2,fig:convex_parallel}.

\begin{figure}[t]
  \centering
  \begin{minipage}[t]{0.37\textwidth}
    \centering
    \includegraphics[width=\textwidth, page=1]{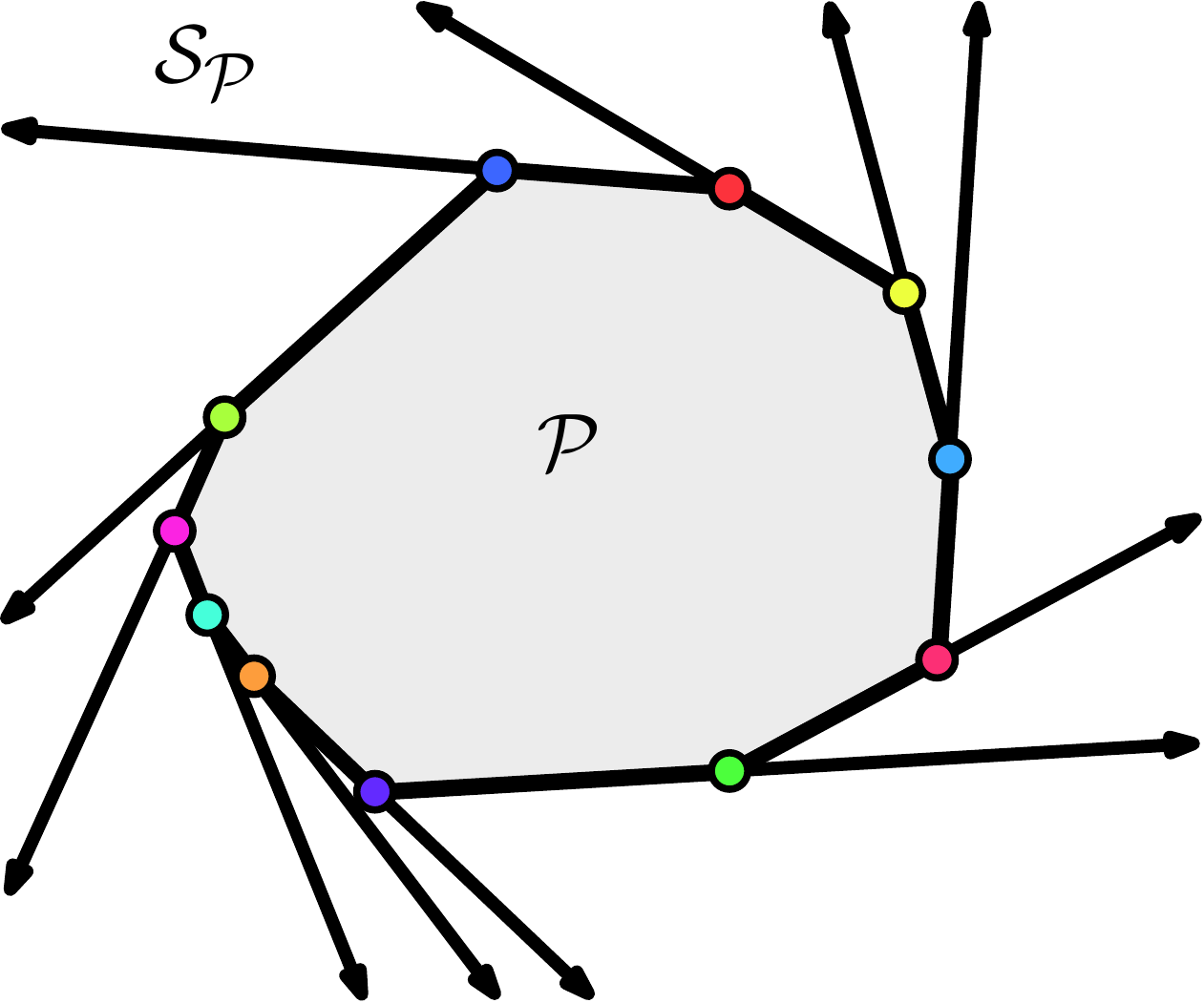}%
    \caption{A convex polygon $\Pol$ and the set of rays $\cR_\Pol$.}
    \label{fig:convex_instance_intro}
  \end{minipage}
  \hfill
  \begin{minipage}[t]{0.6\textwidth}
    \centering
    \includegraphics[trim=0 30 0 30, clip, width=0.8\textwidth, page=1]{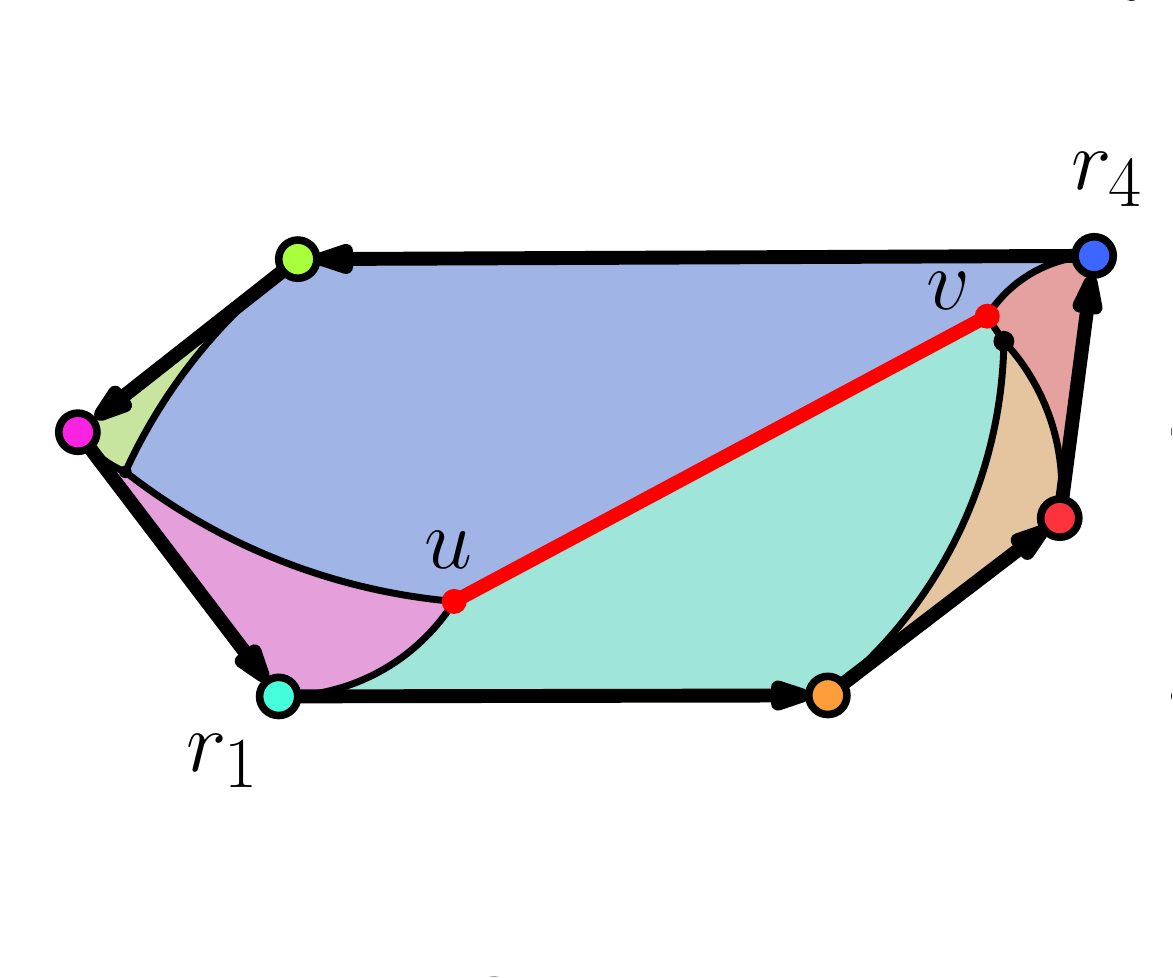}%
    \caption{A convex polygon $\Pol$ with two anti-parallel edges $r_1,r_4$. All points on edge $uv$ of $\pvd(\cR_\Pol)$ realize $\minangle$.
    }
    \label{fig:convex_parallel}
  \end{minipage}
\end{figure}

This section is organized as follows.
In \cref{subsec:pvdProperties} we describe a set of basic properties of $\pvd(\cR_{\Pol})$.
In \cref{subsec:pvdNlogn} we show that $\pvd(\cR_{\Pol})$ can be computed in $O(n\log n)$ time and $O(n)$ space.
Our main result is presented in \cref{subsec:pvdLinear,subsec:pvd4diagrams,subsec:pvdMerging} showing that the diagram $\pvd(\cR_{\Pol})$ can be computed in $\Theta(n)$ time and $O(n)$ space.
Finally, in \cref{subsec:pvdBrocard} we discuss the implications of this result to the computation of the Brocard angle of $\Pol$, and to related illumination problems.

\subsection{Properties of the diagram}
\label{subsec:pvdProperties}

For the sake of simplicity, throughout this section we make the following assumptions.
First, no three vertices of $\Pol$ are collinear.
Second, no point in the boundary or the interior of $\Pol$ is equidistant to four rays of $\cR_{\Pol}$; 
this implies that all vertices of $\pvd(\cR_{\Pol})$ are incident to at most three edges.
Finally, we assume that there are no parallel edges in $\Pol$, i.e., there are no anti-parallel rays in $\cR_\Pol$.
The last assumption guarantees that the Brocard angle is realized at a unique point, which is a vertex of $\rvd(\cR_\Pol)$.
Recall from \cref{sec:preliminaries} that the bisector of two anti-parallel edges $r$ and $s$ contains the line segment connecting $\apex(r)$ and $\apex(s)$.
Thus, if there are anti-parallel rays, the Brocard angle may be realized on any point of a Voronoi edge, which is part of the bisector of two anti-parallel edges; see, e.g., the edge $\ov{uv}$ in \cref{fig:convex_parallel}.

In the following, we present some useful properties of $\pvd(\cR_\Pol)$.

\subparagraph{Disk diagram.}
We first define an auxiliary Voronoi diagram, the \emph{Disk Diagram ($\dd$)}, whose system of bisectors consists solely of the bisecting circles.
This diagram is simpler to study and coincides with the rotating rays Voronoi diagram within the convex polygon $\Pol$.

Formally, given two rays $r$ and $s$, we define the \emph{$\dd$ bisector} of $r$ and $s$ to be the entire bisecting circle $\cb(r,s)$.
The \emph{$\dd$ dominance region} of $r$ over $s$, denoted by $\dr_D(r,s)$, is either the interior or the exterior of this circle, depending on the angular difference of the two rays.
If $\diff(r,s) < \pi$, then $\dr_D(s,r)$ is the interior of $\cb(r,s)$ and $\dr_D(r,s)$
the exterior, and the other way round if $\diff(r,s) \geq \pi$.

The DD region of a ray $r \in \cR_\Pol$ is $\dreg(r) := \bigcap_{s \in \cR_\Pol \setminus \{r\}} \dr_D(r,s)$.
The disk diagram is $\dd(\cR_\Pol)= \R^2\setminus \bigcup_{r \in \cR_\Pol}  \dreg(r)$;
see \cref{fig:diskDiagram2}. 

The disk diagram does not necessarily cover $\R^2$, for $n\geq 3$. 
This is because there are areas having a cyclic dominance relation among some sites; see for example the white/uncolored region in \cref{fig:diskDiagram2}.

\begin{figure}[t]
  \centering
    \begin{subfigure}[t]{0.48\textwidth}
    \includegraphics[width=\textwidth,page=1]{rvdConvexNlogn.pdf}
    \caption{
    The diagram $\pvd(\cR_{\Pol})$ shown highlighted.
    The diagram $\rvd(\cR_{\Pol})$ in $\mathbb{R}^2 \setminus \Pol$ is shown faded.
    }
    \label{fig:diskDiagram1}
    \end{subfigure}
    \hfill
    \begin{subfigure}[t]{0.48\textwidth}
    \includegraphics[width=\textwidth,page=2]{rvdConvexNlogn.pdf}
    \caption{The diagram $\dd(\cR_{\Pol})$.
    The diagram inside $\Pol$ is shown highlighted.
    }  \label{fig:diskDiagram2}
    \end{subfigure}
  \caption{
    A convex polygon $\Pol$ with five vertices together with $\pvd(\cR_{\Pol})$ and $\dd(\cR_{\Pol})$.
  }
  \label{fig:diskDiagram}
\end{figure}

With this definition, every point in the neighborhood of a circular edge in $\rvd(\cR_\Pol)$ is associated to the same ray in $\rvd(\cR_\Pol)$ and in $\dd(\cR_\Pol)$.
Since inside $\Pol$ the rotating rays Voronoi diagram consists only of circular edges, it follows that $\rvd(\cR_\Pol)$ and $\dd(\cR_\Pol)$ are exactly the same in $\Pol$; see \cref{fig:diskDiagram}.

\begin{lemma}
  \label{lem:rvdConvexDiskDiagram}
  Each region $\dreg(r)$ of $\dd(\cR)$ is connected and contains $\apex(r)$ on its boundary.
\end{lemma}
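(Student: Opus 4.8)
The plan is to prove two separate claims about each DD region $\dreg(r)$: that $\apex(r)$ lies on its boundary, and that the region is connected. I will handle the boundary claim first, since it is more elementary, and then use the structure of the bisecting circles to argue connectedness.

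First I would establish that $\apex(r) \in \partial\dreg(r)$. The key observation is that for every other ray $s \in \cR_\Pol \setminus \{r\}$, the bisecting circle $\cb(r,s)$ passes through $\apex(r)$ by construction (recall from the preliminaries that $\cb(r,s)$ is the circle through $I = l(r)\cap l(s)$, $\apex(r)$, and $\apex(s)$). Thus $\apex(r)$ lies on the boundary of every DD dominance region $\dr_D(r,s)$. Since $\dreg(r) = \bigcap_{s} \dr_D(r,s)$ is the intersection of these regions, the point $\apex(r)$ lies on the boundary of the intersection, provided $\apex(r)$ is not in the exterior of any $\dr_D(r,s)$. I would verify that $\apex(r)$ is never strictly outside a dominance region — intuitively, since it sits exactly on every defining circle, it can only be on the boundary of the intersection, not excluded from its closure. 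This needs a short argument that a neighborhood of $\apex(r)$ meets $\dreg(r)$, which follows because the rays emanate from the convex polygon's boundary and the local geometry near $\apex(r)$ places it on the correct side.

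The harder part is connectedness. The plan is to show that $\dreg(r)$ is \emph{star-shaped} with respect to $\apex(r)$, which immediately yields connectedness. For this I would examine, for each fixed $s$, the shape of $\dr_D(r,s)$: it is either the interior or the exterior of the circle $\cb(r,s)$, determined by whether $\diff(r,s) < \pi$. Since every such circle passes through $\apex(r)$, each dominance region is bounded by a circle through $\apex(r)$, and a disk (or the complement of a disk) through a boundary point is star-shaped with respect to that point. The intersection of star-shaped-with-respect-to-$\apex(r)$ sets is again star-shaped with respect to $\apex(r)$, so $\dreg(r)$ inherits this property and is therefore connected.

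The main obstacle I anticipate is the case analysis when $\dr_D(r,s)$ is the \emph{exterior} of a disk (the case $\diff(r,s) \geq \pi$): the exterior of a disk is not convex, and star-shapedness with respect to a boundary point requires care. A ray from $\apex(r)$ into the exterior region could, in principle, re-enter the disk, breaking the star property. I would need to check that, because $\apex(r)$ lies \emph{on} the circle rather than outside it, every ray from $\apex(r)$ either stays entirely in the exterior or enters the disk exactly once and never returns — so the portion of the ray lying in $\dr_D(r,s)$ is a single connected segment containing $\apex(r)$. Confirming this geometric fact (and that it survives intersecting over all $s$) is the crux; once it is in hand, connectedness of $\dreg(r)$ follows directly.
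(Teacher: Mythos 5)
Your first claim, that $\apex(r)$ lies on the boundary of $\dreg(r)$, matches the paper's argument: every bisecting circle $\cb(r,s)$ passes through $\apex(r)$, so $\apex(r)$ lies on the boundary of every $\dr_D(r,s)$ and hence of their intersection. (Your worry that one should also check $\apex(r)$ is in the closure of the intersection is legitimate, and the paper is equally brief on this point, so I would not count it against you.)

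The connectedness argument, however, has a genuine gap, and it is exactly at the place you flag as ``the crux.'' When $\dr_D(r,s)$ is the \emph{exterior} of $\cb(r,s)$, that set is not star-shaped with respect to the boundary point $\apex(r)$, and your proposed repair is based on a false premise. A line through $\apex(r)$ meets $\cb(r,s)$ in at most one further point $q$; if a ray from $\apex(r)$ does cross the circle at such a $q$, then the open chord from $\apex(r)$ to $q$ lies in the \emph{interior} of the disk, so the portion of that ray lying in the exterior is the piece beyond $q$ --- a single interval, but one that is \emph{detached} from $\apex(r)$, not ``a segment containing $\apex(r)$'' as you assert. Consequently, after intersecting over all $s$, all you can conclude is that every ray from $\apex(r)$ meets $\dreg(r)$ in a single (possibly empty) interval. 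This radial convexity does not imply connectedness: a set consisting of two annular sectors on opposite sides of $\apex(r)$ also meets every ray from $\apex(r)$ in an interval yet is disconnected. Since $\cR_\Pol$ generally contains pairs with $\diff(r,s)$ on either side of $\pi$, both the interior and the exterior case occur, so the star-shapedness route cannot be completed as stated. The paper avoids this entirely by applying an inversion of the plane centered at $\apex(r)$: each circle $\cb(r,s)$ maps to a line, each $\dr_D(r,s)$ maps to an open halfplane, the intersection of halfplanes is convex and hence connected, and since inversion is a homeomorphism of the punctured plane, connectedness pulls back to $\dreg(r)$. If you want to keep a direct geometric argument, you would need to prove something strictly stronger than what you have, e.g.\ that the interval on each ray actually has $\apex(r)$ as a limit point, which is false in the exterior case; the inversion argument is the clean fix.
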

\begin{proof}
	By definition, the region $\dreg(r)$ is formed by 
	the intersection of $n-1$ disks or their counterparts.
	The boundary of each of these disks is $\cb(r,s)$ for some $s$, and since by definition the apex $\apex(r)$ lies on $\cb(r,s)$, it also lies on the boundary of the intersection of all these $\cb(r,s)$ disks.
	
	To see that the region $\dreg(r)$ is connected, we perform an inversion of the plane using $\apex(r)$ as the inversion center, and a circle of arbitrary radius as the inversion circle.
	This inversion maps circles passing through the inversion center to lines passing through the inversion center, so each dominance region $\dr_D(r,s)$ maps to a halfplane.
	The intersection of halfplanes is connected, and since the inversion preserves connectivity, then region $\dreg(r)$ is also connected.
\end{proof}

\begin{corollary}
 \label{cor:pvdComplexity}
 The truncated portion of the diagram $\dd(\cR_\Pol)$ within $\Pol$ has a tree structure and is of $\Theta(n)$ complexity. It coincides with $\pvd(\cR_\Pol)$.
\end{corollary}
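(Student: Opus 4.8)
The statement bundles three claims, which I would dispatch in increasing order of difficulty. The coincidence with $\pvd(\cR_{\Pol})$ is immediate: by definition $\pvd(\cR_{\Pol}) = \rvd(\cR_{\Pol}) \cap \Pol$, and the paragraph preceding the corollary already established that $\rvd(\cR_{\Pol})$ and $\dd(\cR_{\Pol})$ agree inside $\Pol$, since there the rotating rays diagram consists only of circular edges, which are precisely the disk-diagram bisectors. It therefore remains to prove the tree structure and the $\Theta(n)$ bound, working entirely with $\dd(\cR_{\Pol}) \cap \Pol$. The key input for both is \cref{lem:rvdConvexDiskDiagram}: every region $\dreg(r)$ is connected and carries $\apex(r)$ on its boundary, and since each apex is a polygon vertex $v_r$, this anchors every region to $\partial\Pol$.

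The plan for acyclicity is a trapped-region argument. Suppose, for contradiction, that the edges of $\dd(\cR_{\Pol}) \cap \Pol$ contain a cycle $\gamma$. Being built from circular edges, $\gamma$ lies in the interior of $\Pol$ and bounds a topological disk $\Delta \subset \mathrm{int}\,\Pol$. Pick a point $x_0 \in \Delta$ lying in some open region $\dreg(r)$. By \cref{lem:rvdConvexDiskDiagram} the region $\dreg(r)$ is connected with $v_r = \apex(r) \in \partial\Pol$, hence $v_r \notin \overline{\Delta}$; so there is a path inside the \emph{open} cell $\dreg(r)$ from $x_0$ to a point arbitrarily close to $v_r$, i.e.\ outside $\Delta$. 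Such a path meets no diagram edge, and in particular avoids $\gamma$, yet to leave $\Delta$ it must cross $\partial\Delta = \gamma$, a contradiction. Thus the diagram is acyclic. Connectivity then upgrades the forest to a tree: a region cannot separate the diagram into independent pieces, again because every region is connected and touches $\partial\Pol$.

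For the $\Theta(n)$ bound I would argue via Euler's relation. A circular edge cannot terminate in the interior of $\Pol$ at a degree-one vertex, so every leaf of the forest lies on $\partial\Pol$, while interior vertices are proper vertices of degree three (no point is equidistant to four rays). A forest embedded in the disk $\Pol$ with all leaves on $\partial\Pol$ subdivides $\Pol$ into a number of faces equal to (number of leaves) $-$ (number of components) $+1$; consequently its total combinatorial size is linear in the number of faces it creates. The whole matter therefore reduces to bounding this number of faces by $O(n)$, that is, to showing that each of the $n$ cells contributes $O(1)$ connected pieces inside $\Pol$. The $\Omega(n)$ lower bound is then trivial, as each of the $n$ rays contributes a nonempty cell.

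The main obstacle is precisely this interface between the whole-plane statement of \cref{lem:rvdConvexDiskDiagram} and the restriction to $\Pol$: the lemma guarantees that $\dreg(r)$ is connected in $\R^2$, but the face count needs each cell to remain a single connected piece after truncation to $\Pol$. Ruling out a cell that is connected only through a detour outside $\Pol$ yet splits into several components inside it is the delicate point. I would handle it with the inversion centered at $\apex(r)$ already used in the proof of \cref{lem:rvdConvexDiskDiagram}, which maps $\dreg(r)$ to a convex intersection of halfplanes; exploiting the convexity of $\Pol$ at the vertex $v_r$, I expect to show that the image of $\dreg(r) \cap \Pol$ stays connected, thereby fixing the face count at exactly $n$ and completing both the tree claim and the linear bound.
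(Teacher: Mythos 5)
Your decomposition is the right one, and the first two thirds are sound: the coincidence with $\pvd(\cR_\Pol)$ is indeed immediate from the preceding discussion, and the trapped-region argument for acyclicity is essentially the same "no islands" reasoning the paper uses in \cref{lem:connectedDiagram} (one caveat: since every bisecting circle passes through the apices, a hypothetical cycle may itself pass through $\apex(r)$, so "a point arbitrarily close to $v_r$" need not lie outside $\Delta$; the fix is to route the escape path to a point of $\dreg(r)$ near the open polygon edge $p(r)p(r')$, which $\dreg(r)$ reaches because that edge has distance $0$ to $r$ only, and which cannot lie on $\gamma$). The genuine gap is the step you defer with "I expect to show": that each $\dreg(r)\cap\Pol$ contributes $O(1)$ faces. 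The inversion route you propose does not close it, because inversion centered at $\apex(r)$ does not preserve the convexity of $\Pol$ — only the two halfplanes whose boundary lines pass through $\apex(r)$ map to halfplanes, while the remaining $n-2$ map to complements of disks through the center, and the intersection of such a region with a convex set can genuinely be disconnected. So as written, the $\Theta(n)$ bound is not established.

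Ironically, the acyclicity you already proved is enough to close the gap without any inversion: once the truncated diagram (with the polygon vertices counted as graph vertices) is a forest, no face of the induced subdivision of $\Pol$ can be bounded entirely by diagram edges, nor can it meet $\partial\Pol$ only at isolated polygon vertices, since either situation produces a cycle. Hence every face carries a full open polygon edge on its boundary; each open edge $v_iv_{i+1}$ lies in $\rreg(r_i)$ alone (its points have distance $0$ to $r_i$ and positive distance to every other ray) and therefore in exactly one face, giving exactly $n$ faces and, with trivalent interior vertices, $\Theta(n)$ total complexity. For comparison, the paper gives no standalone proof of this corollary: it treats it as a consequence of \cref{lem:rvdConvexDiskDiagram} and the coincidence observation, and the single-face-per-region property you are after is established separately in \cref{lem:rvdConvexChain}.
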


We now turn our attention back to $\pvd(\cR_\Pol)$; we will use  the disk diagram in \cref{subsec:pvdMerging}.

\begin{lemma}\label{lem:rvdConvexChain}
In $\pvd(\cR_{\Pol})$, the Voronoi region of a site $r_i$ is connected, incident to the polygon edge $p(r_i)p(r_{i+1})$, and has the following form: It consists of a single face incident to the polygon edge $p(r_i)p(r_{i+1})$. The distance on the boundary is monotonically increasing from $r_i$, and from $r_{i+1}$, towards a global maximum at $r_i^*$.See \cref{fig:properties}.
\end{lemma}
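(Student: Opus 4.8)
The plan is to transfer everything to the disk diagram $\dd(\cR_\Pol)$, which coincides with $\pvd(\cR_\Pol)$ inside $\Pol$, and then to exploit that the level sets of the map $x \mapsto \rdis(x,r_i)$ are exactly the rays emanating from the apex $\apex(r_i)=v_i$, so that $\rdis(\cdot,r_i)$ is nothing but the polar angle around $\apex(r_i)$ measured counterclockwise from $\dir(r_i)$.

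I would first dispose of incidence, connectedness, and the single-face claim. A point $x$ just inside the polygon edge $\apex(r_i)\apex(r_{i+1})$ has $\rdis(x,r_i)$ arbitrarily close to $0$, while $\rdis(x,r_j)$ stays bounded away from $0$ for every $j\neq i$; hence $x\in\rreg(r_i)$ and the region is incident to that edge (and, since the whole edge has distance $0$ to $r_i$, along its full length). For connectedness and single-facedness I would invoke \cref{lem:rvdConvexDiskDiagram} and \cref{cor:pvdComplexity}: the region $\dreg(r_i)$ is connected with $\apex(r_i)$ on its boundary, and the truncated diagram inside $\Pol$ is a tree, which forbids $\rreg(r_i)$ from enclosing a cycle or from splitting into two faces. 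Concretely I would reuse the inversion centered at $\apex(r_i)$ from \cref{lem:rvdConvexDiskDiagram}: it maps every bisecting circle $\cb(r_i,r_j)$ to a line and $\dreg(r_i)$ to an intersection of halfplanes $K$, hence a convex set, while fixing each ray through $\apex(r_i)$ setwise and therefore preserving the value of $\rdis(\cdot,r_i)$.

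The substantive part is the monotonicity of the distance along the boundary. The portion of $\partial\rreg(r_i)$ interior to $\Pol$ --- the far boundary --- joins $\apex(r_i)$ to $\apex(r_{i+1})$, both of which have angular distance $0$ to $r_i$ (the first by definition, the second because $r_i$ passes through $\apex(r_{i+1})=v_{i+1}$). It is a chain of circular arcs, each a piece of a bisector $\rbis(r_i,r_j)$, and on each arc $\rdis(\cdot,r_i)$ is strictly monotone by \cref{remark:distance}. It therefore remains to prove that as one traverses the far boundary from $\apex(r_i)$ to $\apex(r_{i+1})$ the distance first strictly increases and then strictly decreases, the unique interior maximum being $r_i^*$.

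To obtain this unimodality I would argue radially from $\apex(r_i)$: since $\rdis(\cdot,r_i)$ is the polar angle about $\apex(r_i)$, it suffices to show that this angle is unimodal along the chain. In the inverted picture the far boundary becomes a connected arc of the boundary of the convex region $K$, one endpoint being sent to infinity (the image of $\apex(r_i)$) and the other to $\mathrm{inv}(\apex(r_{i+1}))$, and the polar angle about $\apex(r_i)$ is unchanged; along a convex arc the polar angle about a fixed center varies unimodally, giving the one-increase--one-decrease behaviour and the uniqueness of $r_i^*$. I expect the main obstacle to lie exactly here: one must control the position of the center $\apex(r_i)$ relative to $K$ and the contribution of the polygon (the edge $\apex(r_i)\apex(r_{i+1})$ and the vertex $v_{i+1}$) so as to certify that the relevant piece of $\partial K$ is a single convex arc on which the polar angle genuinely rises and then falls, ruling out any spurious extra reversal of the monotonicity direction.
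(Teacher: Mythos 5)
Your route is genuinely different from the paper's. The paper argues combinatorially and by induction: it first shows (using \cref{lem:rvdConvexDiskDiagram} and a separation argument inside the convex polygon) that the regions adjacent to $\rreg(r_i)$ appear along its boundary as a subsequence of $(r_{i+1},\dots,r_{i-1})$, then splits $\cR_\Pol\setminus\{r_i\}$ into the two chains $\cR_{<\pi}$ and $\cR_{\geq\pi}$ according to the angular difference with $r_i$, and inserts the sites of each chain one by one, extending the monotone boundary chain at every step via the per-bisector monotonicity of \cref{remark:distance}. You instead make a single global observation: $\rdis(\cdot,r_i)$ is the polar angle about $\apex(r_i)$, the inversion of \cref{lem:rvdConvexDiskDiagram} preserves this angle while turning $\dreg(r_i)$ into a convex set $K$, and unimodality should follow from convexity. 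This avoids both the induction and the $\cR_{<\pi}/\cR_{\geq\pi}$ case split, and is arguably cleaner; the incidence, connectedness and single-face parts are handled the same way in both arguments.

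The step you leave open is the one you flag yourself, and it does need to be stated correctly, because ``the polar angle varies unimodally along a convex arc'' is false for a general center and arc. The fix is direct: the level set $\{\,x:\rdis(x,r_i)=\theta\,\}$ is a ray from $\apex(r_i)$, the inversion maps it to itself, and a ray is contained in a line, which meets $\partial K$ in at most two points. Hence every distance value is attained at most twice on $\partial\dreg(r_i)\setminus\{\apex(r_i)\}$, so at most twice on the far boundary arc, and a continuous at-most-two-to-one function on an interval is monotone or has a single interior local extremum. To rule out that this extremum is a minimum and to orient the monotonicity, use the endpoints: the distance is $0$ at $\apex(r_{i+1})$ (which lies on $r_i$) and strictly positive in the interior of $\Pol$, so an interior local minimum would force a second extremum. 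Note also that at the $v_i$-end the boundary value is the limiting polar angle (the tangent direction of the first arc), not $0$, since $\rdis(\cdot,r_i)$ is discontinuous at the apex; this does not affect the argument but should be said. With these two points made explicit, your proof goes through and no control of the position of $\apex(r_i)$ relative to $K$ is needed.
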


\begin{proof}
    Consider the sequence of sites whose faces are adjacent to the face of $r_i$ in counterclockwise order. We show that: \textbf{(i)} this sequence is actually a sub-sequence of $(r_{i+1}, r_{i+2},..., \allowbreak r_n, r_1,...,r_{i-1})$,
    and \textbf{(ii)} the distance along the sequence is monotonically increasing towards a point realizing the maximum $r_i^*$.
    Refer to \cref{fig:properties} (where $r_i = r_1$).
    
    \textbf{(i)}
    Let $r_j$ be a ray such that $\rreg(r_i)$ is adjacent to $\rreg(r_j)$.
    By \cref{lem:rvdConvexDiskDiagram}, each region is connected and incident to its corresponding ray, so the union of $\rreg(r_i)$ and $\rreg(r_j)$ \emph{splits} the convex polygon into two simply connected components
    (see $\rreg(r_1) \cup \rreg(r_5)$ in \cref{fig:properties}).
    Given a second ray $r_k$ whose region is adjacent to the region $\rreg(r_i)$, it follows that both $r_k$ and the edge between $\rreg(r_i)$ and $\rreg(r_k)$ have to be in the same connected component, as also $\rreg(r_k)$ is connected (see $\rreg(r_3)$ in \cref{fig:properties}). Thus, the order of $r_j$ and $r_k$ along the boundary of the polygon and the face of $r_i$ is the same.
    
    \textbf{(ii)}
    We partition the set $\cR_\Pol \setminus p_i$ in two depending on the angular difference with $p_i$. 
    The set $\cR_{\geq \pi}$ of rays with angular difference $\diff(r_j,r_i) \geq \pi$, and the set $\cR_{<\pi}$ of rays with angular difference  $\diff(r_j,r_i) < \pi$
    (see the dashed curves in \cref{fig:properties})
    Along the chain of rays $\cR_{<\pi}$ (resp. $\cR_{\geq \pi}$) the distance is increasing from $r_{i+1}$ (resp. $r_{i}$) towards the point realizing $r_i^*$.

\begin{figure}[t]  
    \centering
        \includegraphics[width=0.6\textwidth, page=3]{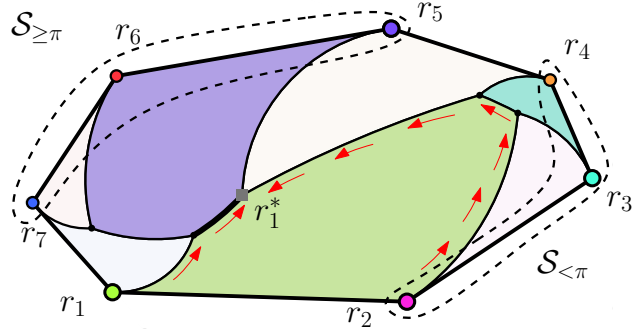}
    \caption{
    Illustration of the properties of a region $\rreg(r_1)$.
    The distance along the boundary $\partial\rreg(r_1)$ is increasing towards the maximum $r_1^*$.
    The sites in $\cR_\Pol \setminus r_1$ are split in two sets $\{r_2,r_3,r_4\}$ and $\{r_5,r_6,r_7\}$.
    The regions of $r_1$ and $r_5$ split $\Pol \setminus \{\rreg(r_5)\cup\rreg(r_1)\}$ in two connected components.
    }
    \label{fig:properties}
\end{figure}

    We give an inductive argument for the monotonicity property along chain $\cR_{<\pi}$, starting initially only with the Voronoi diagram of the two rays $r_i$ and $r_{i+1}$, and then incrementally adding more rays of the set $\cR_{<\pi}$ in counterclockwise order.
    The base case follows directly from the properties of the bisectors, see \cref{remark:distance}.
    Suppose we are now adding site $r_k$ to $\rvd(\{r_i,\dots,r_{j-1}\})$. Because of property \textbf{(i)}, if there is an edge between $r_i$ and $r_j$, then it is incident to $\apex(r_i)$, i.e., it is the last one along the chain of edges of face $\rreg(r_i)$. 
    Let $v$ be the other endpoint of the edge between $r_i$ and $r_j$.
    Since $r_j \in \cR_{<\pi}$, the distance along the edge $r_i$ and $r_j$ is monotone increasing, from $v$ to $\apex(r_i)$. 
    Further, by the induction hypothesis, the distance along the chain of edges between $r_i$ and all sites bounding $\rreg(r_i)$ in $\rvd(\{r_i,\dots,r_{j-1}\})$ is monotonically increasing in counterclockwise order, from $r_{i+1}$ to $v$.
    The proof for $\cR_{\geq \pi}$ is analogous, but instead, the distance increases in clockwise order.
\end{proof}

\begin{corollary}\label{lem:2IncomingEdges}
    For any vertex $u \in \pvd(\Pol)$, at least two incident edges have a distance increasing towards $u$.
\end{corollary}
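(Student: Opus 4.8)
The plan is to reduce the statement to the unimodality of region boundaries established in \cref{lem:rvdConvexChain}, and then finish with a short combinatorial count. First I would fix a vertex $u$ of $\pvd(\cR_\Pol)$. Since inside $\Pol$ the diagram consists only of circular edges, $u$ is a proper vertex: it is equidistant to exactly three sites, say $r_a,r_b,r_c$, and its three incident edges are arcs of the bisectors $\rbis(r_a,r_b)$, $\rbis(r_b,r_c)$, and $\rbis(r_a,r_c)$; call them $e_{ab}$, $e_{bc}$, $e_{ac}$. Each of these edges bounds exactly two of the three regions $\rreg(r_a)$, $\rreg(r_b)$, $\rreg(r_c)$, and by \cref{remark:distance} the angular distance is strictly monotone along each individual circular edge, so ``distance increasing towards $u$'' is a well-defined property of each of the three edges.

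The key step is a per-region observation. Consider one of the three regions, say $\rreg(r_a)$; the two edges bounding it at $u$ are $e_{ab}$ and $e_{ac}$, and they are consecutive along $\partial\rreg(r_a)$, meeting precisely at $u$. By \cref{lem:rvdConvexChain}, the distance along $\partial\rreg(r_a)$ is unimodal, increasing from both polygon-incident ends towards the single maximum $r_a^*$. Walking through $u$ along this boundary, the distance is therefore either locally increasing (if $u$ precedes $r_a^*$), locally decreasing (if $u$ follows $r_a^*$), or maximal (if $u=r_a^*$). In the first two cases exactly one of $e_{ab},e_{ac}$ has distance increasing towards $u$, and in the last case both do. Hence in every case at least one of the two edges bounding $\rreg(r_a)$ increases towards $u$, and the same reasoning applies to $\rreg(r_b)$ with the pair $\{e_{ab},e_{bc}\}$ and to $\rreg(r_c)$ with the pair $\{e_{bc},e_{ac}\}$.

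It then remains to combine the three per-region facts. The three bounding pairs $\{e_{ab},e_{ac}\}$, $\{e_{ab},e_{bc}\}$, $\{e_{bc},e_{ac}\}$ are exactly the three two-element subsets of $\{e_{ab},e_{bc},e_{ac}\}$, and each of them must contain an edge increasing towards $u$; that is, the set of edges increasing towards $u$ is a hitting set for these three pairs. Since any single edge lies in only two of the three pairs, one edge alone cannot meet all three, so at least two distinct edges increase towards $u$, which is the claim. I expect the only genuine work to be in the per-region step, namely invoking the unimodality of \cref{lem:rvdConvexChain} and verifying the three boundary/peak cases; the concluding count is immediate. One minor point I would record explicitly is that the leaves of the tree, i.e.\ the apices $\apex(r_i)$ on $\partial\Pol$, carry a single incident circular edge and are not proper vertices, so the statement is understood for the interior (degree-three) vertices of $\pvd(\cR_\Pol)$, which are the ones with at least two incident edges.
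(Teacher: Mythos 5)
Your proof is correct and rests on exactly the same key fact as the paper's, namely the unimodality of the distance along a region boundary from \cref{lem:rvdConvexChain}; the paper simply phrases it as a two-line contradiction (two edges decreasing towards $u$ would bound a common region and force a local minimum on its boundary chain), whereas you give the equivalent direct hitting-set count over the three edge pairs. Your explicit remark that the statement concerns the proper (degree-three) vertices is a nice clarification the paper leaves implicit.
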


\begin{proof}
    Assume for the sake of contradiction, that there is a vertex $u$ with two incident edges having distance decreasing towards $u$.
    These two edges are part of a chain of edges bounding a region $\rreg(r_i)$.
    But this contradicts \cref{lem:rvdConvexChain}.
\end{proof}

\begin{lemma} \label{lem:PointsOfSameDistance}
  Given an angle $c$, the set of all points in the interior of $\Pol$, which have at least distance $c$ to their nearest site, form a convex polygon $B$.
\end{lemma}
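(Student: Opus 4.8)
The plan is to realize $B$ as a finite intersection of half-planes with the convex polygon $\Pol$. The starting point is that, by \cref{def:distance}, the value $\rdis(x,r_i)$ depends only on the direction of $\overrightarrow{\apex(r_i)x}$ and not on the distance from $\apex(r_i)$. Hence the level set $\{x : \rdis(x,r_i)=c\}$ is a ray emanating from $\apex(r_i)$, and the super-level set $\{x : \rdis(x,r_i)\ge c\}$ is the wedge with apex $\apex(r_i)$, of angular opening $2\pi-c$, bounded by this ray and by $r_i$ itself. For $c<\pi$ this wedge is reflex, hence \emph{non-convex}; dealing with this is the one delicate point.

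The resolution is to exploit that we are confined to $\Pol$. Since $\Pol$ is convex and $\apex(r_i)=v_i$ is a vertex of $\Pol$, the entire polygon lies inside the wedge $W_i$ with apex $v_i$ bounded by the two edges of $\Pol$ incident to $v_i$; the opening of $W_i$ is the interior angle of $\Pol$ at $v_i$, which is strictly less than $\pi$. Because $r_i$ is one of these two edges and the interior of $\Pol$ lies to its left, every $x\in\mathrm{int}(\Pol)$ has $\rdis(x,r_i)$ equal to the counterclockwise angle from $\dir(r_i)$ to $\overrightarrow{\apex(r_i)x}$, and this angle lies in $(0,\pi)$. In particular the discontinuity of $\rdis$ at $2\pi$ never occurs inside $\Pol$. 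Consequently, within $\mathrm{int}(\Pol)$ the condition $\rdis(x,r_i)\ge c$ coincides with membership in the closed half-plane $H_i$ whose bounding line passes through $v_i$ in the direction obtained by rotating $\dir(r_i)$ counterclockwise by $c$: on $\Pol$ the reflex wedge and the half-plane $H_i$ cut out exactly the same set, precisely because $\Pol$ already forces the direction angle into $(0,\pi)<(0,c+\pi)$.

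With this in hand the conclusion is immediate. Since the distance of $x$ to its nearest site is $\min_{r\in\cR_\Pol}\rdis(x,r)$, requiring it to be at least $c$ is the same as requiring $\rdis(x,r)\ge c$ for \emph{every} $r$, so
\[
  B=\Bigl\{x\in\mathrm{int}(\Pol) : \min_{r\in\cR_\Pol}\rdis(x,r)\ge c\Bigr\}=\mathrm{int}(\Pol)\cap\bigcap_{i=1}^{n}\{x:\rdis(x,r_i)\ge c\}=\mathrm{int}(\Pol)\cap\bigcap_{i=1}^{n}H_i.
\]
As $\mathrm{int}(\Pol)$ is itself the intersection of the $n$ edge half-planes of $\Pol$, the set $B$ is a bounded intersection of finitely many half-planes, i.e.\ a convex polygon. (When $c$ exceeds the interior angle at some $v_i$, the corresponding intersection is empty and $B$ is the degenerate empty polygon, which is consistent with the statement.)

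I expect the only genuine obstacle to be the justification that the globally reflex super-level wedge of $\rdis(\cdot,r_i)$ restricts to a convex half-plane inside $\Pol$; everything else is routine once this is in place. The crux is the inequality $\rdis(x,r_i)<\pi$ for all $x\in\mathrm{int}(\Pol)$, which follows from the convexity of $\Pol$ via the containment $\Pol\subseteq W_i$ and must be argued carefully, since it is exactly what excludes the wrap-around of the angular distance near $2\pi$ and thereby makes each super-level set convex.
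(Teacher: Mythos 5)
Your proof is correct and follows essentially the same route as the paper's: the paper's (one-line) argument likewise observes that the locus at distance exactly $c$ from a ray is a half-line and realizes $B$ as the intersection of the half-planes bounded by the rays of $\cR_\Pol$ rotated by $c$. Your additional care in noting that the global super-level set of $\rdis(\cdot,r_i)$ is a reflex wedge, and that only the convexity of $\Pol$ (forcing $\rdis(x,r_i)\in(0,\pi)$ on the interior) lets it be replaced by a half-plane, fills in a detail the paper's proof glosses over, but the underlying approach is identical.
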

\begin{proof}
Observe that the set of all points at distance $c$ from a ray of $\cR_\Pol$ is a half-line.
Thus, $B$ is a convex polygon, since it is the intersection of the halfplanes defined by all the rays in $\cR_\Pol$, after being rotated by $c$.
\end{proof}

\subsection{A simple \texorpdfstring{$O(n \log n)$}{O(nlogn)}-time algorithm}
\label{subsec:pvdNlogn}

We first describe a simple $O(n\log n)$-time algorithm to construct $\pvd(\cR_\Pol)$ employing a so-called \emph{``collapse''} strategy: starting at the boundary of the domain (where all the points have distance zero to its nearest site), the algorithm gradually constructs the diagram adding edges and vertices of increasing 
distance until the vertex of maximum distance is reached.
Other examples of algorithms employing a similar strategy
include the farthest Voronoi diagrams of points \cite{skyum1991} and of line segments \cite{aurenhammer2006}.
We remark that the $O(n\log n)$-time algorithm of \cite{alegria2017} finds the Brocard angle in a similar manner, without constructing the $\pvd(\cR_\Pol)$.

We give a high level description of the algorithm; refer to \cref{fig:algorithmnlogn} for an illustration.
The algorithm starts at the vertices of $\Pol$, which are all starting points of edges of $\pvd(\cR_\Pol)$.
For every pair of edges that are consecutive in circular order, their next intersection point is computed, if one exists.
Out of these intersection points, the one with minimum distance to its nearest site is the next vertex of the diagram.

\begin{figure}[!b]
    \centering
    \begin{subfigure}[t]{0.48\textwidth}
    \centering
        \includegraphics[width=.95\textwidth,page=4]{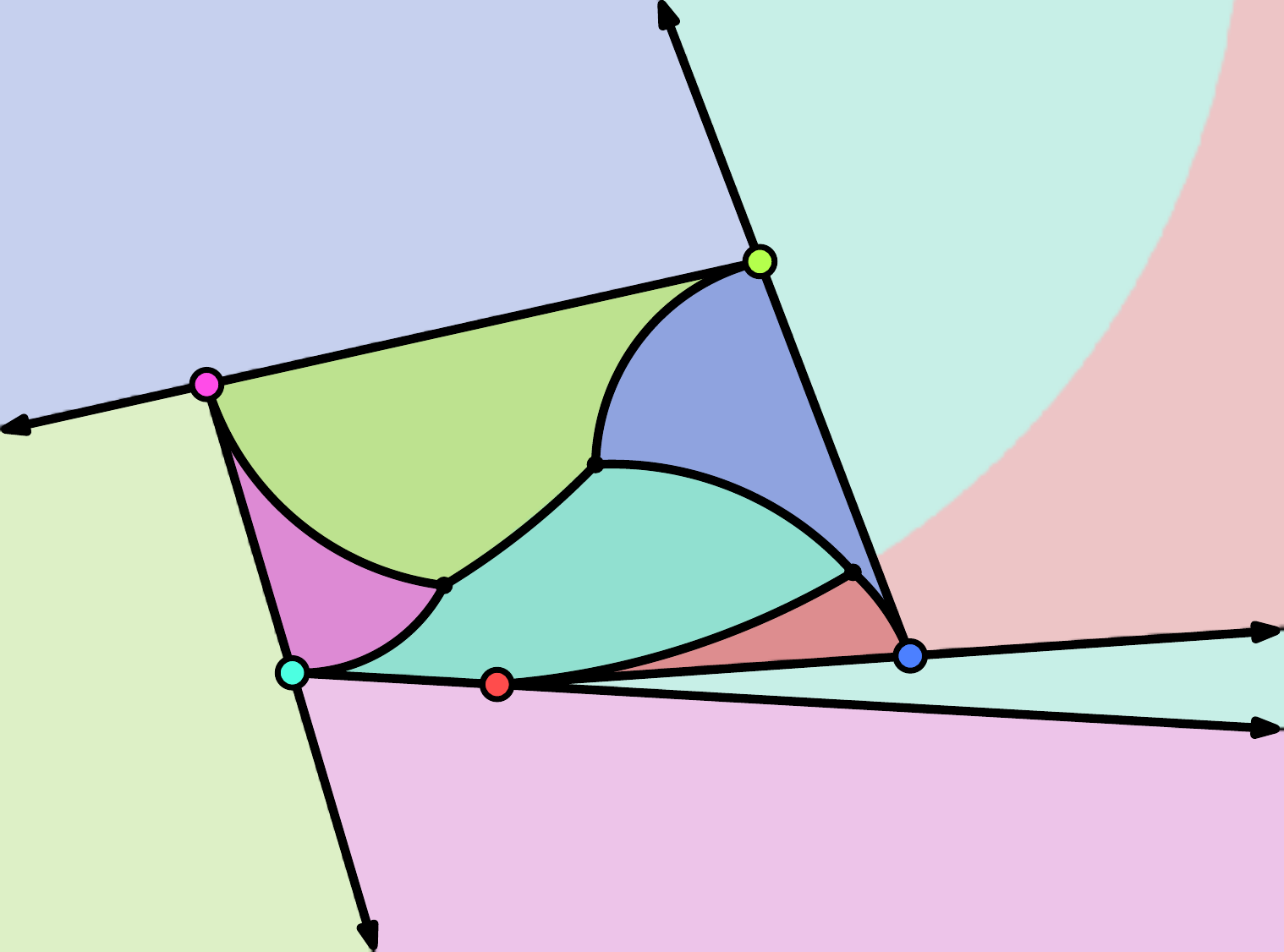}
    \caption{
        First event: Vertex $u$ is induced by the rays $(r_2,r_3,r_4)$, and the region of $r_3$ ``collapses''.
    }
    \end{subfigure}
    \hfill
    \begin{subfigure}[t]{0.48\textwidth}
        \centering
        \includegraphics[width=.95\textwidth,page=5]{fig/rvdConvexNlogn.pdf}
    \caption{
        Second event: Vertex $u$ is induced by the rays $(r_5,r_1,r_2)$, and the region of $r_1$ ``collapses''.
    }    
    \end{subfigure}
  \caption{
   An illustration of the first two events of the $O(n\log n)$-time algorithm. 
   At each event all the candidate vertices are illustrated; 
   In the third event (not illustrated) there is only one candidate vertex induced by the rays $r_2,r_4,r_5$.
  }
  \label{fig:algorithmnlogn}
\end{figure}

This \emph{collapse} event is processed by constructing the vertex and the edges leading to this vertex, then removing the edges from further consideration and starting a new edge.
At the constructed vertex a face \emph{collapses}, since it is fully constructed and will not be considered again. The new edge is part of the bisector of the two faces neighboring the collapsed face.
This procedure, of computing and processing new \emph{collapse}  events, is repeated until all the remaining edges intersect in a single point.
This last point is the vertex of $\pvd(\cR_\Pol)$ with maximum distance, and realizes the Brocard angle of $\Pol$.

\subparagraph{Correctness of the algorithm.} The algorithm constructs the features of the diagram (vertices and edges) in increasing distance to their nearest site. First note, that because of \cref{lem:PointsOfSameDistance}, the set of points with same distance to their closest site form a single cycle (see the dashed polygons in \cref{fig:algorithmnlogn}).
So to keep track of the edges which are \emph{``active'' candidates} to be added to the diagram it suffices to use a circular list.

We argue that the algorithm correctly finds the next vertex, assuming that all vertices with smaller distance have already been constructed. 
By \cref{lem:2IncomingEdges}, every vertex has at least two edges with increasing distance towards it, thus, the next vertex is the intersection of a pair of edges currently in the circular list. Such a pair of edges needs to be consecutive, as otherwise, the 
cyclicity of \cref{lem:PointsOfSameDistance} would be violated. Since we are constructing the diagram in increasing distance, the algorithm picks the candidate vertex with smallest distance.

Finally, we argue that the algorithm does not miss any candidate edge.
This is because the algorithm considers a new candidate edge at each vertex event, and edges can only start at vertices, since the distance function does not exhibit local minima along an edge; see the monotonicity property of \cref{remark:distance}.

\begin{proposition}
    The ``collapse'' algorithm 
    constructs $\pvd(\cR_\Pol)$ in $O(n\log n)$ time.
\end{proposition}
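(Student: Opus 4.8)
The plan is to establish the $O(n\log n)$ running time by accounting separately for the number of events processed and the cost per event. First I would observe that the algorithm is essentially a sweep in the distance parameter, processing a sequence of ``collapse'' events, each of which finalizes one Voronoi vertex and collapses exactly one face. By \cref{cor:pvdComplexity}, the portion of the diagram inside $\Pol$ has $\Theta(n)$ combinatorial complexity, so there are $O(n)$ vertices in total; hence the algorithm processes $O(n)$ events. At each event one face collapses and is removed from further consideration, which gives an immediate bound of $O(n)$ on the total number of collapse events, since each of the $n$ initial faces can collapse at most once.

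Next I would describe the data structures supporting the sweep. The ``active'' candidate edges are maintained in a circular doubly linked list, whose validity as a single cycle is guaranteed by \cref{lem:PointsOfSameDistance} (the set of points at a fixed distance from their nearest site forms a single closed convex curve). For each pair of consecutive active edges I would compute their next intersection point, if one exists, together with its distance to the nearest site; these candidate vertices are stored in a priority queue keyed by distance. The main loop repeatedly extracts the candidate of minimum distance, realizes it as the next diagram vertex (by the correctness argument already given, relying on \cref{lem:2IncomingEdges} and \cref{remark:distance}), removes the two finalized edges and the collapsing face, splices in the new edge that bisects the two neighboring faces, and then inserts into the priority queue the (at most two) new candidate intersection points formed by the new edge with its two new neighbors in the circular list. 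The loop terminates when only a constant number of edges remain and they meet at a single point, the global maximum realizing the Brocard angle.

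For the time bound I would then sum the costs. Each collapse event performs $O(1)$ geometric primitives (computing intersection points of bisector arcs, which are circular arcs of the bisecting circles, and their associated distances) and $O(1)$ list splicing operations, together with $O(1)$ priority-queue insertions and deletions, each costing $O(\log n)$. Since each insertion corresponds to a pair of newly-adjacent active edges and there are $O(n)$ events, the total number of candidate vertices ever inserted into the queue is $O(n)$; hence the total priority-queue work is $O(n\log n)$, which dominates. Initialization processes the $n$ polygon vertices and inserts the $O(n)$ initial consecutive-pair candidates, again in $O(n\log n)$ time, giving the claimed overall bound.

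The main obstacle I anticipate is not the gross event count, which follows cleanly from the ``one face collapses per event'' invariant together with the $\Theta(n)$ complexity bound, but rather verifying carefully that the circular list of active candidate edges remains a faithful representation throughout the sweep, so that the next diagram vertex is always the minimum-distance intersection of a \emph{consecutive} pair in this list. This is exactly where the structural lemmas do the heavy lifting: \cref{lem:PointsOfSameDistance} ensures the active front is a single cycle rather than several components (so ``consecutive in the list'' is well defined and no candidate is created between non-adjacent edges), and \cref{lem:2IncomingEdges} guarantees that each genuine vertex is indeed detected as an intersection of two edges whose distance increases towards it. One must also confirm that no candidate edge is missed, which follows because new edges originate only at events and the distance function has no interior local minimum along an edge (\cref{remark:distance}); this justifies that examining only newly-adjacent pairs after each splice suffices to keep the candidate set complete.
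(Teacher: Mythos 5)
Your proposal is correct and follows essentially the same approach as the paper: the paper likewise counts $O(n)$ collapse events (it states $n-2$), charges $O(\log n)$ per event to a min priority queue, and accounts $O(n\log n)$ for initializing/sorting the first $n$ events, with correctness delegated to the preceding structural discussion based on \cref{lem:PointsOfSameDistance}, \cref{lem:2IncomingEdges}, and \cref{remark:distance}. Your write-up is simply a more detailed elaboration of the same argument.
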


\begin{proof}
The time complexity analysis is straightforward.
The algorithm takes $O(n\log n)$ time to sort the first $n$ events. 
Then, there  are $n-2$ events and each event takes $O(\log n)$ time, if we use a min priority queue.
Thus, the algorithm takes $O(n\log n)$ time overall.
\end{proof}

\subsection{An optimal \texorpdfstring{$\Theta(n)$}{O(n)}-time algorithm}
\label{subsec:pvdLinear}

We now describe how $\pvd(\cR_\Pol)$ can be constructed in optimal $\Theta(n)$-time.
We have already shown that $\pvd(\cR_\Pol)$ has a tree structure and each Voronoi region is connected.
Despite its simple structure, however, $\pvd(\cR_\Pol)$ is not an instance of \emph{abstract Voronoi diagrams}~\cite{klein1989,klein1994} as we will see in the sequel.  
Thus, we can not use the available machinery under this framework directly.
Instead, we split the problem into sub-problems, where each sub-problem falls under the abstract Voronoi diagram framework.

Our algorithm can be briefly described as follows, see also the pseudocode in \cref{algorithm2}.
In a first step we partition $\cR_{\Pol}$ into four sets $\cR_{N}, \cR_{W}, \cR_{S}$ and $\cR_{E}$ of consecutive rays, depending on whether a ray faces \emph{north}, \emph{west}, \emph{south} or \emph{east} respectively; see an example in \cref{fig:convex_instance_partitioning}.

\begin{figure}[b]
    \centering
    \begin{subfigure}[t]{0.48\textwidth}
        \centering
        \includegraphics[width=0.75\textwidth, page=3]{convex_example.pdf}%
        \caption{
            The sets of rays $\cR_{N}, \cR_{W}, \cR_{S}$, $\cR_{E}$.
        }
        \label{fig:convex_instance_partitioning}
    \end{subfigure}
    \hfill
    \begin{subfigure}[t]{0.48\textwidth}
        \centering
        \includegraphics[width=0.75\textwidth, page=4]{convex_example.pdf}%
        \caption{
            The sets of rays $\cR_{N}^r, \cR_{W}^r, \cR_{S}^r$, $\cR_{E}^r$.
        }
        \label{fig:convex_instance_rotation}
    \end{subfigure}
    \caption{
        The partitioning of the set of rays in $\cR_{\Pol}$ before and after rotation. 
    }
  \label{fig:convex_instance_4sets}
\end{figure}

In a second step we transform each set $\cR_d$, $d \in \{\textup{N,W,S,E}\}$ into a set $\cR^r_d$, where each ray in $\cR_d$
is rotated clockwise by an angle of $\pi/2$;
see \cref{fig:convex_instance_rotation}.
We then construct each diagram $\rvd(\cR^r_d)$ independently as a special instance of abstract Voronoi diagrams; see \cref{fig:convex_instance1}.
Finally, we merge the four diagrams and  obtain $\pvd(\cR_{\Pol})$.
The merging is done in two phases; 
see \cref{fig:convex_instance_merge_1} and \cref{fig:convex_instance_merge_2}.

\begin{algorithm}[t]
	\SetAlgoVlined 
	\SetAlgoNoEnd 
	\LinesNumbered
	\SetKwInOut{Input}{Input} \SetKwInOut{Output}{Output}
	\SetKw{Merge}{Merge}
	\SetKw{Split}{Split}
	\SetKw{Construct}{Construct}
	\SetKw{Rotate}{Rotate}
	\SetKw{Update}{Update}
	\SetKwComment{Comment}{//}{}
	\Input{A convex polygon $\Pol$ with $n\geq 3$ vertices.}
	\Output{The diagram $\pvd(\cR_\Pol)$.}

    $\{\cR_N,\cR_W,\cR_S,\cR_E\} \gets$ \Split $\cR_{\Pol}$ \;
	\For{\textup{each} $d \in \{N,W,S,E\}$}
	{
	    $\cR^r_d \gets$ \Rotate $\cR_d$ \;
	    \Construct $\rvd(\cR^r_d)$ \;
	}
	
    $\rvd(\cR^r_W \cup \cR^r_S) \gets$ \Merge $\rvd(\cR^r_W) \ \text{and} \ \rvd(\cR^r_S)$ \;
    $\rvd(\cR^r_N \cup \cR^r_E) \gets$ \Merge $\rvd(\cR^r_N) \ \text{and}  \ \rvd(\cR^r_E)$ \;
    $\pvd(\cR_\Pol) \gets$ \Merge $\rvd(\cR^r_W \cup \cR^r_S) \ \text{and}  \ \rvd(\cR^r_N \cup \cR^r_E)$ \;

	\Return{$\pvd(\cR_\Pol)$} \; 
	\caption{$\Theta(n)$-time algorithm to construct $\pvd(\cR_\Pol)$.}%
	\label[algo]{algorithm2}	
\end{algorithm}

We describe in detail the construction of the four diagrams in \cref{subsec:pvd4diagrams} and the merging phase in \cref{subsec:pvdMerging}.
Consequently we 
derive the following theorem.

\begin{restatable}{theorem}{convexAlgo}
\label{thm:convex_algorithm}
  Given a convex polygon $\Pol$, we can construct $\pvd(\cR_{\Pol})$ in deterministic optimal $\Theta(n)$~time.
\end{restatable}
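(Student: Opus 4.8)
The plan is to carry out the four-phase strategy of \cref{algorithm2} and to charge only $O(n)$ time to each phase. Since $\pvd(\cR_\Pol)$ has $\Theta(n)$ complexity by \cref{cor:pvdComplexity}, reading the input and writing the output already force $\Omega(n)$ time, so an $O(n)$ upper bound on the whole algorithm immediately yields the optimal $\Theta(n)$ bound. The phases are: partition $\cR_\Pol$ into the four directional blocks $\cR_N,\cR_W,\cR_S,\cR_E$; rotate each block and build its diagram $\rvd(\cR^r_d)$; and then perform the three merges producing $\rvd(\cR^r_W\cup\cR^r_S)$, $\rvd(\cR^r_N\cup\cR^r_E)$, and finally $\pvd(\cR_\Pol)$. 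As there are only $O(1)$ blocks and merges, it suffices to bound the partition step, each of the four constructions, and each of the three merges by $O(n)$.

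For the partition I would use that, traversing the boundary of $\Pol$ counterclockwise, the directions $\dir(r_i)$ rotate monotonically through a full turn of $2\pi$; hence the rays whose direction falls in each of the four quadrants form a contiguous block of indices, and the four blocks are located by a single $O(n)$ scan. Rotating every ray of a block clockwise by $\pi/2$ costs $O(1)$ per ray. The reason to split at all is that the full diagram is not an abstract Voronoi diagram: the $2\pi$-discontinuity of $\rdis$ creates cyclic dominance relations among rays spanning a wide range of directions, the same phenomenon that leaves $\dd(\cR_\Pol)$ uncovered in \cref{subsec:pvdProperties}. Confining a block to directions within a cone of width at most $\pi/2$ (and rotating it into canonical position) is what I expect to remove this cyclicity. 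For correctness I would also verify that applying the same rotation to every ray of a block shifts all of their angular distances equally, and so preserves the nearest-ray relation among the rays of that block inside $\Pol$; this is why merging the rotated diagrams recovers the diagram of the original, unrotated rays.

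For the construction of a single $\rvd(\cR^r_d)$ I would verify that the rotated block induces an \emph{admissible} system of bisectors in the sense of abstract Voronoi diagrams~\cite{klein1989,klein1994}: pairwise bisectors are unbounded simple curves meeting in $O(1)$ points and, most importantly, the regions are connected, which is already supplied by \cref{lem:rvdConvexDiskDiagram} together with the coincidence of $\rvd$ and $\dd$ inside $\Pol$. I would then exploit that the apices of a block are vertices of a convex polygon and hence occur in a fixed, known order; this makes the diagram a \emph{Hamiltonian} abstract Voronoi diagram, whose regions meet infinity in a known circular sequence, and for which a deterministic linear-time construction is available. Each $\rvd(\cR^r_d)$ is therefore built in $O(|\cR_d|)$ time, for $O(n)$ over the four blocks.

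The main obstacle will be the three merges, detailed in \cref{subsec:pvdMerging}. Merging the diagrams of two disjoint, consecutive blocks of sites reduces to tracing the \emph{merge curve} separating the two site sets, exactly as in divide-and-conquer Voronoi constructions. Using the tree structure of \cref{cor:pvdComplexity}, the boundary-monotonicity of each region from \cref{lem:rvdConvexChain}, and \cref{lem:2IncomingEdges}, I would argue that this merge curve is a single, non-branching, monotone path of linear complexity, and that it can be followed by a simultaneous scan of the two input diagrams in which every edge is inspected $O(1)$ times; here the disk-diagram reformulation is the key simplification, since it replaces the three-part bisectors by single circles and turns the merge into a clean incremental trace. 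The two real risks are proving that the merge curve neither branches nor revisits a region, so that a single linear trace is both correct and complete, and initializing and advancing the trace in amortized $O(1)$ per step without any logarithmic search; the known order of the regions along the boundary of $\Pol$ should supply both the starting edge and the advancing rule. Assembling the $O(n)$ bounds from all phases with the $\Omega(n)$ output lower bound gives the claimed deterministic $\Theta(n)$ time.
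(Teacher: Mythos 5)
Your overall strategy is the same as the paper's: split $\cR_\Pol$ into four directional blocks of angular width at most $\pi/2$, rotate each block clockwise by $\pi/2$, build each $\rvd(\cR^r_d)$ via the Hamiltonian abstract Voronoi diagram framework in linear time, and merge. However, there are two genuine gaps. The first is your justification of axiom (A2). You claim connectivity of the regions is ``already supplied'' by \cref{lem:rvdConvexDiskDiagram} together with the coincidence of $\rvd$ and $\dd$ inside $\Pol$. That does not work: the abstract Voronoi diagram framework requires the regions of $\rvd(\cR')$ to be connected in all of $\R^2$ for \emph{every} subset $\cR'$ of the block, whereas the coincidence with the disk diagram holds only inside $\Pol$; outside $\Pol$ the bisectors contain ray pieces and regions can genuinely be disconnected (this is exactly what \cref{fig:convex_instance1before} and \cref{fig:noangle_example} exhibit, and why the rotation is needed at all). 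The paper's proof of (A2) is a separate three-case geometric argument (\cref{lem:polySubsetStructure}) showing that for any three rays of a rotated block no ray crosses the bisecting circle of the other two twice, using both the convexity of $\Pol$ and the $\le\pi/2$ angular spread; that argument is the real content and is absent from your proposal. Relatedly, (A1) requires that two rays of a rotated block do not intersect (otherwise $\rbis(r,s)$ is not homeomorphic to a line), which also needs the convexity/rotation argument and is not addressed. A smaller imprecision in the same vein: rotating all rays by $\pi/2$ does \emph{not} ``shift all angular distances equally'' in a way that preserves the nearest-ray relation everywhere, because of the $2\pi$ wraparound; the correct statement, and the one the paper uses, is that the bisecting circles are invariant under uniform rotation and only circular edges occur inside $\Pol$.

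The second gap is the merge. You correctly identify the two claims on which linear-time merging hinges --- that the merge curve is a single unbranching chain with known endpoints, and that tracing never backtracks --- but you leave both as ``risks'' to be handled by ``the known order of the regions along the boundary of $\Pol$.'' The paper has to work for these: no-backtracking is \cref{lem:noBacktrack} (a curvilinear-angle argument at proper vertices plus the fact that two related bisectors cross at most once in $\Pol$), the absence of extra components of the merge curve is ruled out via the disk diagram (\cref{lem:rvdConvexDiskDiagram}), and locating and tracing the ray portions of the first-phase merge curve ($s_1$ lying entirely in $\rreg(w_k)$, and intersecting $w_1$ with the tree $\rvd(\cR^r_S)$) requires its own arguments. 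As it stands your proposal is a correct plan that matches the paper's, but the two steps that carry the mathematical weight --- verifying the AVD axioms for the rotated blocks and establishing the structure of the merge curve --- are asserted rather than proved.
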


\subsection{\texorpdfstring{$\Theta(n)$}{O(n)}-time algorithm: constructing the four diagrams}
\label{subsec:pvd4diagrams}

We use the framework of \emph{abstract Voronoi diagrams}~\cite{klein1989,klein2009}.
To comply with this framework, a system of angular bisectors must satisfy the following three \emph{axioms}:

\begin{itemize}[noitemsep]
\item \textbf{(A1)}
  The bisector $\rbis(r,s)$, $\forall r,s \in \cR$, is an unbounded simple curve, homeomorphic to a line. 
\item \textbf{(A2)}
  The region $\rreg(r)$ in $\rvd(\cR')$, $\forall \cR' \subseteq \cR$ and $\forall r
  \in \cR'$, is connected.
\item \textbf{(A3)}
  The closure of the union of all regions in $\rvd(\cR')$, $\forall \cR' \subseteq
  \cR$, covers $\mathbb{R}^2$.
\end{itemize}

Observe that a subset $\cR_d$ (and hence the set $\cR_{\Pol}$) need not satisfy axiom (A2); see the disconnected Voronoi regions in \cref{fig:convex_instance1before}.

\begin{figure}[b]
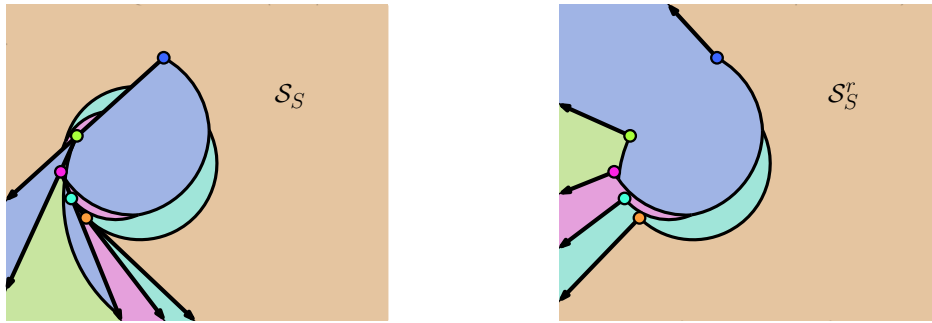

    \centering
    \begin{subfigure}[t]{0.48\textwidth}
        \centering
        \includegraphics[width=0.75\textwidth, page=7]{convex_example.pdf}%
    \caption{
    Diagram $\rvd(\cR_{S})$.
    }
    \label{fig:convex_instance1before}
    \end{subfigure}
    \hfill
    \begin{subfigure}[t]{0.48\textwidth}
        \centering
        \includegraphics[width=0.75\textwidth, page=8]{convex_example.pdf}%
    \caption{
    Diagram $\rvd(\cR^r_{S})$.
    }
    \label{fig:convex_instance1after}
    \end{subfigure}
    \caption{
    Voronoi diagrams of a set $\cR_S$ and the set $\cR_S^r$ (after a clockwise rotation by $\pi/2$).
    }
  \label{fig:convex_instance1}
\end{figure}

Consider each transformed subset of rays $\cR^r_d$, $d \in \{\textup{N,W,S,E}\}$, where each ray 
is rotated clockwise by an angle of $\pi/2$.
We will show that each set $\cR^r_d$ satisfies the aforementioned axioms.
The intuition behind the clockwise rotation comes from the fact that only circular parts of bisectors appear in $\pvd(\cR_{\Pol})$, and  these bisecting circles remain the same under a uniform rotation.
An example of a diagram that satisfies axiom (A2)
after a clockwise $(\pi/2)$-rotation is shown in
\cref{fig:convex_instance1after}.

Still, note that a clockwise $(\pi/2)$-rotation by itself is not always sufficient for the entire set $\cR_{\Pol}$ to satisfy axiom (A2), and this is justified by the example in \cref{fig:noangle_example}:
given the set $\cR_\Pol$ of $5$ rays in \cref{fig:noangle_example0}, there exists a subset of $3$ rays which needs more rotation in order to have all regions connected, see $\rreg(s)$ in \cref{fig:noangle_example1}, and a subset of $3$ rays which needs less rotation, see $\rreg(r)$ in \cref{fig:noangle_example2}; hence, the reason to split $\cR_\Pol$ appropriately.

\begin{figure}[t]  
    \centering
    \begin{subfigure}[t]{0.83\textwidth}
        \centering
        \includegraphics[trim=0mm 42mm 0mm 15mm, clip,width=\textwidth,page=6]{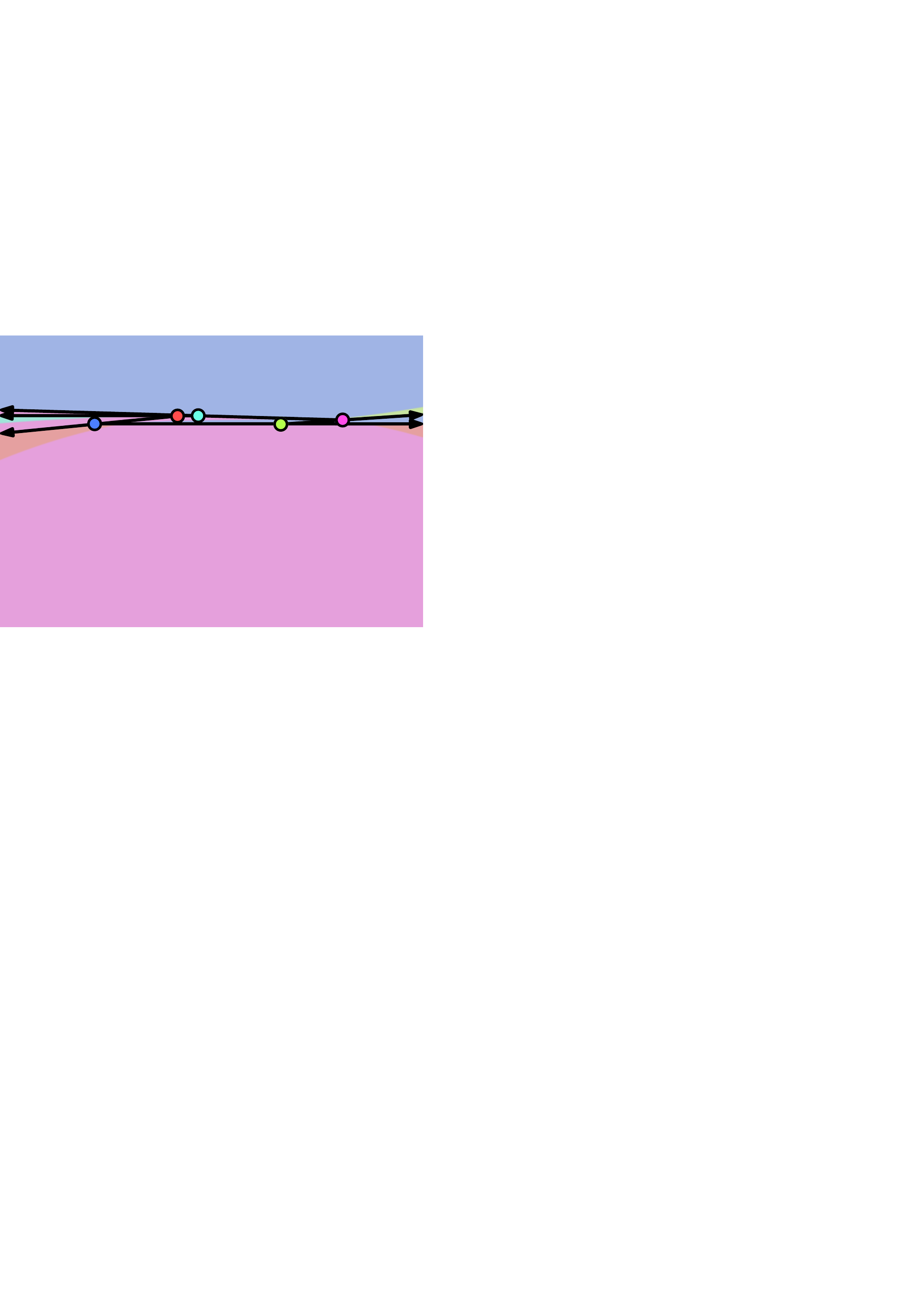}
        \caption{
            The Voronoi diagram of the complete set of rays without rotation (zoomed in).}
\label{fig:noangle_example0}
\end{subfigure}
\begin{subfigure}[t]{0.48\textwidth}
\centering
\includegraphics[width=.65\textwidth,page=3]{unique_angle.pdf}
\caption{A subset of $3$ rays clockwise $(\pi/2)$-rotated. $\rreg(s)$ has 2 faces; 
to become connected, more rotation is needed.} 
\label{fig:noangle_example1}
\end{subfigure}
    \hfill
    \begin{subfigure}[t]{0.48\textwidth}
        \centering
        \includegraphics[width=.65\textwidth,page=4]{unique_angle.pdf}
        \caption{
            A subset of $3$ rays clockwise $(\pi/2)$-rotated. $\rreg(r)$ has 2 faces; to become connected, less rotation is needed.} 
        \label{fig:noangle_example2}
    \end{subfigure}  
    \caption{
        An example of a (thin) polygon with $5$ vertices which justifies \cref{remark:noavd}.
        }
  \label{fig:noangle_example}
\end{figure}

\begin{remark}
    \label{remark:noavd}
    There are sets of rays $\cR_{\Pol}$ for which there exists no unique 
    angle to rotate the rays 
    so that axiom (A2) is satisfied.
\end{remark}

Note that we partitioned $\cR_\Pol$ in a way such that any two rays $r$ and $s$ in a set $\cR^r_d$ have an angular difference of at most $\pi/2$, i.e., $\min \{ \diff(r,s), \diff(s,r) \} \leq \pi/2$.
This is a key property in proving the following lemma.

\begin{lemma}\label{lem:polySubsetStructure}
  The system of bisectors of $\cR^r_d$ satisfies the axioms (A1)-(A3).
\end{lemma}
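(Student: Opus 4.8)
The plan is to verify the three axioms one at a time, disposing of (A3) and (A1) quickly and concentrating the effort on (A2), which is where the clockwise $(\pi/2)$-rotation and the bound $\min\{\diff(r,s),\diff(s,r)\}\le \pi/2$ are genuinely needed. I would prove (A3) directly from the definition of the distance, independently of the rotation: for any $\cR'\subseteq \cR^r_d$ and any $x\in\R^2$ the value $\min_{s\in\cR'}\rdis(x,s)$ is the minimum of finitely many numbers in $[0,2\pi)$ and hence attained; if it is attained by a unique $s$ then $x\in\rreg(s)$, and if it is attained by two or more rays then $x$ lies on a bisector and thus in the closure of an adjacent region. Therefore the closure of the union of the regions is all of $\R^2$.

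For (A1) I would start from \cref{lem:NonParallelBisector}: $\rbis(r,s)$ is the concatenation $r$–$a$–$s$ of the two rays with the circular arc $a$ joining $\apex(r)$ to $\apex(s)$, a path whose two ends run to infinity, so it is unbounded and it is homeomorphic to a line as soon as it has no self-intersection. The only ways self-intersection can arise are that the two forward rays $r,s$ cross, or that a ray meets the arc away from an apex, so the step to pin down is that within a single rotated group no two rays cross in their forward direction. Since a uniform rotation does not affect whether two rays cross, it suffices to argue this for $\cR_d$, using that its rays are supporting rays of consecutive edges of the convex polygon $\Pol$ whose directions lie in a cone of opening at most $\pi/2$.

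The heart of the lemma is (A2): every region $\rreg(r)$ in every subdiagram $\rvd(\cR')$, $\cR'\subseteq\cR^r_d$, is connected. My plan rests on the fact that $\rdis(\cdot,r)$ depends only on the \emph{direction} of $x-\apex(r)$, so its level sets are rays from $\apex(r)$, while along any straight line the direction subtended at a fixed apex $\apex(s)$ rotates monotonically. Consequently, moving outward along a fixed ray from $\apex(r)$, each $\rdis(\cdot,s)$ is monotone \emph{except} for a single $2\pi$-to-$0$ jump at the point where that ray crosses the forward ray $s$. I would show that the bound $\diff\le\pi/2$ together with the rotation (which fixes the angular position of the whole group relative to the orientation convention of the distance) forces every such jump to fall outside the angular interval of directions spanned by $\rreg(r)$. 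Then along every ray from $\apex(r)$ the set $\rreg(r)=\bigcap_{s}\dr(r,s)$ is a single interval, and by \cref{lem:oneUnbounded} the directions for which this interval is non-empty themselves form a single arc; a region that is radially an interval over a single arc of directions is connected.

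An equivalent route, which I would use to cross-check the previous argument, is an inversion centered at $\apex(r)$ exactly as in the proof of \cref{lem:rvdConvexDiskDiagram}: the bisecting circles $\cb(r,s)$ pass through $\apex(r)$ and so invert to lines, turning the circular dominance constraints $\dr_D(r,s)$ into halfplanes whose intersection is convex, hence connected. The difficulty — and the step I expect to be the main obstacle — is controlling the \emph{ray} parts of the bisectors $\rbis(r,s)$, which inversion does not map to halfplanes; the whole purpose of splitting $\cR_\Pol$ into the four cardinal groups and rotating each by $\pi/2$ is precisely to push these ray cuts to the angular extremes of $\rreg(r)$ so that the region reduces to the connected disk-diagram region $\dreg(r)$. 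Making this reduction precise for \emph{every} subset $\cR'$, and not merely for the full group, is the delicate point, since (A2) must hold hereditarily over all subsets.
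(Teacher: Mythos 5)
Your treatment of (A3) matches the paper's and is fine, but the other two axioms have real problems. For (A1), the reduction ``a uniform rotation does not affect whether two rays cross, so it suffices to argue this for $\cR_d$'' is false on both counts: the passage from $\cR_d$ to $\cR^r_d$ rotates each ray about \emph{its own apex}, which is not a rigid motion and does change crossings; and the unrotated rays of $\cR_d$ genuinely do cross, since $r_i$ passes through $v_{i+1}=\apex(r_{i+1})$, so consecutive rays of $\cR_\Pol$ always intersect. The non-crossing you need holds only \emph{after} the rotation, and the paper proves it there: by convexity the counterclockwise angle from the edge direction at $v_i$ to the segment $\apex(r)\apex(s)$ lies in $[0,\pi]$, so after the clockwise $(\pi/2)$-rotation each rotated ray makes an angle of at least $\pi/2$ with that segment; hence $r$ and $s$ lie in opposite closed halfplanes orthogonal to the line through their apices and are separated by a strip. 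Your plan skips exactly the step where the rotation does its work.

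For (A2) you propose a genuinely different route from the paper (which reduces to triples via the abstract-Voronoi-diagram criterion, uses \cref{lem:diagramOfThree} to force any bounded face of a three-ray diagram to be incident to a ray, and then shows by a three-case geometric analysis that no ray of the triple crosses the bisecting circle of the other two twice). Your radial idea --- that $\rdis(\cdot,r)$ is constant on rays from $\apex(r)$ while each $\rdis(\cdot,s)$ is monotone along such a ray up to one $2\pi$-to-$0$ jump, so that $\rreg(r)$ is radially an interval provided the jumps are excluded --- is attractive and would automatically be hereditary over subsets, which is a real advantage over the triple reduction. But the two load-bearing claims are left unproved: (i) that the rotation and the bound $\min\{\diff(r,s),\diff(s,r)\}\le\pi/2$ push every jump (i.e., every crossing of a ray from $\apex(r)$ with a forward ray $s$) outside the directions relevant to $\rreg(r)$, and (ii) that ``radially an interval over an arc of directions'' implies connectedness, where the arc property does not follow from \cref{lem:oneUnbounded} alone (that lemma controls only the unbounded face's trace on a large circle, not the full set of directions in which $\rreg(r)$ is met). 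Your inversion cross-check has the same hole, which you yourself flag: inversion handles only the bisecting circles, i.e., the disk-diagram regions of \cref{lem:rvdConvexDiskDiagram}, and the identification of $\rreg(r)$ with $\dreg(r)$ for every subset $\cR'$ is precisely what must be proved. As it stands the proposal is a plausible programme with the decisive geometric steps missing, plus an incorrect reduction in (A1).
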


\begin{proof}
We prove each of the three axioms separately. 

\textbf{(A1):}
Let $r,s$ be a pair of rays in $\cR^r_d$.
We show that
$r$ and $s$ do not intersect. Then, by \cref{lem:NonParallelBisector}, the bisector of two non-intersecting rays is an unbounded simple curve.

Let $x \in r \setminus \{p(r)\}$ (resp. $y \in s \setminus \{p(s)\}$), denote a point lying on $r$ (resp. $s$), and let $L$ be the line passing through $\apex(r)$ and $\apex(s)$; see \cref{fig:axioms_proof1}.
Due to the convexity of the polygon, it follows that $\angle(x,\apex(r),\apex(s))$ and $\angle(\apex(s),\apex(r),x)$ are greater than or equal to $\pi/2$; hence $r$ lies in the closed halfplane orthogonal to $L$ incident to $\apex(r)$ which does not contain $\apex(s)$.
Analogously $s$, lies in the closed halfplane orthogonal to $L$ incident to  $\apex(s)$ that does not contain $\apex(r)$.
Thus, the horizontal strip defined by the two halfplanes (shown shaded in \cref{fig:axioms_proof1}) separates $r$ and $s$, and so, they do not intersect.

\textbf{(A2):}
It suffices to prove the property for any subset of $\cR^r_d$ of size three~\cite{klein2009}.
By \cref{lem:oneUnbounded}, each Voronoi region has exactly one unbounded face, so if a region is disconnected, then it must have  at least one bounded face.
The diagram of three rays can have at most one proper Voronoi vertex, as shown in \cref{lem:diagramOfThree}.
Thus, a bounded face in the diagram cannot be bounded only by edges incident to proper vertices, and so, such a bounded face must appear incident to a ray.
Hence, it suffices to show that no ray intersects twice the bisecting circle of the other two rays.

Let $\{ r,s,t \}$ be a subset of $3$ rays of $\cR^r_d$.
  We will show that no ray in $\{r,s,t\}$ intersects twice the bisecting circle defined by the other two rays in $\{r,s,t\}$.  
  We prove this for $r$, i.e., $r$ does not intersect twice $\cb(s,t)$; the cases of $s,t$ are analogous.
  Without loss of generality we assume that $\diff(t,s)<\diff(s,t)$. We divide the proof into three cases, depending on the relative position of $\apex(r)$. 
  
  (1) If $\apex(r)$ lies in the interior of $\cb(s,t)$, then $r$ intersects $\cb(s,t)$ exactly once and the claim follows.  
  
  (2) Assume that $\apex(r)$ appears between $\apex(s)$ and $\apex(t)$ along the convex polygon chain, and let $I$ be the intersection of the supporting lines $l(s)$ and $l(t)$;
  see \cref{fig:axioms_proof2}. 
  Because of the rotation of rays, we have $\angle(\apex(r),\apex(s),I)\leq\pi/2$ and $\angle(I,\apex(t),\apex(r))<\pi/2$.
  Thus, by the properties of cyclic quadrilaterals\footnote{A cyclic quadrilateral or inscribed quadrilateral is a quadrilateral whose vertices lie on a single circle, and therefore, the four perpendicular bisectors (of the sides) are concurrent. Also, a convex quadrilateral is cyclic if and only if its opposite angles are supplementary (i.e., their sum is $\pi$).},
  $\apex(r)$ lies in the interior of $\cb(s,t)$, the circle through $\apex(s)$, $\apex(t)$ and $I$, 
  see that $\alpha+\beta>\pi$ in \cref{fig:axioms_proof2}). Thus, $r$ intersects $\cb(s,t)$ once. 
  
\begin{figure}[t]
  \centering
  \begin{subfigure}[t]{0.325\textwidth}
  \includegraphics[width=\textwidth,page=9]{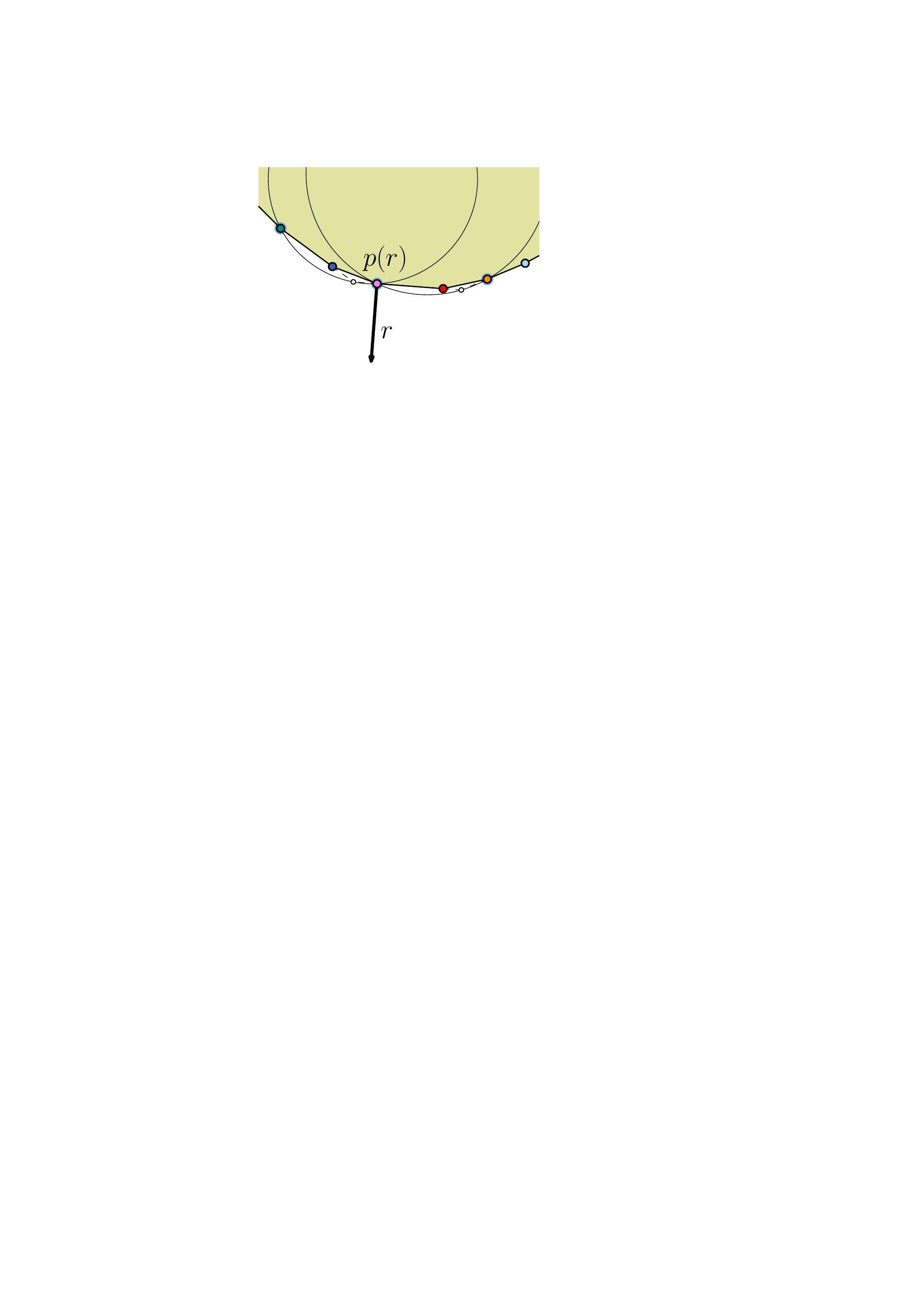}
    \caption
    {
      (A1): The rays $r$ and $s$ are separated by the shaded corridor.
    }
    \label{fig:axioms_proof1}
  \end{subfigure}
  \hfill
  \begin{subfigure}[t]{0.325\textwidth}
  \includegraphics[width=\textwidth,page=11]{rvdConvexAxioms.pdf}
    \caption
    {
      (A2): $\apex(r)$ is between $\apex(s)$ and $\apex(t)$, so it lies in $\cb(s,t)$.
      }
    \label{fig:axioms_proof2}
  \end{subfigure}
  \hfill
  \begin{subfigure}[t]{0.325\textwidth}
    \includegraphics[width=\textwidth,page=12]{rvdConvexAxioms.pdf}
    \caption
    {
      (A2): The vertical strip (shown shaded) separates $c$ and $r$.} 
    \label{fig:axioms_proof3}
  \end{subfigure}
  \caption
  {
    Illustrations for the proof of \cref{lem:polySubsetStructure}.
  }
  \label{fig:axioms_proof}
\end{figure}
  
  (3) It remains to study the case where $\apex(r)$ lies outside $\cb(s,t)$, and $\apex(r)$ appears before or after both $\apex(s)$ and $\apex(t)$, in counterclockwise order.
  We consider the case when $\apex(r)$ appears before $\apex(s)$ and $\apex(t)$.
  The other case is analogous. 
  
  Let $c$ denote the center of $\cb(s,t)$, let $r^*$ (resp. $s^*$, $t^*$) denote the ray $r$ (resp. $s$, $t$) rotated counterclockwise by $\pi/2$ around its apex, and let $I^*$ denote the intersection point between the supporting lines $l(s^*)$ and $l(t^*)$.
  Without loss of generality, we assume that $t^*$ is a horizontal ray pointing left;
  refer to \cref{fig:axioms_proof3}.
  If $r$ intersects twice $\cb(s,t)$, then $c$ lies to the right of the (directed) line $l(r^*)$,
  so it suffices to prove that $c$ lies to the left of $l(r^*)$.
  To show this, observe that $\apex(r^*)$
  lies to the right of the vertical line through $\apex(s^*)$ (since $\diff(t^*,r^*) \leq \pi/2$), and to the left of $l(s^*)$ (since $r^*$ and $s^*$ are induced by a convex polygon); see the red region in \cref{fig:axioms_proof3}.
  On the other hand, the vertical line through $I^*$ separates $C$ from $\apex(r^*)$.
  Hence, $c$ lies to the left of $r^*$, proving the claim.

\textbf{(A3):}
The diagram $\rvd(\cR^r_d)$ is defined by distance functions, one for each site in $\cR^r_d$, whose domain is the entire plane.
Hence, any point in the plane must belong to the closure of some region of $\rvd(\cR^r_d)$.
\end{proof}

Since each Voronoi region is connected and unbounded, and since $\rvd(\cR)$ is connected (\cref{lem:connectedDiagram}), we can infer the following.

\begin{corollary}
    \label{cor:rvdConvexDiagram4complexity}
  $\rvd(\cR^r_d)$ is a tree of $\Theta(\cR^r_d)$ complexity.
\end{corollary}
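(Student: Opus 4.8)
The plan is to read off the tree structure from the three facts already in place for the rotated subset. By \cref{lem:polySubsetStructure} the bisector system of $\cR^r_d$ obeys the axioms, so in particular every Voronoi region of $\rvd(\cR^r_d)$ is connected by axiom (A2). Since rotation preserves the non-parallel assumption, the rays of $\cR^r_d$ are pairwise non-parallel, so \cref{lem:oneUnbounded} applies verbatim and tells us that each region has exactly one unbounded face. Combining these two facts, I would first conclude that every region of $\rvd(\cR^r_d)$ is a \emph{single} unbounded face with no bounded component: a connected region whose unique face is unbounded cannot contain a bounded face. Hence the planar subdivision induced by $\rvd(\cR^r_d)$ has no bounded face whatsoever.

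Next I would establish acyclicity. Viewing $\rvd(\cR^r_d)$ as a one-complex embedded in the plane, suppose it contained a cycle. By the Jordan curve theorem the cycle would enclose a bounded region of the plane, and that region would contain at least one face of the subdivision entirely in its interior, forcing the existence of a bounded face and contradicting the previous paragraph. Thus the graph is acyclic. Since it is also connected by \cref{lem:connectedDiagram}, it is a tree, which proves the first part of the statement.

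For the complexity bound, write $m = |\cR^r_d|$. For the lower bound, the $m$ distinct unbounded regions partition the circle of directions at infinity into $m$ arcs, so the tree has exactly $m$ unbounded edges (ends), giving $\Omega(m)$ complexity. For the upper bound I would use that every proper (internal) Voronoi vertex has degree three under the general-position assumption, together with the $O(m)$-per-ray bound on ray and apex vertices: a plane tree with $m$ ends whose internal vertices have bounded degree has $O(m)$ vertices and edges in total; alternatively one may cite the linear complexity of nearest abstract Voronoi diagrams guaranteed by \cref{lem:polySubsetStructure}. Either way the complexity is $O(m)$, and together with the lower bound it is $\Theta(m)$.

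The step I expect to be the most delicate to phrase rigorously is the reduction ``every region connected $+$ one unbounded face $\Rightarrow$ no bounded faces,'' and then applying the Jordan-curve argument to the correct object, namely the full planar diagram $\rvd(\cR^r_d)$ of the \emph{subset} (not the polygon-restricted diagram). One must also explicitly note that \cref{lem:oneUnbounded,lem:connectedDiagram}, stated for a generic $\rvd(\cR)$ of non-parallel rays, transfer directly to $\rvd(\cR^r_d)$ because the clockwise $(\pi/2)$-rotation preserves the non-parallel hypothesis; the remaining complexity count is routine once the tree structure and the count of unbounded edges are fixed.
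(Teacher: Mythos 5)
Your proposal is correct and follows essentially the same route as the paper, which derives the corollary directly from the connectivity of each region (axiom (A2) via \cref{lem:polySubsetStructure}), the single unbounded face per region (\cref{lem:oneUnbounded}), and the connectedness of the diagram (\cref{lem:connectedDiagram}); you merely make explicit the Jordan-curve acyclicity step and the leaf-counting argument that the paper leaves implicit.
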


The results on abstract Voronoi diagrams \cite{klein1989} directly imply a randomized $O(n\log n)$ time algorithm to construct $\rvd(\cR^r_d)$.
We can further improve 
this time complexity, to $O(n)$-time, by showing that the system of bisectors of $\cR^r_d$ falls under the more restricted \emph{Hamiltonian abstract Voronoi diagram} framework \cite{klein1994}.
In addition to satisfying axioms (A1)-(A3), the following axiom must also be satisfied: 
\begin{itemize}
    \item \textbf{(A4)}
  There exists a simple curve $\Ham$ of constant complexity such that $\Ham$ visits each region $\rreg(r)$ in $\rvd(\cR')$, $\forall \cR' \subseteq \cR$ and $\forall r \in \cR'$, exactly once.
  $\Ham$ can be closed or unbounded.
\end{itemize}

If the system of bisectors of $\cR$ satisfies axioms (A1)-(A4) and the ordering of the regions of $\rvd(\cR')$ along $\Ham$ is known $\forall \cR' \subseteq \cR$, 
then $\rvd(\cR)$ can be computed in $\Theta(n)$-time~\cite{klein1994}.
Hence for our problem, it suffices to ]
find a curve $\Ham$ satisfying these properties.

\begin{lemma}
\label{lem:subsetAlgorithm}
    $\rvd(\cR^r_d)$ can be constructed in deterministic $\Theta(\vert\cR_d\vert)$ time. 
\end{lemma}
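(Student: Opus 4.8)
The plan is to place the system of bisectors of $\cR^r_d$ into the \emph{Hamiltonian} abstract Voronoi diagram framework of \cite{klein1994}. Since \cref{lem:polySubsetStructure} already establishes axioms (A1)--(A3), it remains only to exhibit a Hamiltonian curve $\Ham$ witnessing (A4), to describe the order in which the regions appear along $\Ham$ for every subset, and then to invoke the linear-time construction of \cite{klein1994}. The $\Omega(\vert\cR_d\vert)$ lower bound is immediate from the input/output size, so this would yield the claimed $\Theta(\vert\cR_d\vert)$ bound.

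For $\Ham$ I would take a circle $C$ of radius large enough to enclose all vertices of $\rvd(\cR^r_d)$ together with all bisecting circles $\cb(r,s)$, $r,s \in \cR^r_d$. Because the directions of the rays in $\cR^r_d$ span an angular range of at most $\pi/2$, no two of them are parallel, so \cref{lem:oneUnbounded} applies: in every subdiagram $\rvd(\cR')$, $\cR'\subseteq\cR^r_d$, each region has exactly one unbounded face, incident to its ray. Consequently $\rreg(r)\cap C$ is a single arc, i.e.\ $C$ visits each region exactly once, which is precisely (A4). The same circle serves every subset at once: the bisecting circles of any $\cR'\subseteq\cR^r_d$ form a subset of those of $\cR^r_d$ (each $\cb(r,s)$ depends only on $r,s$), and every proper vertex of a subdiagram lies on two such circles, hence is also enclosed by $C$; moreover, by (A1) the rays are pairwise non-intersecting, so no finite intersection vertices arise.

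It then remains to pin down the order of the regions along $C$. Following the proof of \cref{lem:oneUnbounded}, the arc $\rreg(r)\cap C$ contains the point $r\cap C$, and as one traverses $C$ counterclockwise the points $r\cap C$ are met in increasing order of the direction $\dir(r)$; hence the regions appear along $C$ sorted by direction. Crucially, this order is already available: the edges of the convex polygon $\Pol$ have monotonically turning directions along its boundary, and a uniform clockwise rotation by $\pi/2$ preserves this, so the rays of $\cR^r_d$ are handed to us pre-sorted by direction. Removing sites only enlarges the surviving arcs and leaves each $r\cap C$ inside $\rreg(r)$, so the order along $C$ for any subset $\cR'$ is merely the induced suborder---exactly the information that \cite{klein1994} requires. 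Feeding $C$, this order, and the bisector system into the Hamiltonian construction of \cite{klein1994} then produces $\rvd(\cR^r_d)$ in $O(\vert\cR_d\vert)$ deterministic time.

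I expect the main obstacle to be the verification of (A4) for \emph{all} subdiagrams simultaneously rather than for the full set alone; the delicate points are that a single circle $C$ can be chosen to enclose the relevant features of every subset, and that the along-$C$ region order is genuinely the direction order and therefore obtainable without any extra sorting. Both reduce to \cref{lem:oneUnbounded} and the convexity of $\Pol$, so once (A4) is secured the lemma follows essentially as a corollary of the earlier structural results together with the machinery of \cite{klein1994}.
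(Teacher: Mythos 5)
Your proposal is correct and follows essentially the same route as the paper: both take $\Ham$ to be a large circle enclosing all bisecting circles, verify axiom (A4) for every subset via the unbounded-face structure of the regions, observe that the region order along $\Ham$ is the (already given) order of the vertices along $\Pol$ and restricts to subsets, and then invoke the Hamiltonian abstract Voronoi diagram machinery of Klein and Lingas. The only cosmetic difference is that you justify the single-visit property directly from \cref{lem:oneUnbounded}, whereas the paper argues via the tree structure of $\rvd(\cR')$ and the fact that $\Ham$ crosses no bisecting circle.
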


\begin{proof}
    We show that $\cR^r_d$ satisfies axiom (A4) as defined above;
    the linear time algorithm is then a direct corollary of the existing results~\cite{klein1994}.
    
    Let $\Ham$ be a circle of sufficient large radius, such that all bisecting circles lie entirely in the interior of $\Ham$.
    For any $\cR' \subseteq \cR^r_d$, the diagram $\rvd(\cR')$ is a tree (\cref{cor:rvdConvexDiagram4complexity}), so its faces are all unbounded. 
    By its definition, $\Ham$ does not intersect any bisecting circle, hence, $\Ham$ must visit
    each region of $\rvd(\cR')$ exactly once; a change in the visited region takes place when $\Ham$ intersects a ray.
    
    The ordering of the unbounded faces of $\rvd(\cR^r_d)$ corresponds to 
    the ordering of the respective vertices along the polygon $\Pol$, and this is maintained for any $\cR' \subset \cR^r_d$.
    Further, the ordering of the vertices of $\Pol$ is part of the input, concluding the proof.
\end{proof}

\subsection{\texorpdfstring{$\Theta(n)$}{O(n)}-time algorithm: merging the four diagrams}
\label{subsec:pvdMerging}

We now merge all four diagrams to obtain $\pvd(\cR_{\Pol})$.
Our merging process consists of two phases.
In the \emph{first phase} we merge $\rvd(\cR^r_{W})$ and $\rvd(\cR^r_{S})$ to obtain $\rvd(\cR^r_{W} \cup \cR^r_{S})$; respectively for 
$\rvd(\cR^r_{E} \cup \cR^r_{N})$; see \cref{fig:convex_instance_merge_1}.
In the \emph{second phase} we merge the diagrams $\rvd(\cR^r_{W} \cup \cR^r_{S})$ and $\rvd(\cR^r_{E} \cup \cR^r_{N})$, restricted to the interior of $\Pol$, to obtain $\pvd(\cR_{\Pol})$; see \cref{fig:convex_instance_merge_2}.
We first outline the merging process 
at high level and then 
delve into the details of each procedure separately.
The process requires attention because the resulting diagrams do not fall under the  framework of abstract Voronoi diagrams. 
We will ultimately prove the following.

\begin{figure}[b]
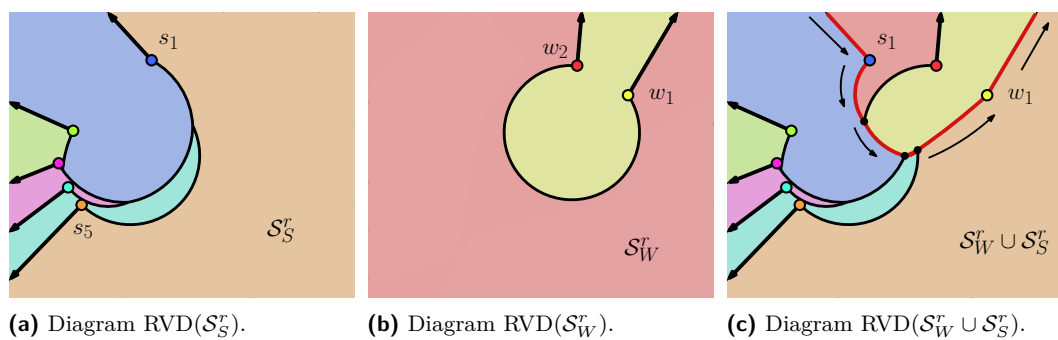
  
    \centering
    \begin{subfigure}[t]{0.325\textwidth}
        \includegraphics[width=\textwidth, page=9]{convex_example.pdf}
        \caption{
           Diagram $\rvd(\cR^r_{S})$.
        }
    \label{fig:convex_instance_merge_1one}
    \end{subfigure}
    \hfill
    \begin{subfigure}[t]{0.325\textwidth}
        \includegraphics[width=\textwidth, page=10]{convex_example.pdf}
        \caption{
           Diagram $\rvd(\cR^r_{W})$.
        }
    \label{fig:convex_instance_merge_1two}
    \end{subfigure}
    \hfill
    \begin{subfigure}[t]{0.325\textwidth}
        \includegraphics[width=\textwidth, page=11]{convex_example.pdf}
        \caption{
           Diagram $\rvd(\cR^r_{W} \cup \cR^r_{S})$.
        }           
    \label{fig:convex_instance_merge_1both}
    \end{subfigure}  
    \caption{
        First merging phase: merging $\rvd(\cR^r_{W})$ and $\rvd(\cR^r_{S})$.
        The highlighted red edges correspond to~the merge curve,
        and the black arrows schematize tracing which starts at infinity along ray $s_1$ and ends at ray $w_1$. 
    }
    \label{fig:convex_instance_merge_1}
\end{figure}

\begin{lemma}
    \label{lem:polyMerge}
    Given $\rvd(\cR_d)$ for all $d \in \{\textup{N,W,S,E}\}$, we can merge the four diagrams to obtain $\pvd(\cR_{\Pol})$ in $\Theta(n)$ time.    
\end{lemma}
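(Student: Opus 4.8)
The plan is to carry out each of the three merge operations as an adaptation of the classical divide-and-conquer merge step: trace a \emph{merge curve} separating the two input diagrams and then discard the superseded portions of each. I would phrase everything in terms of the disk diagram $\dd$, which is legitimate because $\rvd$ and $\dd$ coincide inside $\Pol$ and the bisecting circles are invariant under the common $(\pi/2)$-rotation; this reduces every bisector I trace to a circular arc of some $\cb(\cdot,\cdot)$. For two merged sets $A$ and $B$ (say $A=\cR^r_W$, $B=\cR^r_S$), the merge curve $\mathcal{M}$ is the locus of points equidistant to their nearest site in $A$ and in $B$, i.e.\ the subset of $\rvd(A\cup B)$ separating $A$-regions from $B$-regions.

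First I would pin down the shape of $\mathcal{M}$. Because $\cR_W$ and $\cR_S$ occupy contiguous arcs of $\partial\Pol$, and the cyclic order of unbounded faces at infinity follows the polygon order (\cref{lem:oneUnbounded,lem:subsetAlgorithm}), all $A$-faces and all $B$-faces are contiguous around the circle at infinity, leaving exactly two $A\leftrightarrow B$ transitions. These are the two ends of the unbounded component of $\mathcal{M}$, which is therefore a single simple path running from infinity to infinity. For the final merge I intersect with $\Pol$; there $\rvd$ is a tree with connected regions (\cref{cor:pvdComplexity}), so $\mathcal{M}\cap\Pol$ is a single simple path whose two ends lie on $\partial\Pol$.

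Next I would trace $\mathcal{M}$. Starting from a known end (at infinity along $s_1$ in \cref{fig:convex_instance_merge_1}), I maintain the current $A$-face $a$ and $B$-face $b$; locally $\mathcal{M}$ is the arc of $\cb(a,b)$, and I advance along it until it first leaves $\dreg(a)\cap\dreg(b)$, i.e.\ until it meets an edge of $\rvd(A)$ or of $\rvd(B)$. Crossing an $A$-edge advances the pointer into the neighboring $A$-face, crossing a $B$-edge advances the $B$-pointer, and I continue from the new arc. The key is that both pointers advance \emph{monotonically} in polygon order, so no face is re-entered and the number of steps is $O(\vert A\vert+\vert B\vert)$. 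Gluing is then local: delete the part of $\rvd(A)$ on the $B$-side of $\mathcal{M}$ and vice versa, and stitch the two trees along $\mathcal{M}$, all in time proportional to the traced features. Summing the two phases over the four sets yields $\Theta(n)$.

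The main obstacle is precisely that the unions $A\cup B$ need not satisfy axiom (A2) (\cref{remark:noavd}): their regions may be disconnected, so the abstract Voronoi diagram merge machinery does not apply and, a priori, $\mathcal{M}$ could carry extra closed loops bounding island faces. I would address this in two ways. For the decisive final phase I rely on the tree structure inside $\Pol$ (\cref{cor:pvdComplexity}) together with the monotone behaviour of the distance along region boundaries (\cref{remark:distance,lem:rvdConvexChain,lem:2IncomingEdges}) to guarantee that $\mathcal{M}\cap\Pol$ is one simple path and that the pointer walk cannot revisit a face, which is what certifies both correctness and linear running time. For the intermediate full-plane merges I must additionally argue that the portion of $\mathcal{M}$ relevant to reconstructing the input of the next phase is captured by the single unbounded walk, while any bounded components are handled consistently with the $O(n)$ output size. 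Establishing this monotone advancement directly, in the absence of the AVD guarantees, is the delicate part of the argument.
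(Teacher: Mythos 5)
Your overall strategy is the same as the paper's: trace the merge curve from a known unbounded end, advance pointers into the two sub-diagrams, and use the disk diagram to reason about the circular parts. But the two claims on which everything hinges are exactly the ones you leave open, and the paper supplies a concrete argument for each. First, the monotone (no-backtracking) advancement of the face pointers: you state it as ``the key'' and then concede that establishing it ``is the delicate part of the argument.'' The paper proves it (\cref{lem:noBacktrack}) by a local curvilinear-angle computation: at the proper vertex $v_R$ where the merge curve turns from edge $b$ onto edge $b_2$, the curvilinear angle between $b$ and $b_2$ is strictly less than $\pi$ (each angle at a degree-3 vertex is an intersection of two halfplanes), whereas the angle between the two pieces of the bisector being left behind is exactly $\pi$; combined with the fact that two related bisectors cross at most once inside $\Pol$, this forces the new edge to exit the current left region strictly between the last exit point and the entry point, so the scan never revisits boundary already consumed. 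Without this, the $\Theta(\vert A\vert+\vert B\vert)$ bound on the walk is unsupported. Second, the absence of extra (bounded) components of the merge curve: you flag this as the main obstacle, correctly note that (A2) fails for $A\cup B$ so the AVD machinery gives you nothing, and then do not resolve it. The paper resolves it via \cref{lem:rvdConvexDiskDiagram}: a bounded component made of circular edges would persist in the disk diagram, whose regions are connected because an inversion centered at the apex maps every bisecting circle to a line and every $\dd$ dominance region to a halfplane; a component attached to a ray is excluded separately by showing the whole right side of each ray $s_i$ lies in $\rreg(s_{i-1})$ (via the continuous-rotation argument). Endpoint identification is also done by a distance argument (distance $\pi/2$ along the apex chain, monotone increase along $E_C$), not by counting transitions at infinity.

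A smaller but real omission: in the first (full-plane) phase the merge curve is not purely circular. It contains the rays $s_1$ and $w_1$, and tracing along $w_1$ is nontrivial because $w_1$ can cross many circular edges of $\rvd(\cR^r_S)$; the paper handles this by intersecting $w_1$ with the tree $\rvd(\cR^r_S)$ in $O(\vert\cR_S\vert)$ time. Your description of the walk as always following ``the arc of $\cb(a,b)$'' does not cover this step, and your definition of $\mathcal{M}$ as the equidistant locus misses the ray edges entirely (they separate $A$-faces from $B$-faces without being equidistant points). In short: right skeleton, but the proposal as written does not yet constitute a proof of either correctness or the linear time bound.
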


\subparagraph{Outline of the merging process.}
We describe how to merge $\rvd(\cR^r_{W})$ with $\rvd(\cR^r_{S})$; merging $\rvd(\cR^r_{E})$ with $\rvd(\cR^r_{N})$ is analogous.
Let $w_1,\dots,w_k$ be the rays in $\cR^r_W$ and let $s_1,\dots,s_l$ be the rays in $\cR^r_S$ 
as they appear in counterclockwise order along the boundary of  $\Pol$. The rays in  $\cR^r_E$ and $\cR^r_N$ in the same order are denoted $e_1,\dots,e_p$ and $n_1,\dots,n_q$, respectively.

We need to construct the \emph{merge curve} of the two diagrams, which partitions $\R^2$ in two parts, and keep from one side the diagram $\rvd(\cR^r_{W})$ and from the other 
$\rvd(\cR^r_{S})$; refer to \cref{fig:convex_instance_merge_1both}, where the red edges illustrate the merge curve.
In the first phase, the merge curve consists of the two rays $s_1$ and $w_1$, and the set of circular edges of $\rvd(\cR^r_{W} \cup \cR^r_{S})$ equidistant to sites $w \in \cR^r_W$ and $s \in \cR^r_S$. 
The set of circular edges forms a single connected chain bounded by $\apex(s_1)$ and $w_1$.
We denote the set of circular edges in a merge curve by $E_C$.

In the second phase, we only perform merging inside the polygon $\Pol$, merging  $\rvd(\cR^r_{W} \cup \cR^r_{S})$ and  $\rvd(\cR^r_{E} \cup \cR^r_{N})$ restricted within $\Pol$.
Since the computation is restricted in $\Pol$, the merge curve consists only of the circular edges in $E_C$, which is a single connected chain bounded by $\apex(e_1)$ and $\apex(w_1)$; see \cref{fig:convex_instance_merge_2both}.

Following, we describe in detail the merging process and prove the correctness of our statements. Constructing the merge curve is based on finding a starting point along the merge curve, and then \emph{tracing} it, as in standard Voronoi diagram of points 
(see e.g., \cite{aurenhammerBook}).

\begin{figure}[t]  
    \centering
    \begin{subfigure}[t]{0.325\textwidth}
        \includegraphics[width=\textwidth, page=12]{convex_example.pdf}
        \caption{
           Diagram $\rvd(\cR^r_{W} \cup \cR^r_{S})$.
        }
    \label{fig:convex_instance_merge_2one}
    \end{subfigure}
    \hfill
    \begin{subfigure}[t]{0.325\textwidth}
        \includegraphics[width=\textwidth, page=13]{convex_example.pdf}
        \caption{
           Diagram $\rvd(\cR^r_{E} \cup \cR^r_{N})$.
        }
    \label{fig:convex_instance_merge_2two}
    \end{subfigure}
    \hfill
    \begin{subfigure}[t]{0.325\textwidth}
        \includegraphics[width=\textwidth, page=14]{convex_example.pdf}
        \caption{
           Diagram $\pvd(\cR_{\Pol})$.
        }           
    \label{fig:convex_instance_merge_2both}
    \end{subfigure}  
    \caption{
        Second merging phase: merging $\rvd(\cR^r_{W} \cup \cR^r_{S})$ and $\rvd(\cR^r_{E} \cup \cR^r_{N})$ restricted to $\Pol$.
    }
    \label{fig:convex_instance_merge_2}
\end{figure}

\subparagraph{Tracing along the rays (first merging phase).}
    As already mentioned, the merge curve at the first merging phase consists of the two rays $s_1$ and $w_1$ and the set of circular edges $E_C$.
    We start tracing the merge curve, by a point on the ray $s_1$ at infinity.
    Tracing along $s_1$ can be done trivially, this is because the ray lies entirely in $\rreg(w_k)$.
    To see that, consider the set $\cR_W$ (before rotation) and continuously clockwise rotate all rays by an angle of $\pi/2$.
    During this process, $w_k$ does not intersect any of the rays in $\cR_W$, hence $s_1 \in \rreg(w_k)$.

    After ray $s_1$, tracing continues along the circular edges $E_C$ (described in the next paragraph) and finally it reaches ray $w_1$.
    Tracing along the ray $w_1$ is done in a different way.
    In contrast to $s_1$, the ray $w_1$ may intersect many circular edges of $\rvd(\cR^r_S$), each inducing a vertex on $w_1$; see e.g., \cref{fig:convex_merge_special2}.
    To identify such vertices, we intersect $w_1$ with $\rvd(\cR^r_S)$.
    This can be easily done in $O(|\cR_S|)$ time, as $\rvd(\cR^r_S)$ is proved to be a tree (see \cref{cor:rvdConvexDiagram4complexity}).
    Further, the curve $E_C$ might intersect $w_1$ at some point other than $p(w_1)$; 
    see e.g., \cref{fig:convex_merge_special1}.
    In this case the aforementioned search 
    should start from that point.
    
\begin{figure}[t]
    \centering
    \begin{subfigure}[t]{0.48\textwidth}
    \includegraphics[width=.93\textwidth,page=13]{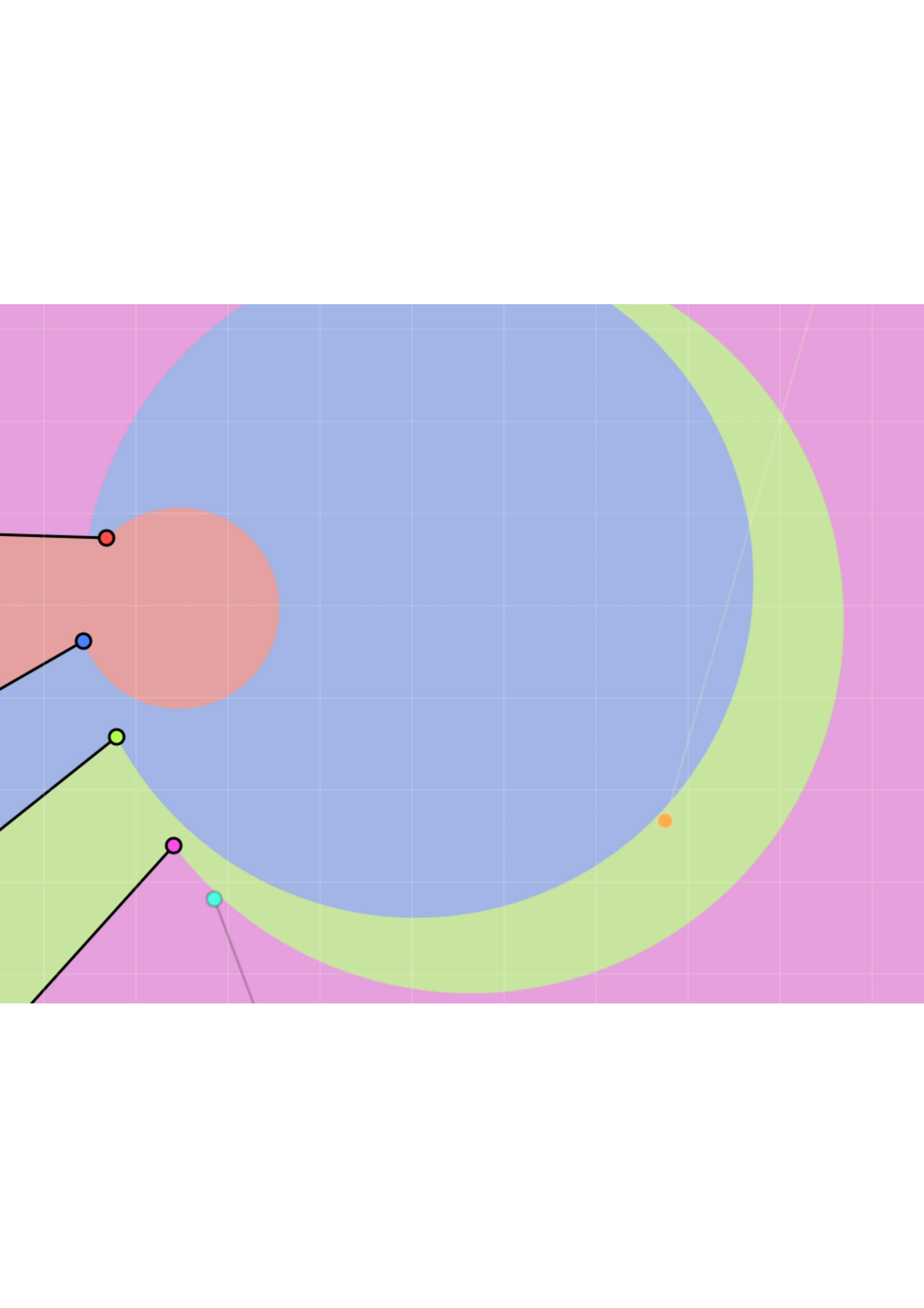}
        \caption{
        Ray $w_1$ intersects $\rvd(\cR^r_{S})$.
        }
        \label{fig:convex_merge_special2}
    \end{subfigure}
    \hfill
    \begin{subfigure}[t]{0.48\textwidth}
        \includegraphics[width=.93\textwidth,page=14]{merge1_example.pdf}
        \caption{
            The sequence $E_C$ does not end at $p(w_1)$.
        }
        \label{fig:convex_merge_special1}
    \end{subfigure}   
    \caption{
    Two special cases of merging two diagrams $\rvd(\cR^r_{W})$ and $\rvd(\cR^r_{S})$.
    }
    \label{fig:convex_merge_special} 
\end{figure}

\subparagraph{Tracing the sequence of circular edges $E_C$ (both merging phases).}

We now describe how to trace $E_C$ in $\Theta(|E_C|)$ time by adapting 
the standard procedure for tracing a merge curve in Voronoi diagrams, as done for example for bisectors of points \cite{aurenhammerBook}, to angular bisectors.
To establish correctness, 
however, we still need to prove that no \emph{backtracking} needs ever be done during merging, i.e., while tracing any portion of a Voronoi region is scanned at most once;
we prove this in \cref{lem:noBacktrack}.

Suppose we are in the process of merging two ray Voronoi  diagrams, tracing a merge curve, whose main portion of circular edges is denoted by $E_C$. 
Let $L$ be the set of rays defining the diagram to the left of the curve $E_C$ and $R$ the set to the right hand side.
Without loss of generality assume that we are tracing $E_C$ from top to bottom; refer to \cref{fig:rvdConvexTracing1}.
Let $\rreg_L(l)$ denote the Voronoi region of site $l \in L$ within $\rvd(L)$; respectively for $\rreg_R(r)$, $r\in R$.
Suppose that the current edge $b$ of $E_C$ has just entered a region $\rreg_L(l)$ at point $v$.
Let $r$ be the site of the right diagram such that the edge $b$ lies in the region $\rreg_R(r)$.
We determine the points $v_L$ (resp. $v_R$) where $b$ leaves the region $\rreg_L(l)$ (resp. $\rreg_R(r)$).
The point $v_L$ is found by scanning the boundary of $\rreg_L(l)$ clockwise starting from $v$.
The point $v_R$ is found by scanning the boundary of $\rreg_R(r)$ counterclockwise starting from $v$.
Without loss of generality assume that vertex $v_R$ is reached first, which then describes the endpoint of edge $b$.

$E_C$ continues from $v_R$ with another edge $b_2$ along the bisector $\rbis(l,r_2)$, where $r_2$ is another site in $R$.
To determine $v_{L2}$, we scan the boundary of $\rreg_L(l)$, starting from $v_L$ and moving clockwise. This is shown correct in the following lemma, which establishes that $v_{L2}$ cannot be on the boundary of $\rreg_L(l)$ that has already been scanned. Analogously for $v_{R2}$.

\begin{figure}[t]
    \centering
    \begin{subfigure}[t]{0.41\textwidth}
    \centering
    \includegraphics[width=\textwidth,page=1]{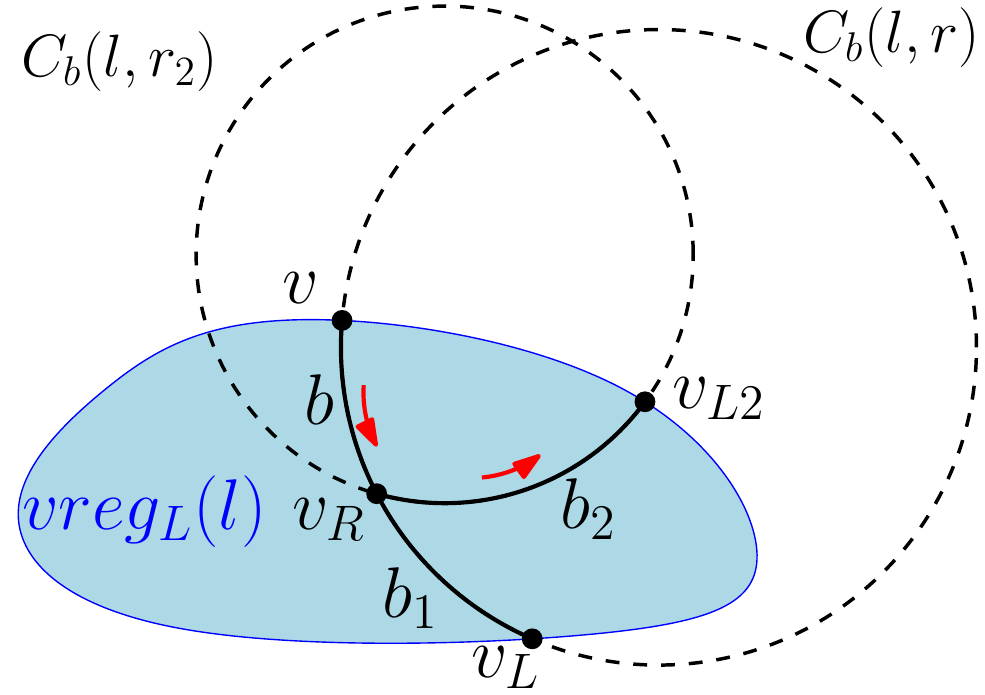}
        \caption{
            Tracing the merge curve (shown with arrows) in a region $\rreg_L(l)$ (shown shaded).
        }
    \label{fig:rvdConvexTracing1}
    \end{subfigure}
    \hfill
    \begin{subfigure}[t]{0.55\textwidth}
    \centering
        \includegraphics[width=.41\textwidth,page=2]{fig/backtracking.pdf}
        \hfill
        \includegraphics[width=.41\textwidth,page=3]{fig/backtracking.pdf}
        \caption{
          Tangents to the circular edges incident to~a proper vertex, and the respective curvilinear~angles.
        }        
    \label{fig:rvdConvexTracing2}
    \end{subfigure}        
    \caption{
    Illustration of the tracing process while merging $E_C$,
    and the non-necessity to backtrack.
    }
    \label{fig:rvdConvexTracing}
\end{figure}

\begin{lemma}
    \label{lem:noBacktrack}
    There is no need to backtrack while tracing the sequence of circular edges $E_C$.
\end{lemma}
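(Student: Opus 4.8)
The plan is to reduce the global ``no backtracking'' statement to a local monotonicity property of the exit points along a single region boundary, and then to certify that property by analysing the turn of $E_C$ at each proper vertex. First I would fix attention on a maximal portion of $E_C$ that stays inside one left region $\rreg_L(l)$. Along such a portion the left site $l$ is constant while the right site runs through a subsequence $r_1,r_2,\dots$, so that $E_C$ is a chain of circular arcs $b_j\subseteq\rbis(l,r_j)$ whose consecutive arcs meet at proper vertices $c_j$ equidistant to $l,r_j,r_{j+1}$ (the mirror situation, where $l$ changes and the right site is fixed, is handled by the analogous argument on $\rreg_R(r)$ scanned counterclockwise). With this description, the absence of backtracking is equivalent to the claim that the candidate exit points $v_L^{(j)}=\cb(l,r_j)\cap\partial\rreg_L(l)$ are encountered in monotone clockwise order along $\partial\rreg_L(l)$: if so, the clockwise scan that located $v_L^{(j-1)}$ can simply be resumed to find $v_L^{(j)}$, and no portion of the boundary already scanned is ever revisited (cf.\ \cref{fig:rvdConvexTracing1}).

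Next I would linearise the left region by inverting the plane about $\apex(l)$. As in the proof of \cref{lem:rvdConvexDiskDiagram}, this maps $\rreg_L(l)$ to an unbounded convex polygon $\widetilde P$, and it maps each bisecting circle $\cb(l,r_j)$, which passes through $\apex(l)$, to a line $\ell_j$. Consequently the arcs $b_j$ become straight segments, the chain $b_1,b_2,\dots$ becomes a polygonal chain inside $\widetilde P$, and the candidate exit points become the crossings $\ell_j\cap\partial\widetilde P$. Since inversion is conformal, the curvilinear angle between the incoming and outgoing arcs at each $c_j$ is preserved in magnitude (with a global reversal of handedness), so establishing that the arc-chain turns consistently becomes the statement that the image polygonal chain is convex. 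The goal then reduces to the elementary fact that a convex polygonal chain contained in a convex polygon meets $\partial\widetilde P$ with its forward edge-extensions in monotone order; undoing the inversion transfers this to the desired clockwise monotonicity of the $v_L^{(j)}$.

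The key step, therefore, is to show that $E_C$ turns in a fixed rotational sense at every proper vertex $c_j$. Here I would use the tangents of the three bisector arcs $\rbis(l,r_j)$, $\rbis(l,r_{j+1})$ and $\rbis(r_j,r_{j+1})$ meeting at $c_j$, together with the respective curvilinear angles, exactly as depicted in \cref{fig:rvdConvexTracing2}. By \cref{remark:distance} each bisecting circle carries a fixed orientation of increasing angular distance, determined by the angular difference of its two defining rays, and the three arcs locally separate the three regions $\rreg(l),\rreg(r_j),\rreg(r_{j+1})$ around $c_j$. Because the tracing keeps the region of the common site $l$ consistently on the same side as it passes from $b_j$ to $b_{j+1}$, these orientation constraints pin down the sign of the curvilinear angle, and I would argue it is the same at every vertex, giving the required curvilinear convexity of the chain.

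I expect the main obstacle to be precisely this last step: certifying that the curvilinear turn has a constant sign. Since the bisectors are circular arcs rather than lines, the relevant quantity is a tangent-to-tangent angle whose sign cannot be read off directly but must be deduced from the orientation data of \cref{remark:distance} and from the fact that the rays originate on a convex polygon. A secondary, more routine difficulty is ensuring that the clockwise scan stays within a single distance-monotone sub-chain of $\partial\rreg_L(l)$ --- recall from \cref{lem:rvdConvexChain} that the boundary distance rises from both apex-ends toward the region's farthest point --- and checking the boundary situations in which $E_C$ meets the bounding rays $s_1$, $w_1$ or $e_1$ already discussed in the preceding paragraphs, which can be treated separately with the same reasoning.
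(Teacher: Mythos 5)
Your skeleton matches the paper's: reduce ``no backtracking'' to the claim that successive exit points on $\partial\rreg_L(l)$ appear in monotone (counterclockwise) order, and certify this by controlling the turn of the merge curve at each proper vertex. But the step you yourself flag as the main obstacle --- that the curvilinear turn at every proper vertex has a fixed sign --- is exactly the content of the lemma, and your proposal leaves it unproven. The paper closes it with a short local argument: a proper vertex has degree three, so the three incident circular edges cut a neighborhood of the vertex into three curvilinear angles, each of which is (locally) the angle of an intersection of two halfplanes and hence strictly less than $\pi$. Therefore the outgoing merge edge $b_2$ makes a curvilinear angle less than $\pi$ with the incoming edge $b$ on the side of $\rreg(l)$, whereas the straight continuation $b_1$ of $b$ along the same bisector makes angle exactly $\pi$. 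Combined with the fact that two related bisectors intersect at most once inside $\Pol$ (their bisecting circles already share $\apex(l)$), $b_2$ cannot recross $b_1$, so it must leave $\rreg_L(l)$ strictly between $v_L$ and $v$ in counterclockwise order. Without some version of this angle bound, your plan does not go through: the orientation data of \cref{remark:distance} alone (which only tells you in which direction the distance grows along each arc) does not pin down on which side of the continuation of $b$ the new edge $b_2$ departs.

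A constructive remark: the inversion you introduce is actually strong enough to close the gap, but you point it at the wrong object. Instead of inverting only $\rreg_L(l)$ and then trying to prove convexity of the image of $E_C$ from local turns (which is circular, since the turn sign is what you are missing), invert about $\apex(l)$ the region of $l$ in the \emph{merged} diagram: every bisecting circle $\cb(l,\cdot)$ passes through $\apex(l)$, so, as in \cref{lem:rvdConvexDiskDiagram}, that region maps to an intersection of halfplanes, i.e.\ a convex polygon. The portion of the merge curve inside $\rreg_L(l)$ is part of this convex boundary, so its image is automatically a convex chain, and your monotone-exit argument then finishes the proof. Either route works; the paper's is the more elementary of the two, needing only the degree-three observation and the at-most-one-intersection property of related bisectors.
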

\begin{proof}
    Let $v$, $v_L$ and $v_{L2}$ be the points as defined in 
    the above description of tracing $E_C$.  
    To prove that there is no need for backtracking, it suffices to prove that the points $v$, $v_L$ and $v_{L2}$ appear in counterclockwise order along the boundary of face $\rreg_L(l)$;
    see \cref{fig:rvdConvexTracing1}.
    
    The \emph{curvilinear angle} between two intersecting curves is the angle between their two tangents at the point of intersection. 
    Each proper vertex of the diagram, has degree 3, so, the three edges incident to a vertex induce three curvilinear angles; see \cref{fig:rvdConvexTracing2}. Each such curvilinear angle can be seen as the angle at the intersection of two halfplanes. Hence, each such angle is less than $\pi$.
    
    The point $v_R$ is a proper vertex in the merged diagram. Therefore, the curvilinear angle $\angle(v_{L2},v_R,v)$ between the edges $b_2$ and $b$ is less than $\pi$. On the other hand, the angle $\angle(v_L,v_R,v)$ between the edges $b_1$ and $b_2$ is exactly $\pi$ as both edges lie on the same bisector.
    Within the polygon, two related bisectors intersect at most once.
    Thus, the edge $b_2$ has to hit the boundary of $\rreg_L(l)$ after $v_L$ but before $v$ in counterclockwise order.
\end{proof}

Since there is no backtracking required to trace $E_C$, the tracing takes $\Theta(|E_C|)$ time.

\subparagraph{Correctness of the construction of $E_C$.}
    Following, we show the correctness of some claims used earlier without proof.
    More specifically, we show that
    \textbf{$(i)$} the chain $E_C$ is incident to $w_1$, and $\apex(s_1)$ in the first phase, and to $\apex(e_1)$ in the second phase;
    \textbf{$(ii)$} the chain constructed is the complete curve $E_C$, i.e., there are no other connected components left to identify;
    
    \textbf{$(i)$} In the first merging phase, considering tracing the chain $E_C$ starting at $\apex(s_1)$.
    The distance at $p(s_1)$, is exactly $\pi/2$, and it is monotonically increasing.
    Further, consider the polygonal chain $P^*$ consisting of the line segments $\ov{\apex(w_1)\apex(w_2)},$ $\ov{\apex(w_2)\apex(w_3)},\dots,$ $\ov{\apex(w_k)\apex(s_1)},$ $\ov{\apex(s_1)\apex(s_2)}, \dots,$ $\ov{\apex(s_{l-1})\apex(s_l)}$ and the ray $s_l$.
    The distance of any point on $\Pol^*$ to its nearest ray is $\pi/2$.
    Hence, as the distance along the chain $E_C$ is increasing, the only possibility for this chain to end up is at $w_1$.
    Similarly, in the second merging phase, the distance of any point on the complete polygon $\Pol$ to its nearest ray is $\pi/2$, and hence $E_C$ is bounded by $\apex(e_1)$ and $\apex(w_1)$.
  
    \textbf{$(ii)$} To prove this statement we use the disk diagram, defined in \cref{subsec:pvdProperties}.
    By \cref{lem:oneUnbounded}, each region has exactly one unbounded face, so if there exists another connected component in $E_C$, it has to be bounded.
    A second unbounded component would imply that a Voronoi region has two unbounded faces.
    Suppose that the merge curve has another bounded connected component.
    Such a component is bounded entirely by circular edges, and these edges are induced by the respective bisecting circles of the bisectors.
    Since the bisecting circles of the disk diagrams are supersets of the circular edges appearing in $\rvd(\cR^r_{W} \cup \cR^r_{S})$, $\rvd(\cR^r_{E} \cup \cR^r_{N})$ and $\pvd(\cR_{\Pol})$, such a bounded component would also appear in the respective disk diagrams,
    a contradiction to \cref{lem:rvdConvexDiskDiagram}.
    
    Suppose now that the merge curve has a component which is bounded from one side by a ray.
    In the final merging step this is not possible, as on $\pvd(\cR_{\Pol})$ all points on an edge/ray of $\Pol$ belong to the region of the respective ray.
    In the initial merging step 
    (assuming that we merge $\rvd(\cR^r_{W})$ with $\rvd(\cR^r_{S})$),
    for every ray $s_i \in \cR^r_{S}$,  except from $w_1$,
    the complete right side of the ray is incident to $\rreg(s_{i-1})$.
    This can be proved with the same argument used to show that $s_1 \in \rreg(w_k)$ (where $s_1$ is the first ray of $\cR^r_S$, and $w_k$ is the last ray of $\cR^r_W$, respectively).
    As a result, no bounded component of $E_C$ could be incident to a ray.
    Hence, $E_C$ is a single unbounded chain.    

    From the above discussion we can infer that 
    the merge curve does not induce bounded faces in the resulting diagram, 
    except from ray $w_1$ in the first merging phase.
    As a result in the first merged diagram $\rvd(\cR^r_{W} \cup \cR^r_{S})$, each ray $s_i \in \cR^r_S$ can have at most two faces (the unbounded one and the one incident to $w_1$) and $\rreg(w_i)$ is connected for any $w_i \in \cR^r_W$.
    On the contrary in $\pvd(\cR_{\Pol})$, $\rreg(r_i)$ is connected for any $r_i \in \cR_{\Pol}$.

\subparagraph{Overall time complexity.}
In the first step, tracing the rays $s_1$ and $n_1$ 
takes $\Theta(1)$ time, and tracing the rays $w_1$ and $e_1$, takes $\Theta(|\cR_{S}|)$ time and $\Theta(|\cR_{N}|)$ time, respectively.
Tracing the curve $E_C$ takes $O(|\cR_{W}| + |\cR_{S}|)$ time, and $O(|\cR_{E}| + |\cR_{N}|)$ time, respectively; so in total the first step requires $O(n)$ time.
The final step requires $O(n)$ time to trace $E_C$ and $\Theta(n)$ to restrict the diagram into $\Pol$.
So, the overall merging of the four diagrams takes $\Theta(n)$ time.

Putting everything together, we can trivially split $\cR_{\Pol}$ into four sets in $\Theta(n)$ time, we can construct the four diagrams in $\Theta(n)$ time (\cref{lem:subsetAlgorithm}), and we can merge them in $\Theta(n)$ time (\cref{lem:polyMerge}).
So, we can summarize (and re-state) the main result of this section as follows.

\convexAlgo*

\subsection{Brocard illumination of a convex polygon}
\label{subsec:pvdBrocard}

We now turn to the Brocard illumination problem of a convex polygon $\Pol$.
Our goal is to find the Brocard angle of $\Pol$, which is
\begin{align*}
    && \minangle = \max_{x \in \Pol}\min_{r \in \cR_\Pol}\rdis(x,r).
\end{align*}
Observe that the diagram $\pvd(\cR_\Pol)$ is a subset of $\rvd(\cR_\Pol)$, hence \cref{prop:rvdRealizedVertex} applies also in this setting, and so $\minangle$ is realized on $\pvd(\cR_\Pol)$.
However, since the diagram is strictly confined into $\Pol$, the point realizing the Brocard angle, can only lie on a vertex equidistant to 3 rays; see an example in \cref{fig:rvdConvexInstanceBrocard1}.
\begin{figure}[b]
	\centering
	\begin{subfigure}[t]{0.48\textwidth}
		\centering
		\includegraphics[trim=15mm 18mm 20mm 13mm, clip, width=0.75\textwidth, page=18]{convex_example.pdf} %
		\caption{
			$\pvd(\cR_\Pol)$ and the rays realizing $\minangle$.
		}
		\label{fig:rvdConvexInstanceBrocard1}
	\end{subfigure}
	\hfill
	\begin{subfigure}[t]{0.48\textwidth}
		\centering
		\includegraphics[trim=15mm 18mm 20mm 13mm, clip, width=0.75\textwidth, page=19]{convex_example.pdf}%
		\caption{
			The three $\minangle$-floodlights illuminating $\Pol$.
		}
		\label{fig:rvdConvexInstanceBrocard2}
	\end{subfigure}
	\caption{
		The Brocard angle $\minangle$ of a polygon $\Pol$ realized by $(e_1,w_1,s_1)$.
	}
	\label{fig:rvdConvexInstanceBrocard}
\end{figure}

Similarly to the setting in $\R^2$, to find $\minangle$ we can first construct $\pvd(\cR_{\Pol})$ and then we can traverse it to find the vertex of maximum distance.
Both steps can be done in $\Theta(n)$ time resulting in the following.

\begin{theorem}
	\label{thm:brocard_angle}
	The Brocard angle of a convex polygon $\Pol$  can be found in $\Theta(n)$ time.
\end{theorem}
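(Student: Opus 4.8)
The plan is to reduce the computation of $\minangle$ to a search over the vertices of $\pvd(\cR_\Pol)$, exploiting the construction result already in hand. First I would invoke \cref{thm:convex_algorithm} to build $\pvd(\cR_\Pol)$ in $\Theta(n)$ time; by \cref{cor:pvdComplexity} this diagram has a tree structure of $\Theta(n)$ combinatorial complexity, and each of its vertices is equidistant to a constant number of sites, which are recorded during the construction.

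Next I would justify that the search may be restricted to the vertices of the diagram. By the discussion preceding the theorem, $\minangle = \max_{x \in \Pol}\min_{r \in \cR_\Pol}\rdis(x,r)$ is realized on $\pvd(\cR_\Pol)$: \cref{prop:rvdRealizedVertex} places the maximizer on the graph $\rvd(\cR_\Pol)$, and since the diagram is confined to the bounded region $\Pol$, the ``point at infinity'' alternative of that proposition is excluded, so the maximizer is an ordinary vertex. Moreover, by the monotonicity of the distance function along each circular edge (\cref{remark:distance}), the distance at an interior point of any edge cannot exceed the distance at both of its endpoints; hence the maximum over $\Pol$ equals the maximum of $\min_{r}\rdis(\cdot,r)$ taken over the $\Theta(n)$ vertices of $\pvd(\cR_\Pol)$, and under the working assumption of no anti-parallel rays this vertex is a proper vertex equidistant to three rays (as in \cref{fig:rvdConvexInstanceBrocard1}).

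Then I would simply traverse the diagram. Since $\pvd(\cR_\Pol)$ is a plane graph of linear size, a standard depth-first or breadth-first traversal visits every vertex in time linear in the size of the diagram, i.e.\ $\Theta(n)$. At each visited vertex I evaluate its angular distance to one of its inducing sites, which is an $O(1)$ computation, and I keep the running maximum. The returned maximum is exactly $\minangle$. The upper bound is therefore $\Theta(n)$ for construction plus $\Theta(n)$ for traversal, giving $O(n)$ overall; the matching $\Omega(n)$ lower bound is immediate, as any algorithm must read all $n$ vertices of $\Pol$.

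I do not expect a genuine obstacle here, since the computational weight lies entirely in \cref{thm:convex_algorithm}. The only point requiring care is the reduction argument: one must confirm that the global maximum of the nearest-site distance is attained at a diagram vertex rather than in the interior of an edge or at infinity. This follows cleanly by combining \cref{prop:rvdRealizedVertex} with the monotonicity along edges from \cref{remark:distance}, so the proof amounts to assembling these already-established facts and charging the traversal cost.
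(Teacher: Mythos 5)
Your proposal is correct and follows exactly the paper's argument: construct $\pvd(\cR_\Pol)$ in $\Theta(n)$ time via \cref{thm:convex_algorithm}, use \cref{prop:rvdRealizedVertex} (with the point-at-infinity case excluded by the bounded domain) to restrict the search to vertices, and traverse the linear-size diagram to find the maximum. The extra details you supply (edge monotonicity from \cref{remark:distance} and the trivial $\Omega(n)$ lower bound) are consistent with the paper's surrounding discussion.
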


Following, we give tight bounds on the value of the Brocard angle.

\begin{proposition}
  \label{prop:rvdBoundAngleConvex}
    Given a convex polygon $\Pol$, the range of values of the Brocard angle is $(0,\pi/2 - \pi/n]$.
\end{proposition}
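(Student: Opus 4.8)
The plan is to prove the two containments separately: that every convex $n$-gon has $\minangle \in (0,\pi/2-\pi/n]$, and that every value of this interval is attained. The strict positivity $\minangle>0$ is immediate. Pick any interior point $x_0 \in \Pol$; since each ray $r_i$ meets $\Pol$ only along the boundary edge $\overline{v_iv_{i+1}}$, the point $x_0$ lies on no ray, so $\rdis(x_0,r_i)>0$ for all $i$, whence $\minangle \ge \min_i \rdis(x_0,r_i)>0$. Thus $0$ is never attained, which accounts for the open endpoint.

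For the upper bound I would prove the stronger pointwise statement that $\min_i \rdis(x,r_i)\le \pi/2-\pi/n$ for \emph{every} interior point $x$; maximizing over $x$ then gives $\minangle\le \pi/2-\pi/n$. Fix an interior $x$, write $d_i := \rdis(x,r_i)$, let $\theta_i$ be the interior angle of $\Pol$ at $v_i$, and consider the triangle $v_i v_{i+1} x$ with angles $\alpha_i,\beta_i,\gamma_i$ at $v_i,v_{i+1},x$ respectively. Since $x$ lies to the left of $r_i$ one checks $\alpha_i=d_i\in(0,\theta_i)$, and since $\overrightarrow{v_{i+1}x}$ splits the interior angle at $v_{i+1}$ one gets $\beta_i=\theta_{i+1}-d_{i+1}$. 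Writing $\rho_i=|x-\apex(r_i)|$, the law of sines in each triangle yields $\rho_{i+1}/\rho_i=\sin d_i/\sin\beta_i$, and the product over the $n$ triangles telescopes to $1$, which after reindexing gives the Brocard-type identity
\begin{equation*}
  \prod_{i=1}^n \sin d_i \;=\; \prod_{i=1}^n \sin(\theta_i-d_i). \tag{$\star$}
\end{equation*}
(For a triangle this is exactly the classical relation defining the Brocard angle.)

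To finish, suppose for contradiction that $\min_i d_i>\delta:=\pi/2-\pi/n$. If some $\theta_i\le\delta$ then $d_i<\theta_i\le\delta$ already contradicts the assumption, so I may take every $\theta_i>\delta$. Each $g_i(d):=\ln\sin d-\ln\sin(\theta_i-d)$ is strictly increasing on $(0,\theta_i)$, since $g_i'(d)=\sin\theta_i/(\sin d\,\sin(\theta_i-d))>0$; hence $d_i>\delta$ gives $\sum_i g_i(d_i)>\sum_i\big(\ln\sin\delta-\ln\sin(\theta_i-\delta)\big)$. Setting $u_i:=\theta_i-\delta\in(0,\pi)$ and using $\sum_i\theta_i=(n-2)\pi$, one gets the exact cancellation $\tfrac1n\sum_i u_i=\delta$, so the log-concavity of $\sin$ on $(0,\pi)$ and Jensen's inequality give $\tfrac1n\sum_i\ln\sin u_i\le\ln\sin\delta$, i.e.\ $\sum_i(\ln\sin\delta-\ln\sin u_i)\ge 0$. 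Therefore $\sum_i g_i(d_i)>0$, i.e.\ $\prod_i\sin d_i>\prod_i\sin(\theta_i-d_i)$, contradicting $(\star)$. This proves the pointwise bound and hence $\minangle\le\pi/2-\pi/n$; the bound is tight because for the regular $n$-gon the centre has equal angular distance $\tfrac12\!\left(\pi-\tfrac{2\pi}{n}\right)=\pi/2-\pi/n$ to all rays, so its Brocard angle is exactly $\pi/2-\pi/n$.

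Finally, to realize every value in the interval I would exhibit a family of convex $n$-gons whose Brocard angle tends to $0$ — for instance a ``flattened'' polygon obtained by shrinking one coordinate toward a degenerate segment — and invoke the continuity of $\minangle$ as a function of the vertex coordinates over the connected space of convex $n$-gons; the image is then an interval containing values arbitrarily close to $0$ together with the maximum $\pi/2-\pi/n$ but excluding $0$, i.e.\ exactly $(0,\pi/2-\pi/n]$. I expect the main obstacle to be the upper bound, specifically deriving $(\star)$ and then spotting the right monotonicity-plus-Jensen comparison (in particular the identity $\tfrac1n\sum_i(\theta_i-\delta)=\delta$ that makes Jensen applicable after ruling out sharp angles $\theta_i\le\delta$); the degeneration for the lower end should require only a routine limiting argument.
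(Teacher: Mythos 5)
Your proof is correct, and for the upper bound it takes a genuinely different route from the paper: where the paper simply cites the $\pi/2-\pi/n$ bound from the literature (Besenyei; Dmitriev--Dynkin), you give a self-contained derivation. Your telescoping of the law of sines around the $n$ triangles $v_iv_{i+1}x$ correctly yields the generalized Brocard identity $\prod_i\sin d_i=\prod_i\sin(\theta_i-d_i)$ (the triangles are non-degenerate and $d_i\in(0,\theta_i)$ because each $l(r_i)$ supports $\Pol$), and the monotonicity of $g_i$ combined with Jensen applied to $\ln\sin$ on $(0,\pi)$ — enabled by the exact identity $\sum_i(\theta_i-\delta)=n\delta$, which follows from $\sum_i\theta_i=(n-2)\pi$ — cleanly rules out $\min_i d_i>\delta$ at any interior point. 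This buys a proof that is verifiable without external references and that makes the pointwise nature of the bound explicit; the paper's version buys brevity. The remaining pieces coincide with the paper: tightness at $\pi/2-\pi/n$ via the centre of the regular $n$-gon, and realization of the whole interval $(0,\pi/2-\pi/n]$ by flattening a regular polygon and invoking continuity of $\minangle$ in the vertex coordinates. Note that this last step (continuity, and the fact that $\minangle\to 0$ under flattening) is asserted rather than proved in your write-up, but the paper's own proof is no more rigorous on this point, so it is not a gap relative to the paper.
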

\begin{proof}
  A $\pi/2 - \pi/n$ upper bound on the Brocard angle is given in~\cite{besenyei2015,dmitriev1946}.
  Such an angle is realized by regular polygons.
  The last illuminated point is the center of the polygon, which is simultaneously illuminated by all the floodlights at an angle of $\pi/2 - \pi/n$.
  
  To prove the lower bound note that, while preserving convexity, we can smoothly transform a regular polygon into a polygon whose bounding box has width $w$, height $h$, and an aspect ratio $h/w$ arbitrarily close to zero, so that $\minangle$ is also arbitrarily close to zero.
  Hence it is possible to get any Brocard angle in the range $(0,\pi/2 - \pi/n]$.
\end{proof}

\subparagraph{Illumination of a convex polygon by 3 floodlights.}
Note that the three floodlights which realize $\minangle$ suffice to illuminate $\Pol$, implying the following; see the example of \cref{fig:rvdConvexInstanceBrocard2}.

\begin{remark}
	A convex polygon $\Pol$ can be entirely illuminated by three $\minangle$-floodlights.
\end{remark}

We conclude by discussing an implication of our results.
Consider the following question~\cite{orourke1995}:
given a convex polygon $\Pol$ with $n$ vertices, what is the minimum angle $\beta^*$, such that three vertex $\beta^*$-floodlights (not necessarily aligned with the edges) illuminate $\Pol$.
An $\beta^*=\pi/6$ solution for $n=3$, and an $\beta^*= \pi/4$ solution for $n=4$, is given in \cite{contreras1998}.
Further, a $\beta^*=\pi/3$ solution for arbitrary $n$ is given in \cite{urrutia2000}.
Our results imply a $\beta^*=\minangle$ solution for arbitrary $n$ and, as proved in \cref{prop:rvdBoundAngleConvex}, $\minangle \leq \pi/2 - \pi/n$.
Hence, our results subsume the aforementioned solutions for $n=3,4,6$ and improve the case of $n=5$, to $\beta^*=3\pi/10$.

\section{Rays Voronoi diagram restricted to curves}
\label{sec:curve}

Floodlight illumination problems have also been considered restricted to curves, see e.g., \cite{contreras1998stage,estivill1995,ito1998,toth2003segments}.
Motivated by such problems, 
let $\cR$ be a set of $n$ rays in $\mathbb{R}^2$, and let the domain of interest be a simple curve $\C$.
We denote by $\rvd_\C(\cR)$ the rotating rays Voronoi diagram of $\cR$ restricted to $\C$.
We show that 
$\rvd_\C(\cR)$ can be viewed as the lower envelope of distance functions in 2-space.

\subsection{Brocard illumination of a line}

We first consider the curve $\C$ to be a line; see \cref{fig:line_example}.
We prove the following.

\begin{figure}[b]
    \centering
    \begin{subfigure}[t]{0.49\textwidth}
        \centering
        \includegraphics[width=\textwidth,page=5]{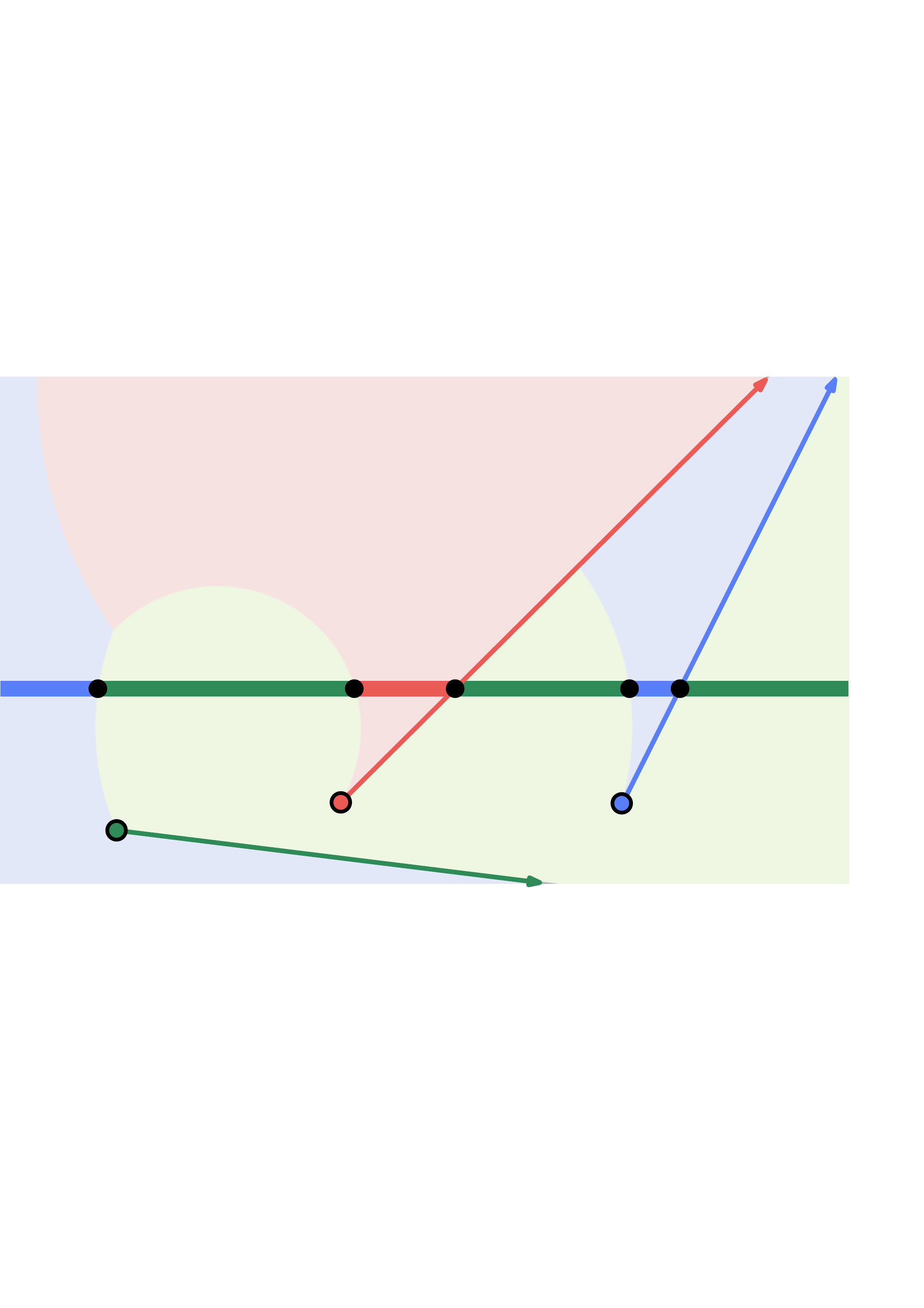}        
        \caption{
            $\rvd_\C(\cR)$ as the intersection of
            $\rvd(\cR)$ in $\mathbb{R}^2$ with $\C$.
            }
    \label{fig:line_example1}
    \end{subfigure}
    \hfill
    \begin{subfigure}[t]{0.49\textwidth}
        \centering
        \includegraphics[width=\textwidth,page=6]{line_example.pdf}
        \caption{
            $\rvd_\C(\cR)$ as the lower envelope (highlighted) of distance functions (dashed).          
            }
    \label{fig:line_example2}
    \end{subfigure}  
    \caption{
        The curve $\C$ is the horizontal line $x_2=0$, and $\cR$ is a set of 3 rays $\{r,b,g\}$.
    }    
    \label{fig:line_example}
\end{figure}

\begin{theorem}
  \label{thm:rvdCurveLine}
  Given a line $\C$, $\rvd_\C(\cR)$ 
  has complexity $O(n 2^{\alpha(n)})$ and it can be constructed in $O(n\alpha (n)\log n)$ time.
\end{theorem}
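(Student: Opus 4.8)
The plan is to view $\rvd_\C(\cR)$ as the \emph{minimization diagram} (lower envelope) of the $n$ univariate distance functions that the rays induce on $\C$, and then bound its complexity and construction cost with Davenport--Schinzel theory. First I would parametrize $\C$ by $t\in\R$ as $x(t)$, and for each ray $r\in\cR$ define $g_r(t):=\rdis(x(t),r)$. Since the site nearest to a point $x\in\C$ is the one minimizing $\rdis(x,\cdot)$, the diagram $\rvd_\C(\cR)$ is exactly the lower envelope of the family $\{g_r\}_{r\in\cR}$, and the quantity to bound is the number of breakpoints of this envelope along $\C$.

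Next I would analyze a single function $g_r$. Writing $g_r(t)=(\phi_r(t)-\dir(r))\bmod 2\pi$, where $\phi_r(t)$ is the direction from $\apex(r)$ to $x(t)$, one checks that $\phi_r$ is monotone in $t$ and sweeps an angular interval of length at most $\pi$ as $t$ ranges over $\R$. Hence $g_r$ is monotone and crosses the $0/2\pi$ seam at most once, precisely where $\C$ meets the ray $r$; so $g_r$ has at most one jump discontinuity and splits into at most two continuous (monotone) pieces. This produces a family of at most $2n$ partially defined continuous functions whose lower envelope coincides with that of $\{g_r\}$.

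The crucial step is bounding how often two such pieces cross. A genuine crossing of $g_r$ and $g_s$ is a point of $\C$ equidistant to $r$ and $s$; by \cref{lem:NonParallelBisector} every such point lies on the bisecting circle $\cb(r,s)$, and a line meets a circle in at most two points (at most one if $\cb(r,s)$ degenerates to a line). Thus any two of the continuous pieces cross at most $s=2$ times. Applying the standard Davenport--Schinzel bound to these $2n$ partially defined functions with pairwise crossing number at most $2$, the lower envelope has at most $\lambda_{4}(2n)=O\!\left(n\,2^{\alpha(n)}\right)$ pieces, which gives the claimed complexity; the at most $n$ jump discontinuities contribute only an additive $O(n)$.

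For the construction I would run the classical divide-and-conquer algorithm for lower envelopes of partially defined functions: split $\cR$ into two halves, recurse, and merge the two sub-envelopes by a synchronized sweep, computing the $O(1)$ intersections of each pair of arcs on the fly. With pairwise crossing number $s=2$, this runs in $O\!\left(\lambda_{3}(2n)\log n\right)=O(n\,\alpha(n)\log n)$ time, matching the statement. The main obstacle is the discontinuity of $\rdis$: I must argue carefully that the equidistant locus (the genuine crossings, which live on $\cb(r,s)$) is cleanly separated from the seam behaviour (which corresponds to $\C$ meeting a ray and is absorbed as an endpoint of a partial function), so that the honest pairwise crossing number is $2$ and the partial-function framework applies. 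Handling the degenerate cases where an apex lies on $\C$, or where rays are parallel or antiparallel, is a routine but necessary case check that only affects the count by an additive $O(n)$.
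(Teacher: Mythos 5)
Your proposal is correct and follows essentially the same route as the paper: both reduce $\rvd_\C(\cR)$ to the lower envelope of the at most $2n$ partially defined distance functions obtained by splitting each $\rdis(\cdot,r)$ at its discontinuity where $r$ meets $\C$, bound the pairwise crossing number by $s=2$ via the bisecting circle $\cb(r,s)$ meeting the line at most twice, and invoke the standard Davenport--Schinzel bounds $O(\lambda_{4}(n))=O(n\,2^{\alpha(n)})$ for complexity and $O(\lambda_{3}(n)\log n)=O(n\,\alpha(n)\log n)$ for construction. The extra observations you add (monotonicity of each piece, the careful separation of seam behaviour from genuine equidistant crossings) are consistent with, and slightly more detailed than, the paper's argument.
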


\begin{proof}
  Without loss of generality, let $\C$ be the horizontal line $x_2=0$; see \cref{fig:line_example}.
  Each site $r \in \cR$ induces a distance function in 2-space which maps a point $x = (x_1,0) \in \C$ to point $x_{map} = (x_1,\rdis(x,r))$ (dashed curves in \cref{fig:line_example2}).
  Observe that if a ray $r$ intersects $\C$ at point $(i,0)$, there is a point of discontinuity, and the distance function is split into two partially defined functions, one with domain up to $i$ and one with domain starting at $i$.
  The diagram $\rvd_\C(\cR)$ is the lower envelope of all these distance functions projected down to $\C$.
  The lower envelope of $n$ partially defined functions, where each pair of functions intersects at most $s$ times, has $O(\lambda_{s+2}(n))$ complexity~\cite{hart1986} and it can be constructed in $O(\lambda_{s+1}(n)\log n)$ time~\cite{hershberger1989}, where $\lambda_{s}(n)$ is the length of the longest $(n,s)$ \emph{Davenport-Schinzel sequence}.

  Observe that the number of intersections of two distance functions is the same as the number of intersection of their bisecting circle with $\C$.
  In our case, a pair of functions intersects at most twice, as $\C$ may intersect twice the bisecting circle of the two respective rays, 
  so $s=2$.
  Further, we have at most $2n$ partially defined functions.
  Thus, $\rvd_\C(\cR)$ 
  has complexity $O(n 2^{\alpha(n)})$ and it can be constructed in $O(n\alpha (n) \log n)$ time, where $\alpha(n)$ is the inverse Ackermann function.
\end{proof}

Considering the illumination of a line $\C$, given $\cR$, 
the Brocard angle $\minangle$, is realized~at a vertex of $\rvd_\C(\cR)$, or at a point of $\C$ at infinity; see, e.g., in \cref{fig:line_example2}, point $(-\infty,0)$ first illuminated by ray $b$.
So, a simple traversal of $\rvd_\C(\cR)$ reveals $\minangle$ in linear additional time.

\subsection{Brocard illumination of a closed curve}

The aforementioned approach can be generalized to arbitrary simple curves, both bounded and unbounded. We first consider $\C$ to be a closed convex curve, aiming to illuminate the interior of $\C$, i.e., the apices of the rays lie inside $\C$.

\begin{theorem}
    \label{thm:rvdCurveConvexInside} 
    Let $\C$ be a closed convex curve, and let the apices of the rays in $\cR$ lie in the interior of $\C$. 
    Then, $\rvd_\C(\cR)$ has complexity $O(\lambda_{s+2}(n))$ and can be constructed in $O(\lambda_{s+1}(n)\log n)$ time,
    where $s$ is the maximum number of times $\C$ is intersected by a (bisecting) circle.
\end{theorem}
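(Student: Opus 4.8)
The plan is to mirror, over a closed convex curve, the lower-envelope argument used for a line in \cref{thm:rvdCurveLine}. First I would fix a parametrization $\gamma$ of $\C$ (say by arc length), which turns $\C$ into a one-dimensional \emph{cyclic} domain. For each ray $r \in \cR$ I would regard the restricted distance as a function $f_r$ on this domain, sending a parameter $t$ to $\rdis(\gamma(t), r)$. By definition, $\rvd_\C(\cR)$ is precisely the projection to $\C$ of the lower envelope of the graphs of $f_{r_1},\dots,f_{r_n}$, so it suffices to bound the complexity of this lower envelope and invoke the Davenport--Schinzel machinery, exactly as in the line case.

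The next step is to understand the discontinuities of each $f_r$. Since $\apex(r)$ lies in the interior of the bounded convex region enclosed by $\C$, the ray $r$ leaves this region through $\C$ exactly once; at the corresponding parameter the value of $\rdis(\cdot,r)$ jumps from (almost) $2\pi$ down to $0$. Thus each $f_r$ has exactly one jump discontinuity. Cutting the cyclic domain at a single parameter avoided by all these jumps then splits the $n$ functions into at most $2n$ partially defined, continuous functions over an interval, just as in the proof for a line.

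To apply the known bounds I next control the number of intersections between two functions $f_r$ and $f_s$. By the characterization established in the proof of \cref{lem:NonParallelBisector}, every point equidistant to $r$ and $s$ lies on the bisecting circle $\cb(r,s)$; hence $f_r(t)=f_s(t)$ exactly when $\gamma(t)\in\cb(r,s)$, and the number of such parameters is precisely the number of intersections of $\C$ with $\cb(r,s)$, which is at most $s$ by the definition of $s$. Consequently each pair of the $2n$ partial functions crosses at most $s$ times, and the results of~\cite{hart1986,hershberger1989} yield a lower envelope of complexity $O(\lambda_{s+2}(2n)) = O(\lambda_{s+2}(n))$, constructible in $O(\lambda_{s+1}(2n)\log n) = O(\lambda_{s+1}(n)\log n)$ time, as claimed.

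The one point requiring genuine care is the passage from the cyclic domain to a linear one: the cited lower-envelope results are stated for functions over an interval, so I must argue that cutting $\C$ at a point neither creates nor destroys envelope features beyond an additive constant. Choosing the cut at a parameter distinct from all jump locations, and observing that the constant-factor increase from $n$ to $2n$ pieces is absorbed into $\lambda_{s+2}$ and $\lambda_{s+1}$, makes this harmless. I would also note, to keep the intersection count clean, that the equidistance crossings (which lie on $\cb(r,s)$) are distinct from the jump endpoints (which occur at the unique parameter where $r$ or $s$ meets $\C$), so the two kinds of features are never conflated.
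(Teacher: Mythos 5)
Your proposal is correct and follows essentially the same route as the paper: parametrize $\C$, view each restricted distance as a partially defined function on the parameter domain, bound pairwise crossings by the intersections of $\C$ with the bisecting circle, and invoke the lower-envelope results of~\cite{hart1986,hershberger1989}. The paper's own proof is a one-line appeal to the argument of \cref{thm:rvdCurveLine}; you merely spell out the details it leaves implicit (the single jump per function, the cut of the cyclic domain, and the $2n$-piece bookkeeping), all of which check out.
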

\begin{proof}
  Assume that the curve $\C$ is parametrized in the following form, $\C: [0,1] \to \R^2$ with $\C(0) = \C(1)$.
  Analogously to the approach of \cref{thm:rvdCurveLine}, each site $r$ induces a distance function on the curve $\C$, 
  which maps a value $t \in [0,1]$ to the point $(t,\rdis(\C(t),r))$, and the result immediately follows the results of the envelopes of distance functions in 2-space~\cite{hart1986,hershberger1989}.
\end{proof}

\begin{figure}[b]
    \centering
    \begin{subfigure}[t]{0.325\textwidth}
        \centering
        \includegraphics[width=0.9\textwidth,page=3]{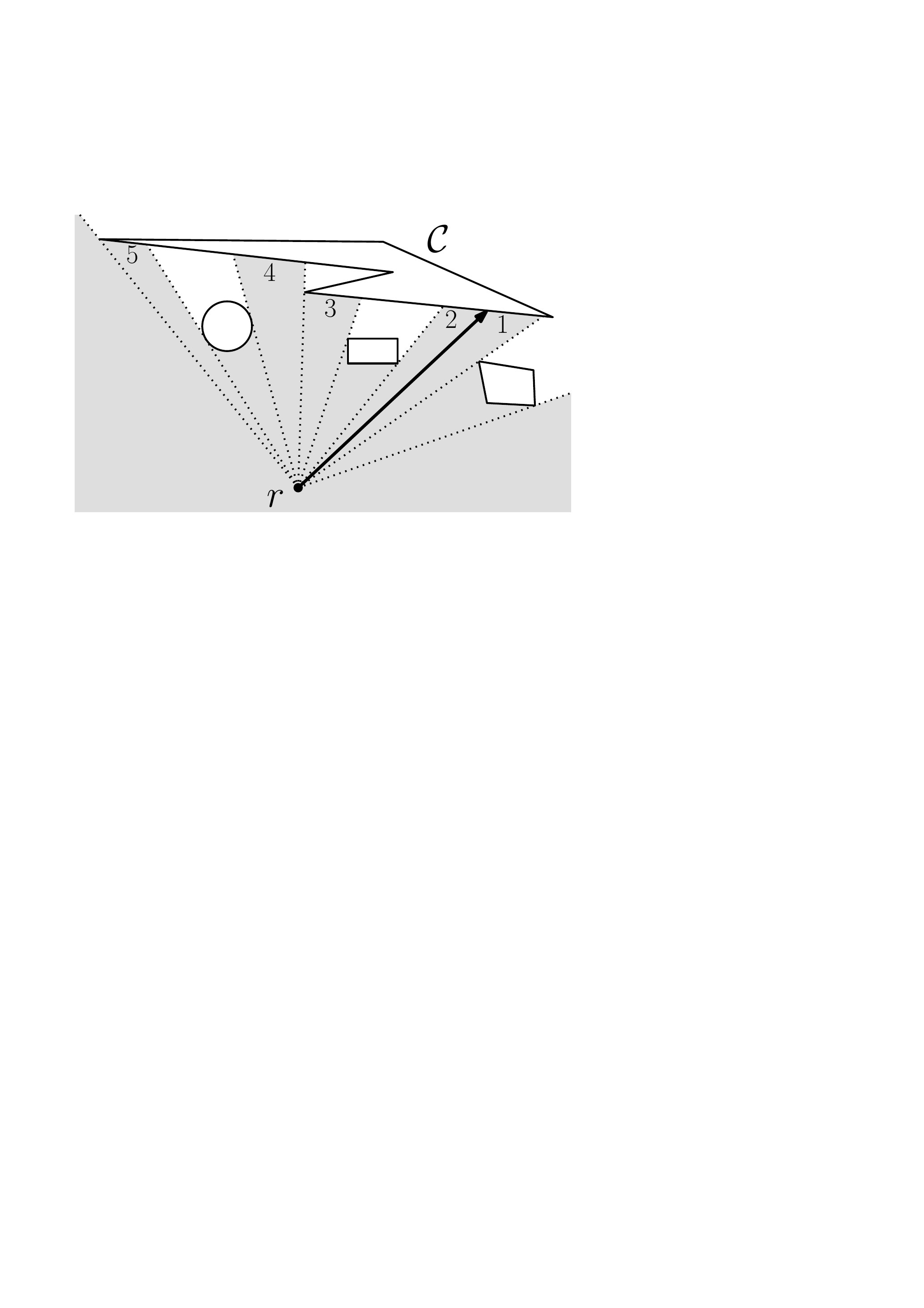}
        \caption{
            Two rays inside $\C$.
        }
        \label{fig:rvdCurveConvexInside}
    \end{subfigure}
    \hfill
    \begin{subfigure}[t]{0.325\textwidth}
        \centering
        \includegraphics[width=0.9\textwidth,page=7]{fig/curve_parameters.pdf}
        \caption{
            The visible portion of a ray. 
        }
        \label{fig:rvdCurveConvexVisible}
    \end{subfigure}
    \begin{subfigure}[t]{0.325\textwidth}
        \centering
        \includegraphics[width=0.9\textwidth,page=6]{fig/curve_parameters.pdf}
        \caption{
            Two rays outside $\C$. 
        }
        \label{fig:rvdCurveConvexOutside}
    \end{subfigure}
    \caption{
        $\C$ is a convex polygon.
        The dominance regions of the rays along $\C$ are highlighted.
        }      
\end{figure}

As a corollary, 
if $\C$ is a circle, then $\rvd_\C(\cR)$ has complexity $O(n 2^{\alpha(n)})$ and it can be constructed in $O(n\alpha (n)\log n)$  time, since $\C$ intersects a bisecting circle at most twice; hence, $s=2$.
On the contrary, if $\C$ is an $m$-sided convex polygon, then $\rvd_\C(\cR)$ has complexity $O(\lambda_{2m+2}(n))$ and it can be constructed in $O(\lambda_{2m+1}(n)\log n)$ time, since
$\C$ intersects a bisecting circle at most $2m$ times, hence $s=2m$; see, e.g., the polygon in \cref{fig:rvdCurveConvexInside}.

If we consider the case of illuminating the exterior of a closed convex curve $\C$, i.e., the rays lie outside $\C$, we obtain better results.
In this case, only a part of the curve $\C$ is \emph{visible} by each input ray, where a point $x\in \C$ is visible by a ray $r$ if the open line segment $\ov{\apex(r)x}$ does not intersect $\C$;
see, e.g., in \cref{fig:rvdCurveConvexVisible}, the visible portion of a ray $r$ (point $c$ is not visible by $r$).
If a point $x$ is not visible by a ray $r \in \cR$, we set $\rdis(x,r) = +\infty$.

\begin{figure}[t]
    \centering
    \begin{minipage}[t]{0.4\textwidth}
        \centering
        \includegraphics[width=0.78\textwidth,page=9]{fig/curve_parameters.pdf}
        \caption{    
        Illustration for the proof of \cref{thm:rvdCurveConvexOutside}.
        Point $x_3$ is not visbile from ray $r$.
        Triangle $T$ is a subset of $\C$.
    }
  \label{fig:rvdCurveProofConvexOutside}
    \end{minipage}
    \hfill
    \begin{minipage}[t]{0.58\textwidth}
        \centering
        \includegraphics[width=0.9\textwidth,page=8]{fig/curve_parameters.pdf}
        \caption{    
        Illumination of $\C$ and the parameter $k$.
        The distance function of a ray $r$ is split into 5 partial functions (k=5) by different types of breakpoints. 
    }
    \label{fig:rvdCurveParameters}
    \end{minipage}
\end{figure}

\begin{theorem}
    \label{thm:rvdCurveConvexOutside} 
    Let $\C$ be a closed convex curve, and let the apices of the rays in $\cR$ lie outside $\C$.  Then, $\rvd_\C(\cR)$ with visibility restrictions has complexity $O(n 2^{\alpha(n)})$ and it can be constructed in $O(n\alpha (n)\log n)$ time.
\end{theorem}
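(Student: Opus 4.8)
The plan is to reuse the envelope framework already employed in \cref{thm:rvdCurveLine,thm:rvdCurveConvexInside}: parametrize $\C$ as $\C\colon[0,1]\to\R^2$ with $\C(0)=\C(1)$, and associate to each ray $r\in\cR$ the distance function $t\mapsto\rdis(\C(t),r)$. The only new ingredient is the visibility restriction, which I incorporate by setting this function to $+\infty$ on the parameter values where $\C(t)$ is not visible from $\apex(r)$, so that $\rvd_\C(\cR)$ is again the lower envelope of $n$ partially defined functions, projected onto $\C$. It then suffices to (i) bound by a constant the number of continuous pieces each ray contributes, and (ii) bound by $2$ the number of times two such functions can cross where both are finite. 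With $s=2$, the bounds $O(\lambda_{s+2}(n))=O(n2^{\alpha(n)})$ for the complexity and $O(\lambda_{s+1}(n)\log n)=O(n\alpha(n)\log n)$ for the construction follow directly from \cite{hart1986,hershberger1989}.

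Step (i) I would dispatch first. Since $\C$ is convex and $\apex(r)$ lies outside $\C$, the portion of $\C$ visible from $\apex(r)$ is a single connected arc $V_r$, bounded by the two tangent points from $\apex(r)$. Along $V_r$ the direction of $\overrightarrow{\apex(r)\C(t)}$ turns monotonically through an angular interval of length strictly less than $\pi$, so $\rdis(\C(t),r)$ is monotone apart from at most one jump where the angle wraps from $2\pi$ back to $0$. Together with the at most one additional split caused by $V_r$ straddling the parameter seam $0\equiv1$, each ray contributes only $O(1)$ continuous pieces, hence $O(n)$ pieces in total, which does not affect the asymptotic Davenport--Schinzel bound.

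The heart of the argument, and the step I expect to be the main obstacle, is (ii): showing that the bisecting circle $\cb(r,s)$ meets the doubly-visible arc at most twice. I would in fact prove the stronger statement that at most two points of $\cb(r,s)\cap\C$ are visible from $\apex(r)$ alone. Suppose three such points $x_1,x_2,x_3$ existed; see \cref{fig:rvdCurveProofConvexOutside}. Recall from \cref{sec:preliminaries} that $\apex(r)$ always lies on $\cb(r,s)$, so $\apex(r),x_1,x_2,x_3$ all lie on the circle $\cb(r,s)$, and $\apex(r)$ falls in one of the three arcs cut off by $x_1,x_2,x_3$; relabel so that it lies on the arc between $x_1$ and $x_2$ avoiding $x_3$. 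Then the chord $\overline{x_1x_2}$ separates $\apex(r)$ from $x_3$ on $\cb(r,s)$, so the segment $\overline{\apex(r)x_3}$ crosses $\overline{x_1x_2}$ at some point $y$. Since $x_1,x_2\in\C$ and $\C$ is convex, $y$ lies in the closed convex region bounded by $\C$; as $\apex(r)$ is exterior to this region and $y$ lies strictly between $\apex(r)$ and $x_3$, the open segment $\overline{\apex(r)x_3}$ meets $\C$, so $x_3$ is not visible from $\apex(r)$, a contradiction. Hence at most two points of $\cb(r,s)\cap\C$ survive the visibility filter, which bounds the crossings of the two finite functions by two and establishes $s=2$.

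Finally I would assemble the pieces: the lower envelope of the $O(n)$ partial functions, each pair meeting at most twice, has complexity $O(\lambda_4(n))=O(n2^{\alpha(n)})$ and is computable in $O(\lambda_3(n)\log n)=O(n\alpha(n)\log n)$ time by \cite{hershberger1989}; projecting it onto $\C$ yields $\rvd_\C(\cR)$ within the same bounds. The only delicate points are degeneracies, namely three of the $x_i$ being concyclic-but-collinear (impossible) or the crossing point $y$ coinciding with $x_1$ or $x_2$, which are absorbed by a standard general-position assumption since the claim is a worst-case bound.
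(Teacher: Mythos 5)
Your proposal is correct and follows essentially the same route as the paper's proof: reduce to the lower-envelope framework of \cref{thm:rvdCurveLine,thm:rvdCurveConvexInside} and establish $s=2$ by showing that if three points of $\cb(r,s)\cap\C$ were visible from $\apex(r)$, the one on the arc opposite $\apex(r)$ would be occluded by convexity. Your chord-crossing argument merely makes explicit the step the paper dismisses as ``obvious,'' and your bookkeeping of the $O(1)$ pieces per ray is a welcome but minor addition.
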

\begin{proof}
    We apply the same approach used as in \cref{thm:rvdCurveLine,thm:rvdCurveConvexInside}.
    To get the claimed results we show the part of the curve visible by any two rays is intersected at most twice ($s=2$) by a bisecting circle. See \cref{fig:rvdCurveConvexOutside}.
    
    Given a ray $r \in \cR$ consider the portion of $\C$ visible by $r$.
    Suppose, for the sake of contradiction, that a bisecting circle defined by $r$ intersects the visible part of $\C$ in at least 3 points $x_1,x_2,x_3$; 
    refer also to \cref{fig:rvdCurveProofConvexOutside}.
    By definition, $\apex(r)$ lies on the bisecting circle, so $\apex(r)$ lies on one of the circular arcs $\ov{x_1x_2}$, $\ov{x_2x_3}$, or $\ov{x_3x_1}$;
    without loss of generality, suppose $\apex(r) \in \ov{x_1x_2}$.
    By the assumption $\apex(r)$ lies outside $\C$, so obviously points $x_1,x_2 \in C$ obstruct the visibility of $r$, and $x_3$ is not visible by $r$.
    But $x_3$ was the intersection point of the bisecting circle with the visible portion of $\C$, a contradiction.

    The part of $\C$ visible by two rays is a subset of the part visible by each of the rays independently, so it is intersected by a bisecting circle at most twice.
    Hence, $s=2$ and as in \cref{thm:rvdCurveLine,thm:rvdCurveConvexInside} the combinatorial and algorithmic results follow.
\end{proof}

\subparagraph{Extensions to other classes of curves.}
Our approach can be extended to illuminate classes of curves which induce visibility restrictions to the rays/sites.
As an example, refer to \cref{fig:rvdCurveParameters}, and consider the illumination of a non-convex polygon $\C$ in the presence of other curves (\textit{obstacles}).
Given a ray $r$, the portion of $\C$ visible by $r$ can be split into many maximal connected components, due to the visibility constraints;
a split might be induced by the curve $\C$ itself (breakpoint (3-4) in \cref{fig:rvdCurveParameters}); it can be induced by other curves (breakpoint (2-3) in \cref{fig:rvdCurveParameters}); or it can be induced by the ray $r$ itself (breakpoint (1-2) in \cref{fig:rvdCurveParameters}).

If the part of the curve $\C$ visible by a ray $r_i$ is split in $k_i$ connected components, this implies that the corresponding distance function of $r_i$ is split into $k_i$ partially defined functions.
Let $K = \sum_{i \in n} k_i$ be the total number of partially defined functions.
Using results on lower envelopes of distance functions in 2-space \cite{hart1986,hershberger1989}, we can derive that $\rvd_\C(\cR)$ has complexity $O(\lambda_{s+2}(K))$ and it can be constructed in $O(\lambda_{s+1}(K)\log K)$ time.

\section{Concluding remarks}
\label{sec:conclusion}

In this work, we studied a new Voronoi structure, the rotating rays Voronoi diagram.
Our motivation for studying this diagram originates from 
the Brocard illumination problem in polygons.
We exhibited a general method for solving the Brocard illumination problem in different domains:
given a domain $D$ and a set of rays $\cR$, we can find the minimum angle $\minangle$ needed to illuminate $D$ using $\minangle$-floodlights aligned with $\cR$, by constructing $\rvd(\cR)$ restricted to $D$.

There are many interesting questions to investigate, both related to the study of $\rvd(\cR)$ as a Voronoi structure, but also related to floodlight illumination problems.
Regarding the $\rvd(\cR)$ in $\mathbb{R}^2$, we would like to settle whether the worst case combinatorial complexity is $\Theta(n^2)$ and whether it can be constructed in $o(n^{2 + \epsilon})$ time.
Regarding the Brocard illumination of polygons, we would like to see how our approach can extend to other classes of (non convex) polygons.
We expect to have difficulties due to the visibility constraints, but we believe that the main concepts of our algorithms can be adapted to work in such settings.

\typeout{}
\bibliography{references.bib}

\begin{thebibliography}{10}

\bibitem{alegria2021}
Carlos Alegr{\'\i}a, Ioannis Mantas, Evanthia Papadopoulou, Marko Savi{\'c},
  Hendrik Schrezenmaier, Carlos Seara, and Martin Suderland.
\newblock The {V}oronoi diagram of rotating rays with applications to
  floodlight illumination.
\newblock In {\em 29th Annual European Symposium on Algorithms (ESA 2021)}.
  Schloss Dagstuhl-Leibniz-Zentrum f{\"u}r Informatik, 2021.

\bibitem{alegria2017}
Carlos Alegr{\'i}a-Galicia, David Orden, Carlos Seara, and Jorge Urrutia.
\newblock Illuminating polygons by edge-aligned floodlights of uniform angle
  ({B}rocard illumination).
\newblock In {\em Proceedings of the 33rd European Workshop on Computational
  Geometry (EuroCG 2017)}, pages 281--284, 2017.

\bibitem{aurenhammer2006}
Franz Aurenhammer, Robert L~Scot Drysdale, and Hannes Krasser.
\newblock Farthest line segment {V}oronoi diagrams.
\newblock {\em Information Processing Letters}, 100(6):220--225, 2006.
\newblock \href {https://doi.org/10.1016/j.ipl.2006.07.008}
  {\path{doi:10.1016/j.ipl.2006.07.008}}.

\bibitem{aurenhammerBook}
Franz Aurenhammer, Rolf Klein, and Der-Tsai Lee.
\newblock {\em {V}oronoi Diagrams and {D}elaunay Triangulations}.
\newblock World Scientific, 2013.

\bibitem{berman2007}
Piotr Berman, Jieun Jeong, Shiva~P. Kasiviswanathan, and Bhuvan Urgaonkar.
\newblock Packing to angles and sectors.
\newblock In {\em Proceedings of the 19th ACM Symposium on Parallelism in
  Algorithms and Architectures (SPAA'07)}, pages 171--180, 2007.
\newblock \href {https://doi.org/10.1145/1248377.1248405}
  {\path{doi:10.1145/1248377.1248405}}.

\bibitem{bernhart1959}
Arthur Bernhart.
\newblock Polygons of pursuit.
\newblock {\em Scripta Mathematica}, 24(1):23--50, 1959.

\bibitem{besenyei2015}
{\'A}d{\'a}m Besenyei.
\newblock The {B}rocard angle and a geometrical gem from {D}mitriev and
  {D}ynkin.
\newblock {\em The American Mathematical Monthly}, 122(5):495--499, 2015.
\newblock \href {https://doi.org/10.4169/amer.math.monthly.122.5.495}
  {\path{doi:10.4169/amer.math.monthly.122.5.495}}.

\bibitem{bose1997}
Prosenjit Bose, Leonidas~J. Guibas, Anna Lubiw, Mark Overmars, Diane Souvaine,
  and Jorge Urrutia.
\newblock The floodlight problem.
\newblock {\em International Journal of Computational Geometry \&
  Applications}, 7(1-2):153--163, 1997.
\newblock \href {https://doi.org/10.1142/S0218195997000090}
  {\path{doi:10.1142/S0218195997000090}}.

\bibitem{cary2010}
Matthew Cary, Atri Rudra, Ashish Sabharwal, and Erik Vee.
\newblock Floodlight illumination of infinite wedges.
\newblock {\em Computational Geometry}, 43(1):23--34, 2010.
\newblock \href {https://doi.org/10.1016/j.comgeo.2007.01.004}
  {\path{doi:10.1016/j.comgeo.2007.01.004}}.

\bibitem{casey1888}
John Casey.
\newblock {\em A sequel to the first six books of the elements of Euclid}.
\newblock Dublin University Press, 1888.

\bibitem{contreras1998}
Felipe Contreras, Jurek Czyzowicz, Nicolas Fraiji, and Jorge Urrutia.
\newblock Illuminating triangles and quadrilaterals with vertex floodlights.
\newblock In {\em Proceedings of the 10th Canadian Conference on Computational
  Geometry (CCCG'98)}, 1998.

\bibitem{contreras1998stage}
Felipe Contreras, Jurek Czyzowicz, Eduardo Rivera-Campo, and Jorge Urrutia.
\newblock Optimal floodlight illumination of stages.
\newblock In {\em Proceedings of the 14th annual Symposium on Computational
  Geometry (SoCG 1998)}, pages 409--410, 1998.

\bibitem{czyzowicz2015}
Jurek Czyzowicz, Stefan Dobrev, Benson Joeris, Evangelos Kranakis, Danny
  Krizanc, Jan Ma{\v{n}}uch, Oscar Morales-Ponce, Jaroslav Opatrny, Ladislav
  Stacho, and Jorge Urrutia.
\newblock Monitoring the plane with rotating radars.
\newblock {\em Graphs and Combinatorics}, 31(2):393--405, 2015.
\newblock \href {https://doi.org/10.1007/s00373-015-1543-4}
  {\path{doi:10.1007/s00373-015-1543-4}}.

\bibitem{deBerg2017}
Mark de~Berg, Joachim Gudmundsson, Herman Haverkort, and Michael Horton.
\newblock {V}oronoi diagrams with rotational distance costs.
\newblock In {\em Abstracts of the Computational Geometry Week: Young
  Researchers Forum (CG:YRF)}, 2017.

\bibitem{dietel2008}
Jana Dietel, Hans-Dietrich Hecker, and Andreas Spillner.
\newblock A note on optimal floodlight illumination of stages.
\newblock {\em Information Processing Letters}, 105(4):121--123, 2008.
\newblock \href {https://doi.org/10.1016/j.ipl.2007.08.009}
  {\path{doi:10.1016/j.ipl.2007.08.009}}.

\bibitem{dmitriev1946}
Nikolai~A Dmitriev and E.~Dynkin.
\newblock On characteristic roots of stochastic matrices.
\newblock {\em {I}zvestiya Rossiiskoi Akademii Nauk. Seriya Matematicheskaya},
  10(2):167--184, 1946.

\bibitem{estivill1995}
Vladimir Estivill-Castro, Joseph O'Rourke, Jorge Urrutia, and Dianna Xu.
\newblock Illumination of polygons with vertex lights.
\newblock {\em Information Processing Letters}, 56(1):9--13, 1995.
\newblock \href {https://doi.org/10.1016/0020-0190(95)00129-Z}
  {\path{doi:10.1016/0020-0190(95)00129-Z}}.

\bibitem{hart1986}
Sergiu Hart and Micha Sharir.
\newblock Nonlinearity of {D}avenport--{S}chinzel sequences and of generalized
  path compression schemes.
\newblock {\em Combinatorica}, 6(2):151--177, 1986.
\newblock \href {https://doi.org/10.1007/BF02579170}
  {\path{doi:10.1007/BF02579170}}.

\bibitem{haverkort2021}
Herman Haverkort and Rolf Klein.
\newblock Hyperbolae are the locus of constant angle difference.
\newblock {\em arXiv preprint arXiv:2112.00454}, 2021.

\bibitem{hershberger1989}
John Hershberger.
\newblock Finding the upper envelope of $n$ line segments in ${O}(n\log n)$
  time.
\newblock {\em Information Processing Letters}, 33(4):169--174, 1989.
\newblock \href {https://doi.org/10.1016/0020-0190(89)90136-1}
  {\path{doi:10.1016/0020-0190(89)90136-1}}.

\bibitem{ismailescu2008}
Dan~P Ismailescu.
\newblock Illuminating a convex polygon with vertex lights.
\newblock {\em Periodica Mathematica Hungarica}, 57(2):177--184, 2008.
\newblock \href {https://doi.org/10.1007/s10998-008-8177-3}
  {\path{doi:10.1007/s10998-008-8177-3}}.

\bibitem{ito1998}
Hiro Ito, Hideyuki Uehara, and Mitsuo Yokoyama.
\newblock {NP}-completeness of stage illumination problems.
\newblock In {\em Proceedings of the 2nd Japanese Conference on Discrete and
  Computational Geometry (JCDCG'98:)}, pages 158--165. Springer, 1998.

\bibitem{klein1989}
Rolf Klein.
\newblock {\em Concrete and abstract {V}oronoi diagrams}, volume 400.
\newblock Springer Science \& Business Media, 1989.

\bibitem{klein2009}
Rolf Klein, Elmar Langetepe, and Zahra Nilforoushan.
\newblock Abstract {V}oronoi diagrams revisited.
\newblock {\em Computational Geometry: Theory and Applications},
  42(9):885--902, 2009.
\newblock \href {https://doi.org/10.1016/j.comgeo.2009.03.002}
  {\path{doi:10.1016/j.comgeo.2009.03.002}}.

\bibitem{klein1994}
Rolf Klein and Andrzej Lingas.
\newblock Hamiltonian abstract {V}oronoi diagrams in linear time.
\newblock In {\em Proceedings of the 5th International Symposium on Algorithms
  and Computation (ISAAC 1994)}, pages 11--19. Springer, 1994.
\newblock \href {https://doi.org/10.1007/3-540-58325-4_161}
  {\path{doi:10.1007/3-540-58325-4_161}}.

\bibitem{kranakis2011}
Evangelos Kranakis, Danny Krizanc, and Oscar Morales.
\newblock Maintaining connectivity in sensor networks using directional
  antennae.
\newblock In {\em Theoretical Aspects of Distributed Computing in Sensor
  Networks}, pages 59--84. Springer, 2011.
\newblock \href {https://doi.org/10.1007/978-3-642-14849-1_3}
  {\path{doi:10.1007/978-3-642-14849-1_3}}.

\bibitem{neishaboori2014}
Azin Neishaboori, Ahmed Saeed, Khaled~A. Harras, and Amr Mohamed.
\newblock On target coverage in mobile visual sensor networks.
\newblock In {\em Proceedings of the 12th ACM International Symposium on
  Mobility Management and Wireless Access (MobiWac'15)}, pages 39--46, 2014.
\newblock \href {https://doi.org/10.1145/2642668.2642671}
  {\path{doi:10.1145/2642668.2642671}}.

\bibitem{nilsson_et_al_2021}
Bengt~J. Nilsson, David Orden, Leonidas Palios, Carlos Seara, and Pawe{\l}
  \.{Z}yli\'{n}ski.
\newblock Illuminating the x-axis by $\alpha$-{Floodlights}.
\newblock In Hee-Kap Ahn and Kunihiko Sadakane, editors, {\em 32nd
  International Symposium on Algorithms and Computation (ISAAC 2021)}, volume
  212 of {\em Leibniz International Proceedings in Informatics (LIPIcs)}, pages
  11:1--11:12. Schloss Dagstuhl -- Leibniz-Zentrum f{\"u}r Informatik, 2021.
\newblock \href {https://doi.org/10.4230/LIPIcs.ISAAC.2021.11}
  {\path{doi:10.4230/LIPIcs.ISAAC.2021.11}}.

\bibitem{okabeBook}
Atsuyuki Okabe, Barry Boots, Kokichi Sugihara, and Sung~Nok Chiu.
\newblock {\em Spatial tessellations: concepts and applications of {V}oronoi
  diagrams}, volume 501.
\newblock John Wiley \& Sons, 2009.

\bibitem{orourke1987}
Joseph O'Rourke.
\newblock {\em Art gallery theorems and algorithms}, volume~57.
\newblock Oxford University Press Oxford, 1987.

\bibitem{orourke2017}
Joseph O'Rourke.
\newblock Visibility.
\newblock In {\em Handbook of Discrete and Computational Geometry}, pages
  875--896. CRC Press, 2017.

\bibitem{orourke1995}
Joseph O'Rourke, Thomas Shermer, and Ileana Streinu.
\newblock Illuminating convex polygons with vertex floodlights.
\newblock In {\em Proceedings of the 7th Canadian Conference on Computational
  Geometry (CCCG'95)}, pages 151--156, 1995.

\bibitem{sharir1994}
Micha Sharir.
\newblock Almost tight upper bounds for lower envelopes in higher dimensions.
\newblock {\em Discrete \& Computational Geometry}, 12(3):327--345, 1994.
\newblock \href {https://doi.org/10.1007/BF02574384}
  {\path{doi:10.1007/BF02574384}}.

\bibitem{skyum1991}
Sven Skyum.
\newblock A simple algorithm for computing the smallest enclosing circle.
\newblock {\em Information Processing Letters}, 37(3):121--125, 1991.
\newblock \href {https://doi.org/10.1016/0020-0190(91)90030-L}
  {\path{doi:10.1016/0020-0190(91)90030-L}}.

\bibitem{steiger1998}
William Steiger and Ileana Streinu.
\newblock Illumination by floodlights.
\newblock {\em Computational Geometry}, 10(1):57--70, 1998.
\newblock \href {https://doi.org/10.1016/S0925-7721(97)00027-8}
  {\path{doi:10.1016/S0925-7721(97)00027-8}}.

\bibitem{taki1996}
Tsuyoshi Taki, Jun-ichi Hasegawa, and Teruo Fukumura.
\newblock Development of motion analysis system for quantitative evaluation of
  teamwork in soccer games.
\newblock In {\em Proceedings of the 3rd IEEE International Conference on Image
  Processing}, volume~3, pages 815--818. IEEE, 1996.
\newblock \href {https://doi.org/10.1109/ICIP.1996.560865}
  {\path{doi:10.1109/ICIP.1996.560865}}.

\bibitem{tao_2015}
Dan Tao and Tin-Yu Wu.
\newblock A survey on barrier coverage problem in directional sensor networks.
\newblock {\em IEEE Sensors Journal}, 15(2):876--885, 2015.
\newblock \href {https://doi.org/10.1109/JSEN.2014.2310180}
  {\path{doi:10.1109/JSEN.2014.2310180}}.

\bibitem{toth2002}
Csaba~D. T{\'o}th.
\newblock Art galleries with guards of uniform range of vision.
\newblock {\em Computational Geometry}, 21(3):185--192, 2002.
\newblock \href {https://doi.org/10.1016/S0925-7721(01)00024-4}
  {\path{doi:10.1016/S0925-7721(01)00024-4}}.

\bibitem{toth2003segments}
Csaba~D T{\'o}th.
\newblock Illuminating disjoint line segments in the plane.
\newblock {\em Discrete \& Computational Geometry}, 30(3):489--505, 2003.

\bibitem{toth2003polygons}
Csaba~D. T\'{o}th.
\newblock Illumination of polygons by 45-floodlights.
\newblock {\em Discrete Mathematics}, 265(1):251--260, 2003.
\newblock \href {https://doi.org/10.1016/S0012-365X(02)00583-6}
  {\path{doi:10.1016/S0012-365X(02)00583-6}}.

\bibitem{urrutia2000}
Jorge Urrutia.
\newblock Art gallery and illumination problems.
\newblock In {\em Handbook of Computational Geometry}, pages 973--1027.
  Elsevier, 2000.

\end{thebibliography}
\end{document}